\newcommand{\noon}{\texttt{n} }
\newcommand{\clean}{\texttt{c} }
\newcommand{\car}{\texttt{car}}
\newcommand{\foot}{\texttt{f} }
\newcommand{\post}{\texttt{p} }
\newcommand{\Oy}{\otimes_y}
\newcommand{\Ody}{\ominus_y}
\newcommand{\no}{\texttt{n}}
\newcommand{\fo}{\texttt{f}}
\newcommand{\cl}{\texttt{c}}
\newcommand{\po}{\texttt{p}}
\tikzset{
modal/.style={>=stealth',shorten >=1pt,shorten <=1pt,auto,node distance=1.5cm,
semithick},
world/.style={circle,draw,minimum size=0.5cm,fill=gray!15},
point/.style={circle,draw,inner sep=0.5mm,fill=black},
reflexive above/.style={->,loop,looseness=7,in=120,out=60},
reflexive below/.style={->,loop,looseness=7,in=240,out=300},
reflexive left/.style={->,loop,looseness=7,in=150,out=210},
reflexive right/.style={->,loop,looseness=7,in=30,out=330}
}
\tikzstyle{level 1}=[level distance=1.5cm, sibling distance=1.5cm]
\tikzstyle{level 2}=[level distance=1.5cm, sibling distance=2cm]
\tikzstyle{bag} = [text width=8em, text centered]
\def\phi{\varphi}
\newcommand{\sect}{Section}
\newcommand{\fig}{Figure}
\newcommand{\dfn}{Definition}
\newcommand{\lem}{Lemma}
\newcommand{\thrm}{Theorem}
\newcommand{\rmk}{Remark}
\newcommand{\ifandonlyif}{\textit{iff} }
\newcommand{\iffi}{\textit{iff} }
\newcommand{\alg}{Algorithm}
\newtheorem{theorem}{Theorem}
\newtheorem{lemma}[theorem]{Lemma}
\newtheorem{corollary}[theorem]{Corollary}
\newtheorem{remark}[theorem]{Remark}
\newtheorem{definition}[theorem]{Definition}
\newtheorem{example}[theorem]{Example}
\definecolor{green2}{RGB}{154, 205, 50}
\definecolor{comk}{RGB}{154, 205, 50}
\definecolor{Gray1}{gray}{0.9}
\definecolor{Gray2}{gray}{0.7}
\newcommand {\Diam}{\Diamond}
\newcommand{\lang}{\mathcal{L}}
\newcommand{\Oi}{\otimes_{i}}
\newcommand{\lb}{\langle}
\newcommand{\rb}{\rangle}
\newcommand{\boxr}{(\Box)}
\newcommand{\diar}{(\Diamond)}
\newcommand{\stitdiam}{\lb i \rb}
\newcommand{\stitbox}{[i]}
\newcommand{\agdia}{\lb i \rb}
\newcommand{\agbox}{[i]}
\newcommand{\Ag}{Ag}
\newcommand{\Var}{Var}
\newcommand{\ODi}{\ominus_i}
\newcommand{\sti}{\mathsf{STIT}}
\newcommand{\opt}{I}
\newcommand{\negnnf}[1]{\neg{#1}}
\newcommand{\ioacond}{\textbf{(C2)}}
\newcommand{\baselogic}{\mathsf{DS}}
\newcommand{\dsnk}{\mathsf{DS}_{n}^{k}}
\newcommand{\R}{R}
\newcommand{\gtdsnk}{\mathsf{G3DS}_{n}^{k}}
\newcommand{\provenk}{\mathtt{Prove}_{n}^{k}}
\newcommand{\scid}{\mathbf{(C_{id})}}
\newcommand{\sccon}{\mathbf{(C_{\land})}}
\newcommand{\scdis}{\mathbf{(C_{\lor})}}
\newcommand{\scbox}{\mathbf{(C_{\Box})}}
\newcommand{\scdia}{\mathbf{(C_{\Diamond})}}
\newcommand{\scodi}{\mathbf{(C_{\ODi})}}
\newcommand{\scoi}{\mathbf{(C_{\Oi})}}
\newcommand{\scagbox}{\mathbf{(C_{\agbox})}}
\newcommand{\scagdia}{\mathbf{(C_{\agdia})}}
\newcommand{\scref}{\mathbf{(C_{ref})}}
\newcommand{\sceuc}{\mathbf{(C_{euc})}}
\newcommand{\scdii}{\mathbf{(C_{D2})}}
\newcommand{\scdiii}{\mathbf{(C_{D3})}}
\newcommand{\scapc}{\mathbf{(C_{APC})}}
\newcommand{\ripath}{\sim^{\rel}_{i}}
\newcommand{\thrd}{\mathcal{T}}
\newcommand{\ioat}{\mathsf{IOA}^{\thrd}\!\!}
\newcommand{\gt}{G}
\newcommand{\rel}{\mathcal{R}}
\newcommand{\qel}{\mathcal{Q}}
\newcommand{\strucset}{\mathsf{StrR}}
\newcommand{\mstamod}{M^{\Lambda}}
\newcommand{\wstamod}{W^{\Lambda}}
\newcommand{\rstamod}{R_{[i]}^{\Lambda}}
\newcommand{\istamod}{\opt_{\Oi}^{\Lambda}}
\newcommand{\vstamod}{V^{\Lambda}}
\newcommand{\der}{\gtdsnk \vdash}
\newcommand{\reli}{R_{[i]}}
\newcommand{\relio}{I_{\Oi}}
\newcommand{\sub}{(\mathsf{Sub})}
\newcommand{\cut}{(\mathsf{Cut})}
\newcommand{\wk}{(\mathsf{Wk})}
\newcommand{\settlc}{(\Box^{*})}
\newcommand{\Oilc}{(\Oi^{*})}
\newcommand{\stitlc}{([i]^{*})}
\newcommand{\stitdialc}{(\agdia^{*})}
\newcommand{\stitrefl}{(\mathsf{Ref}_{i})}
\newcommand{\stiteucl}{(\mathsf{Euc}_{i})}
\newcommand{\stittrans}{(\mathsf{Tra}_{i})}
\newcommand{\stitsym}{(\mathsf{Sym}_{i})}
\newcommand{\dtwo}{(\mathsf{D2}_{i})}
\newcommand{\ioa}{(\mathsf{IOA})}
\newcommand{\id}{(\mathsf{id})}
\newcommand{\choicer}{(\mathsf{APC}_{i}^{k})}
\newcommand{\sar}{\Rightarrow}
\newcommand{\Oir}{(\Oi)}
\newcommand{\ODir}{(\ominus_{i})}
\newcommand{\disr}{(\vee)}
\newcommand{\conr}{(\wedge)}
\newcommand{\settr}{(\Box)}
\newcommand{\settdiar}{(\Diamond)}
\newcommand{\dthree}{(\mathsf{D3}_{i})}
\newcommand{\stitr}{(\agbox)}
\newcommand{\stitdiar}{(\agdia)}
\newcommand{\ioaop}{\mathtt{IoaOp}}
\newcommand{\sufo}{\mathrm{sufo}}
\newcommand{\lab}{Lab}
\newcommand{\I}[1]{#1^{I}}
\newcommand{\ideal}{I_{\otimes_{i}}}
\begin{document}

\title{Proof Theory and Decision Procedures\\ for Deontic STIT Logics} 

\author{\name Tim S. Lyon \email timothy\_stephen.lyon@tu-dresden.de \\
       \addr Technische Universit\"at Dresden,\\ 
       N\"othnitzer Str. 46, 01069 Dresden, Germany
       \AND
             \name Kees van Berkel \email kees@logic.at \\
             \addr TU Wien, Institute for Logic and Computation \\
             Favoritenstra{\ss}e 9, 1040, Vienna, Austria}

\maketitle

\begin{abstract}
\noindent This paper provides a set of cut-free complete sequent-style calculi for deontic $\sti$ (`See To It That') logics used to formally reason about choice-making, obligations, and norms in a multi-agent setting. We leverage these calculi to write a proof-search algorithm deciding deontic, multi-agent $\sti$ logics with (un)limited choice and introduce a loop-checking mechanism to ensure the termination of the algorithm. Despite the acknowledged potential for deontic reasoning in the context of autonomous, multi-agent scenarios, this work is the first to provide a syntactic decision procedure for this class of logics. Our proof-search procedure is designed to provide verifiable witnesses/certificates of the (in)validity of formulae, which permits an analysis of the (non)theoremhood of formulae and act as explanations thereof. We show how the proof system and decision algorithm can be used to automate normative reasoning tasks such as \emph{duty checking} (viz. determining an agent's obligations relative to a given knowledge base), \emph{compliance checking} (viz. determining if a choice, considered by an agent as potential conduct, complies with the given knowledge base), and \emph{joint fulfillment checking} (viz. determining whether under a specified factual context an agent can jointly fulfill all their duties).
\end{abstract}

\section{Introduction}\label{Sect:Intro}

The logic of `Seeing To It That’, commonly referred to as $\sti$, is a formal framework developed by \citeauthor{BelPerXu01} \citeyear{BelPerXu01} for the analysis of agential choice in multi-agent settings. Next to its philosophical merits, the formalism has acquired a central position in the investigation of agency within the domain of artificial intelligence (AI). The framework has been adopted and applied to temporal \cite{CiuLor17} and epistemic \cite{Bro11} reasoning, 
 as well as to the investigation of legal concepts in AI \cite{LorSar15}. Most notably, $\sti$ has been employed for the analysis of \textit{normative choice}, viz.\ multi-agent choice on the basis of obligations, prohibitions, and permissions. Logics dealing with such concepts are commonly referred to as \textit{deontic}, i.e.\ relating to duty. In fact, the 
  potential of applying $\sti$ to the deontic setting was recognized from the outset \cite{Bar93,
  HorBel95}.
  More recently, the potential of deontic $\sti$ for reasoning in the context of autonomous vehicles has been identified and explored \cite{ArkBriBel05,SheAbb21}.

The work 
by \citeauthor{Hor01} \citeyear{Hor01} set the stage for the investigation of deontic $\sti$ as a study in its own right. Horty's deontic $\sti$ continues to receive considerable attention in the literature, and it is often recognized that the framework could serve as a suitable basis for developing automated reasoning tools within the context of agent-based, normative reasoning. Despite an increasing demand for automated normative reasoning tools for artificial agents~\cite{ArkBriBel05,BerLyo19a,SheAbb20}, such tools
have yet to be satisfactorily developed for modal logics of agency. 
Although \citeauthor{ArkBriBel05}\ \citeyear{ArkBriBel05} obtained 
 preliminary results on automated reasoning for deontic $\sti$, three essential components 
 were left unaddressed:  
correct termination of the implementation, constructive generation of explicit proofs, and explainability in the case of failed proof-search. In response, this article provides reasoning tools for 
deontic $\sti$ logics, which address all three components. 
 In particular, we focus on Horty's \citeyear{Hor01} deontic $\sti$ logic for \textit{dominance ought}, which deals with a utilitarian evaluation of agent-dependent obligations. The well-grounded philosophical motivation of dominance ought 
 made the framework the most extensively studied deontic $\sti$ logic with a 
wide variety of applications. 

Murakami \citeyear{Mur05} provided a sound a complete axiomatization  for the logic of dominance ought and 
\citeauthor{BerLyo19b} \citeyear{BerLyo19b} showed how utilitarian $\sti$ models---those initially provided for dominance ought by  Horty \citeyear{Hor01}---can be transformed into equivalent relational models. The first upshot of the latter result is that we can safely deal with deontic $\sti$ in the more modular relational setting, omitting reference to utilities. 
The second upshot is that the alternative semantics facilitates the development of an effective 
\emph{proof-theory} for deontic $\sti$, i.e.\ the formulation of a set of rules (called a \emph{calculus} or \emph{proof system}) capable of deriving all and only the theorems of a particular deontic $\sti$ logic, and which may be leveraged for applications or to study (meta-)logical properties. Preliminary works in the proof-theory of (deontic) $\sti$ logics, provided in \cite{Wan06,BerLyo19a,LyoBer19,Negri2020,BerLyo21,Lyo21thesis}, show the potential of proof theory in relation to automated reasoning for deontic $\sti$ logics, which is the primary concern of this article.

Our work is of a twofold nature, reflected by the following 
two aims.  
The first aim is our central aim and constitutes the main contribution of this article: 
\begin{quote}
(AIM I) Supply automated reasoning procedures for the deontic $\sti$ logic of dominance ought (and variations thereof) which produce certificates of the (in)validity of input formulae.
\end{quote}
As mentioned above, the automation of $\sti$ reasoning is motivated by the need for explicit normative reasoning in agentive settings. For this reason, aim I explicates the need for certificates, i.e.\ the reasoning procedure must generate explicit proofs of successful proof-search and counter-models in the case of failed proof-search. Such certificates can be seen as explicit \textit{reasons}
 that explain an input's (in)validity. This relates to our second aim: 
\begin{quote}
(AIM II) Illustrate how the obtained calculi and proof-search algorithm can be utilized for   
agent-based normative reasoning tasks. 
\end{quote}
We demonstrate how the developed proof systems and accompanying proof-search algorithm can be employed for reasoning tasks concerning obligations and choices.
 The constructed proofs and counter-models will serve as certificates that provide reasons for an agent's obligations and choices. The upshot of this approach, is that such certificates can be accessed, analyzed, and assessed, which are important aspects of more transparent and explainable normative reasoning in AI.

Let us  discuss these two aims in more detail.

\subsection{Aim I: Proof Theory}\label{sect1a_aimi}

Modern proof theory has its roots in the work of Gerhard Gentzen \citeyear{Gen35a,Gen35b} who introduced the \emph{sequent calculus} framework for classical and intuitionistic logic. The characteristic feature of the sequent calculus framework is the utilization of \emph{sequents} in defining inference rules and deriving theorems (sequents are formulae of the form $\phi_{1}, \ldots, \phi_{n}\sar \psi_{1}, \ldots, \psi_{k}$, interpreted as stating that if all formulae $\phi_{i}$ in the antecedent hold, then some formula $\psi_{j}$ in the conclusion holds). The framework proved to be well-suited for the construction of \emph{analytic calculi} (i.e.\ calculi that possess the \emph{subformula property}), meaning that every formula used to \emph{reach} the conclusion of a proof, \emph{occurs} as a subformula in the conclusion of the proof. 
The 
analyticity of a calculus is highly valuable in designing proof-search algorithms that take a formula as input and attempt to construct a proof by applying inference rules in reverse (typically allowing for a counter-model to be constructed if a proof is not found).

Despite its success and utility, 
it was found that the sequent calculus framework is often not expressive enough to construct analytic calculi for many logics of interest, e.g.\ see 
\cite{Wan02}. In response, a diverse number of proof-theoretic formalisms 
 extending Gentzen's sequent calculus framework were introduced, 
 allowing for the construction of analytic calculi for larger classes of logics. Such formalisms include (prefixed) tableaux~\cite{Fit72,Fit14}, display calculi~\cite{Bel82,Wan94}, hypersequent calculi~\cite{Avr96,Lah13}, labeled calculi~\cite{Gab96,Sim94,Vig00}, and nested calculi~\cite{Bul92,Kas94,Bru09,Pog09}.

In this paper, we utilize proof systems for deontic $\sti$ logics within the \textit{labeled sequent} formalism \cite{Sim94,Gab96,Vig00,Hei05,Neg05}. This formalism is particularly suitable for handling modal logics, since it allows for semantic information to be explicitly represented in the syntax of sequents. Its name refers  
to the use of labels, representing worlds in a relational model, that prefix formulae. 
The proof systems used in this paper first appeared in the PhD thesis of \citeauthor{Lyo21thesis} \citeyear{Lyo21thesis} and are extensions/variants of the labeled calculi for (deontic) $\sti$ logics provided by both authors in a sequence of papers~\cite{BerLyo19a,LyoBer19,BerLyo21}. In these works,  
 the problem of designing proof-search algorithms for deontic $\sti$ logics 
 was left open; this problem will be addressed 
 in this paper.
 
We adopt 
 the labeled formalism since it offers a variety of advantages. 
First, building calculi within the labeled formalism is straightforward as calculi are easily obtained by transforming the semantic clauses and frame properties associated with the relational semantics of a logic into inference rules~\cite{Sim94,SchTis11}. Additionally, the labeled formalism provides uniform and modular presentations for many classes of logics, where the addition, deletion, or modification of inference rules from one calculus yields a calculus for another logic within the considered class~\cite{Sim94,Vig00}. This modularity proves useful in our setting as the addition/deletion of inference rules 
transforms our proof systems into sound and complete proof systems for various (non-)deontic $\sti$ logics, including 
the non-deontic 
traditional $\sti$ logics developed by Belnap, Perloff, and Xu \citeyear{BelPerXu01}. 
 Consequently, all of our results straightforwardly transfer to the non-deontic, multi-agent setting. 
Last, labeled calculi typically exhibit favorable proof-theoretic properties such as the height-preserving invertibility of rules and cut admissibility~\cite{Sim94,Vig00,Neg05}. 
We leverage such properties to successfully realize aim I.

\subsection{Aim II: Normative Reasoning Tasks and Explanations}\label{sect1a_aimii}

The constructive stance taken in this work is deliberate. As our proof-search algorithm also supports counter-model extraction, we automate the construction of explicit proofs of derivability and underivability. Our approach, thus, yields insight into the reasons for a formula's (non-)theoremhood, providing 
information that explains logical inferences relative to agents, choices, and obligations. This is essential for the secondary aim of our work.

Explanation is a critically important topic in AI \cite{BirCot17,MilHowSon17,Gunetal19,Mil19}. It allows us to understand and evaluate why intelligent systems---such as artificial agents---produce particular decisions, predictions, and actions. An often observed consequence of improved explainability is that it increases our trust in such systems. 
 Especially in light of the rapid increase of autonomous intelligent systems, we also find an increasing demand for formal approaches to explainable agency, which is the ability of intelligent agents to explain their choices and actions \cite{LanMaeSriCho17}. Similarly, the demand for explanatory methods for normative reasoning has increased \cite{BerStr22}. In particular, the work by \citeauthor{BerStr24} \citeyear{BerStr24} proposes dialogue models to construct, what are called, deontic explanations for defeasible normative reasoning. The deontic formalisms in these works do not contain explicit concepts of agency.


Explanation is strongly connected to notions such as justification and interpretability \cite{BirCot17}. Following \citeauthor{YeJoh95} \citeyear{YeJoh95}, a `justification’ denotes a type of explanation that gives an “explicit description of the causal argument or rationale behind each inferential step taken” (p.~158)\footnote{This is a common definition 
in AI. In ethics, however, `explanation’ and `justification’ are often considered 
distinct, namely, the former providing reasons for why something happened in the way it happened and the latter providing reasons for why it is good that it happened \cite{Alv17}.} and `interpretability’ signifies, in the words of \citeauthor{BirCot17} \citeyear{BirCot17} that  a system’s “operations can be understood by a human, either through introspection or through a produced explanation” (p.~8). Hence, 
it can be seen that constructive proof-search procedures 
have explanatory value by yielding justifications in terms of proofs---i.e.\ transparent representations of a step-by-step inferential procedure---and facilitating interpretability via explicit inferential operations that can be inspected.

Logical reasoning is suitable for transparent reasoning tasks, avoiding the opaqueness issues of black box reasoning as often ascribed to machine learning procedures \cite{Gunetal19}.  Although such logical methods might not be optimal for real-time reasoning in which decisions must be made fast, they are particularly useful for transparent reasoning procedures for which we can formally specify various reasoning tasks 
and facilitate explanation in the way defined above. For instance, in this paper, we illustrate how the proof theory of deontic $\sti$ can generate certificates 
of potential (non-)compliance of agents' behavior with respect to a set of obligations. 
 By providing \textit{procedural} proof-search algorithms with automated counter-model extraction, we address 
three core demands for automated agent-based normative reasoning: 
correctness and termination, 
transparency of reasoning, and 
explainability in terms of justification and interpretability. 

We also apply our deontic $\sti$ calculi to address three normative reasoning tasks: 
First, 
our formalism can be applied to determine an agent’s obligations given a certain situation in which the agent resides. That is, given a situation---i.e.\ a given knowledge base consisting of obligations and facts---the agent can check whether they are bound by certain obligations. We refer to this application as \textit{duty checking}. Second, 
the calculi can be used to determine whether a certain choice, considered by the agent as potential conduct, complies with the given knowledge base of obligations and facts. We refer to this second application as \textit{compliance checking}. Third, 
the calculi can be employed to check whether under a specified factual context an agent can (still) jointly fulfill all of their obligations. We call this application \textit{joint fulfillment checking}. These three applications of the framework demonstrate three conceptually different normative reasoning tasks. 
Furthermore, we discuss the formal similarities between these 
applications from the perspective of proof theory.

\subsection{Related Work}

There have been previous attempts to automate or mechanize reasoning with (deontic) $\sti$ logics. An important first step was made by  \citeauthor{ArkBriBel05}\ \citeyear{ArkBriBel05} who encoded a natural deduction system for Horty's deontic $\sti$ in Athena, allowing for the formal verification of proofs, but omitting the specification of a terminating proof-search algorithm. \citeauthor{SheAbb21} \citeyear{SheAbb21} wrote a model-checking algorithm for a variant of Horty's deontic $\sti$ combined with $\mathsf{CTL}^{\ast}$, which takes a model and obligation as input and determines whether the obligation holds. This significantly differs from our algorithm which takes an arbitrary formula as input and decides the formula, outputting either a proof or counter-model thereof. Proof-theoretic approaches to formal reasoning with non-deontic $\sti$ logics include the tableau systems of \citeauthor{Wan06} \citeyear{Wan06} and the labeled systems of \citeauthor{Negri2020} \citeyear{Negri2020}. The former work leaves the specification of a tableau-based decision algorithm open, while the latter work takes a similar approach to the authors' work in this and previous papers~\cite{LyoBer19}, providing a labeled calculus for non-deontic $\sti$ logic and attempting to secure decidability via terminating proof-search. Dalmonte et al.\ \citeyear{Daletal21} develop proof-search with counter-model extraction for a class of non-$\sti$ deontic
logics.

As the work of \citeauthor{Negri2020} \citeyear{Negri2020} shares similarities with ours, the authors would like to emphasize the main differences between the labeled calculi applied in this paper and those defined by \citeauthor{Negri2020} \citeyear{Negri2020} for non-deontic multi-agent $\sti$ logic: 
 (i) We consider a more general class of $\sti$ calculi, which includes the deontic extension of traditional multi-agent $\sti$ \cite{Hor01,Mur05} with and without the $\sti$ principle for limiting the amount of choices available to agents \cite[Ch. 17]{BelPerXu01}. (ii) In contrast to \cite{Negri2020}, our calculi adopt a \textit{relational} semantics for $\sti$, forgoing the more involved---yet redundant \cite{HerSch08}---branching-time structures in the context of 
atemporal 
$\sti$ logics. As a consequence of using relational semantics, our calculi are syntactically simpler and more economical which, for instance, significantly decreases the number of rules in the calculi.
 (iii) We implement an explicit loop-checking mechanism for our proof-search algorithm to ensure correct and terminating proof-search for (deontic) multi-agent $\sti$ logics (with a limited as well as unlimited choice constraint). At the end of Section~\ref{Sect:Calculi}, we provide an argument as to why proof-search algorithms for (non-deontic) $\sti$ logics without loop-checking are susceptible to problematic cases (i.e.\ termination may fail; e.g. in
\citeauthor{Negri2020} \citeyear{Negri2020}). 

Last, 
(non-)deontic $\sti$ logics were shown decidable via the finite model property by~\citeauthor{BelPerXu01}\ \citeyear{BelPerXu01}  and \citeauthor{Mur05} \citeyear{Mur05}, respectively. \citeauthor{BalHerTro08} \citeyear{BalHerTro08} studied the complexity of deciding traditional $\sti$ logics, finding that single-agent $\sti$ logic is $\mathrm{NP}$-complete and multi-agent $\sti$ is $\mathrm{NEXPTIME}$-complete. These approaches are model-theoretic, 
whereas our work is 
the first to establish the decidability of (non-)deontic $\sti$ logics \emph{syntactically} by means of terminating proof-search.

\subsection{Outline of the Paper}

\sect~\ref{Sect:Logics} introduces deontic $\sti$ logics and \sect~\ref{Sect:Calculi} introduces their cut-free labeled sequent calculi. \sect~\ref{Sect:Proof-search} is dedicated to a discussion of our 
proof-search algorithm, and 
the proposed loop-checking mechanism, which lets us decide (non-)deontic multi-agent $\sti$ logics with (un)limited choice.
In \sect~\ref{Sect:Applications}, we apply proof-search to the three agent-based normative reasoning tasks.  
We conclude and discuss future work in \sect~\ref{Sect:Conclusion}.

\section{Logical Preliminaries}\label{Sect:Logics}

Deontic $\sti$ logics enable reasoning about agents' choices and obligations in multi-agent scenarios by employing a variety of modal operators. In this section, we discuss the interpretation of 
our deontic $\sti$ languages and provide a formal semantics for its formulae.

\textit{Choices.} Different choices may be available to different agents at different moments in time. The characteristic feature of traditional $\sti$ logic is the use of an instantaneous  \emph{choice} operator $\agbox$ for each agent $i$, which informally expresses that ``agent $i$ sees to it that'' (some proposition holds). The operator is instantaneous in the sense that choice refers to what an agent can directly see to at a given moment. In a multi-agent world, a single agent cannot uniquely determine the future by acting. For instance, when I decide to go to a concert, it may be that my friends join me, but also they may stay at home. Nevertheless, if I see to it that I go to the concert, this excludes a future continuation where I stayed at home. Hence, what an agent can do via exercising choice is to constrain or limit the possible courses of events; the modal operator $\agbox$ models this idea.

\textit{Settledness.} Certain states of affairs cannot be altered by any of the agents' (joint) choices at a given moment. Such states of affairs are \textit{settled true} for that moment. Basic $\sti$ logic includes a \emph{settledness} operator $\Box$. For instance, 
that it is currently settled true that it is Tuesday, means  
that no choice is available to any of the agents to see to it that today is \textit{not} Tuesday. In such cases, we sometimes say that what is settled true is realized independently of any of the agents' choices. The settledness operator plays an essential role in characterizing the relations between different choices of agents. 
Following \citeauthor{HorBel95} \citeyear{HorBel95}, we focus on single-moment scenarios.

\textit{Obligations.} In the context of $\sti$, obligations prescribe certain choices over others and are captured by an agent-specific \emph{deontic choice} operator $\otimes_{i}$ for each agent $i$. We interpret $\otimes_{i}$ as 
``agent $i$ ought to see to it that'' (some proposition holds) \cite{Hor01}. 
To illustrate, the formula $\otimes_{i} \texttt{concert}$ is informally read as ``agent $i$ ought to see to it that she attends the concert'' (e.g.\ because she made a promise to a friend). 

For each of the above modal operators (viz.\ $\agbox$, $ \Box$, and $\Oi$, respectively), we include the \emph{duals} (viz.\ $\agdia$, $\Diamond$, and $\ODi$). 
We read $\agdia \phi$ as `the state of affairs $\phi$ may result from a choice made by agent $i$' and take $\Diam\phi$ to express that `the state of affairs $\phi$ is currently possible'. We read $\ODi \phi$ as `agent $i$ is permitted to see to it that the state of affairs $\phi$ holds' (cf. permission as the dual of obligation \cite{HilMcN13}). Beyond the aforementioned modalities, our languages also include disjunction $\lor$, conjunction $\land$, and classical negation. The multi-agent language $\lang_{n}$ with $n \in \mathbb{N}$ is defined as follows:

\begin{definition}[The Language $\lang_{n}$]\label{def:language} Let $\Ag := \{0,1, \ldots ,n\} \subset \mathbb{N}$ be a finite set of agent labels 
and let $\Var :=\{p_0,p_1,p_2, \ldots\}$ be a denumerable set of propositional variables. The language $\lang_{n}$ is defined via the following grammar in BNF: 
$$
\phi ::= p \ | \ \negnnf{p} \ | \ \phi \lor \phi \ | \ \phi \land \phi \ | \ \Box \phi \ | \ \Diamond \phi \ | \ [i] \phi \ | \ \agdia \phi \ | \ \Oi\phi \ | \ \ODi \phi
$$
where $i\in \Ag$ and $p \in \Var$.
\end{definition}

\newcommand{\rightj}{\texttt{right}\_\texttt{jade}}
\newcommand{\rightk}{\texttt{right}\_\texttt{kai}}
\newcommand{\leftj}{\texttt{left}\_\texttt{jade}}
\newcommand{\leftk}{\texttt{left}\_\texttt{kai}}
\newcommand{\col}{\texttt{coll}}

Our formulae in $\lang_{n}$ are in negation normal form, which allows us to simplify the sequents employed in our calculi and reduce the number of rules. 
 This means that we restrict applications of negation to propositional variables. 
Then, we define the negation $\negnnf{\phi}$ of an arbitrary formula $\phi \in \lang_{n}$ to be the formula where every operator $\lor$, $\land$, $\Box$, $\Diamond$, $\agbox$, $\agdia$, $\Oi$, $\ODi$ is replaced with its dual $\land$, $\lor$, $\Diamond$, $\Box$, $\agdia$, $\agbox$, $\ODi$, $\Oi$ (respectively), every propositional variable $p$ is replaced with its negation $\negnnf{p}$, and $\negnnf{p}$ is replaced with its positive version $p$. We define $\phi \rightarrow \psi := \negnnf{\phi} \lor \psi$, $\phi \leftrightarrow \psi := (\phi \rightarrow \psi) \land (\psi \rightarrow \phi)$, $\top := p \lor \negnnf{p}$, and $\bot := p \land \negnnf{p}$. In what follows, we refer to the \textit{complexity} of a formula in the usual way, 
corresponding to the number of symbols in a formula. 

To illustrate $\lang_n$, 
let Jade ($j$) be an agent cycling to her office in London. Then, the formula $\otimes_j\leftj\land\Diam\leftj\land [j]\lnot\leftj$ describes that 
``Jade is obliged to cycle on the left-hand side of the road, it is realizable that she cycles on the left side, and she chooses not to do so,'' in other words, 
Jade sees to it that her obligation is violated.

We interpret $\mathcal{L}_n$ formulae over special types of relational models, called 
\emph{$\dsnk$-models}. The parameter $n$ refers to the number of agents involved and $k$ to the maximal number of choices for each agent at each moment, and which imposes no maximum when $k=0$. 
Since our logic concerns instantaneous (i.e.\ atemporal) choice-making, it suffices to consider single-moment frameworks \cite{BalHerTro08}, forgoing the use of 
traditional branching time structures often employed in atemporal \textsf{STIT} logics~\cite{BelPerXu01} (the 
$\sti$ language $\lang_n$ is not expressive enough to reason about branching time structures).

\begin{definition}[$\dsnk$-frames, -models]\label{def:frames-models} Let $n \in \mathbb{N}$ and for each agent label $i \in \Ag$, let us define $\R_{[i]}(w) := \{v\in W \ | \ (w,v) \in R_{[i]}\}$. A \emph{$\dsnk$-frame} is defined to be a tuple of the form $F := (W, \{\R_{[i]} \ | \ i \in Ag\}, \{ \opt_{\Oi} \ | \ i \in Ag\} )$ with $W$ a non-empty set of worlds $w,v,u,\ldots$ and:
\begin{center}
\begin{tabular}{p{1em} p{30pt} p{450pt}}
 & {\rm \textbf{(C1)}} & For all $i\in Ag$, $\R_{[i]} \subseteq W{\times} W$ is an equivalence relation. \\
 & {\rm \textbf{(C2)}} & For all $u_{1}, \ldots ,u_{n} \in W$, $\bigcap_{i \in Ag} \R_{[i]}(u_{i}) \neq \emptyset$. \\
 & {\rm \textbf{(C3)}} & If $k > 0$, then for all $i \in \Ag$ and $w_{0}, w_{1}, \ldots, w_{k} \in W$, \\
 & & $$\displaystyle{ \bigvee_{0 \leq m \leq k-1 \text{, } m+1 \leq j \leq k} R_{[i]} w_{m}w_{j}}.$$
\end{tabular}
\end{center}
\begin{center}
\begin{tabular}{p{1em} p{30pt} p{450pt}}
 & {\rm \textbf{(D1)}} & For all $i\in Ag$, $\opt_{\Oi}\subseteq W$.\\

 & {\rm \textbf{(D2)}} & For all $i\in Ag$, $\opt_{\Oi} \neq \emptyset$. \\
 & {\rm \textbf{(D3)}} & For all $i \in Ag$ and $w, v \in W$, if $w \in \opt_{\Oi}$ and $v \in R_{\agbox}(w)$, then $v \in \opt_{\Oi}$.\\
\end{tabular}
\end{center}
A \emph{$\dsnk$-model} is a tuple $M := (F,V)$ where $F$ is a frame and $V{:}\ \Var \to \mathcal{P}(W)$ is a valuation function mapping propositional variables to subsets of $W$. 
\end{definition}

Following \dfn~\ref{def:frames-models}, a $\dsnk$-model consists of a set 
$W$ representing a single moment in time at which the agents from $\Ag$ are making decisions and where each world in $W$ designates a 
different configuration of states of affairs that could hold at that moment. Conditions (\textbf{Ci}) and (\textbf{Di}) refer, respectively, to the \emph{choice properties} and \emph{deontic properties}.  
 Condition \textbf{(C1)} partitions $W$ into \emph{choice-cells} (i.e.\ equivalence classes) for each $i \in \Ag$, where each choice-cell represents a choice available to agent $i$ at the considered moment.
 The partition imposed on $W$ must satisfy the \emph{independence of agents} condition, and may satisfy a \emph{limited choice} condition \cite{BelPerXu01}. Independence of agents, expressed by condition \textbf{(C2)}, stipulates that no agent can keep another agent from exercising an available choice at the moment of choice, i.e.\label{ioa}\  regardless of the choices made by the agents, some set of possible states of affairs ensues. Formally,  \textbf{(C2)} ensures that any combination of choices made by the agents is consistent, i.e.\ if we select a choice-cell for each agent, their intersection 
is non-empty.
 The limited choice condition, captured by \textbf{(C3)}, stipulates that each agent has a maximum of $k$ choices available to choose from at each moment and is only enforced when $k > 0$ (i.e.\ 
 imposing no limitation on the number of choices when $k=0$). Formally, \textbf{(C3)} states that if the limit is $k$, then for $k+1$ many worlds at least two worlds must be in the same choice-cell.\footnote{
 \citeauthor{BelPerXu01} \citeyear[Ch. 9]{BelPerXu01} argue that one may want to limit the amount of choices 
 due to the possibility of `busy choosers’, which are agents making infinitely many choices in a finite period. For some $\sti$ logics, busy choosers have undesirable effects, e.g.\ where refraining from refraining does not equal doing.}
 The condition \textbf{(D1)} ensures that all deontically optimal worlds at the present moment for any agent $i$ (which are those worlds contained in $\opt_{\Oi}$) are realizable. Condition \textbf{(D2)} ensures that, for each agent $i$, at least one deontically optimal world exists at a moment, and \textbf{(D3)} states that if a deontically optimal world exists in a choice, then every world in that choice is deontically optimal. In other words,
\textbf{(D1)}--\textbf{(D3)} ensure the existence of an obligatory 
choice for each agent $i$ at a given moment. Last, each $\dsnk$-model contains a \emph{valuation function} $V$ mapping propositional variables to sets of worlds.

\begin{definition}[Semantics]\label{def:semantics} Let $M$ be a $\dsnk$-model and let $w\in W$ of $M$. The \emph{satisfaction} of a formula $\phi\in \lang_{n}$ in $M$ at $w$ is recursively defined accordingly:

\begin{itemize}
\vspace{-0.2cm}
\item[1.] $M, w \Vdash p$ \iffi $w \in V(p)$

\item[2.] $M, w \Vdash \negnnf{p}$ \iffi $w \not\in V(p)$

\item[3.] $M, w \Vdash \phi \wedge \psi$ \iffi $M, w \Vdash \phi$ and $M, w \Vdash \psi$

\item[4.] $M, w \Vdash \phi \vee \psi$ \iffi $M, w \Vdash \phi$ or $M, w \Vdash \psi$

\item[5.] $M, w \Vdash \Box \phi$ \iffi $\forall u \in W$, $M, u \Vdash \phi$

\item[6.] $M, w \Vdash \Diamond \phi$ \iffi $\exists u \in W$, $M, u \Vdash \phi$

\item[7.] $M, w \Vdash [i] \phi$ \iffi $\forall u \in \R_{[i]}(w)$, $M, u \Vdash \phi$

\item[8.] $M, w \Vdash \lb i \rb \phi$ \iffi $\exists u \in \R_{[i]}(w)$, $M, u \Vdash \phi$

\item[9.] $M, w \Vdash \Oi \phi$ \iffi $\forall u \in \opt_{\Oi}$, $M, u \Vdash \phi$

\item[10.] $M, w \Vdash \ODi \phi$ \iffi $\exists u \in \opt_{\Oi}$, $M, u \Vdash \phi$
\end{itemize}
\vspace{-0.2cm}
 A formula $\phi$ is \emph{$\dsnk$-valid} (written $\Vdash_{\dsnk} \phi$) \iffi for every $\dsnk$-model $M$, it  is satisfied at every world in the domain $W$ of $M$. 
We define the logic $\dsnk$ as the set of $\dsnk$-valid formulae.
\end{definition}

The semantic evaluation of the propositional connectives implies that our logic extends 
classical propositional logic. 
We note in passing that sound and complete axiomatizations of $\dsnk$ with respect to the above semantic characterization 
are provided by \citeauthor{Lyo21thesis} \citeyear{Lyo21thesis}.

Let us consider an example. Suppose there are two agents, Jade and Kai (where $Ag:=\{j,k\}$), who are cycling toward each other on a two lane road.  Both have two choices: keep cycling left or keep cycling right. Let $\leftj$ and $\rightj$, and $\leftk$ and $\rightk$ denote the states of affairs in which Jade and Kai cycle left and right, respectively. In the graphical illustration of this model (see \fig~\ref{fig:model-example}), the set $W = \{w_{1}, w_{2}, w_{3}, w_{4}\}$ represents a 
moment in time with four possible situations, e.g.\  $w_{1}$ denotes the situation in which both Jade and Kai cycle on the left.  
The model in \fig~\ref{fig:model-example} shows that Jade has two choices, one consisting of the worlds
$w_{1}$ and $w_{3}$, and one consisting of 
$w_{2}$ and $w_{4}$, whereas Kai has one choice consisting of 
$w_{1}$ and $w_{2}$, and one consisting of 
$w_{3}$ and $w_{4}$. Graphically, the horizontal choice-cells outlined with dotted lines `\textbf{\textperiodcentered } \textbf{\textperiodcentered } \textbf{\textperiodcentered }' represent Kai's choices and the vertical choice-cells outlined with dashed lines `$\textbf{\-- \-- \--}$' represent Jade's choices. There are two cases in which a collision between Jade and Kai can be avoided: when both keep cycling left and when both keep cycling right. These two scenarios are represented by the intersecting choices leading to $w_1$, respectively $w_4$ in Figure~\ref{fig:model-example}. A collision $\col$ holds at those intersecting choices where one of the two cycles left and the other right.\footnote{Avoiding a collision is expressed as: $\Box(\lnot\col\rightarrow((\rightj\land\rightk)\lor(\leftj\land \leftk)))$).}
In \fig~\ref{fig:model-example} the propositional variable $\leftj$ is mapped to the set $\{w_{1},w_{3}\}$, meaning that in these possible worlds Jade cycles left. 
The negation $\negnnf{p}$ of a propositional variable $p$ holds at those worlds where $p$ does \emph{not} hold (e.g.\ $\lnot\col$ holds at $w_{1}$ and $w_{3}$ in our example model).

\begin{figure}[t]
\label{fig:model-example}
\begin{center}
\scalebox{0.92}{
\begin{tikzpicture}

\node[
] (w1) [] {$w_1$}; 

\node[
] (w2) [right=of w1, xshift=0mm,yshift=0mm] {$w_2$};

\node[
] (w3) [below=of w1, xshift=0mm,yshift=2mm] {$w_3$};

\node[
] (w4) [below=of w2, xshift=0mm,yshift=2mm] {$w_4$};

\node[draw, very thick, 
 color=black, rounded corners, inner xsep=15pt, inner ysep=15pt, fit=(w1) (w4)] (m1) {}; 

\node[very thick, rounded corners,  inner xsep=10pt, inner ysep=10pt, fit=(w1) (w3), pattern=north west lines, pattern color=Gray2] (c11) {}; 

\node[draw,  thick, densely dashed, color=black, rounded corners, inner xsep=10pt, inner ysep=10pt, fit=(w1) (w3)] (c1) {}; 

\node[draw,  thick, densely dashed, color=black, rounded corners, inner xsep=10pt, inner ysep=10pt, fit=(w2) (w4)] (c2) {};

\node[
 very thick, 
, rounded corners,  inner xsep=5pt, inner ysep=5pt, fit=(w1) (w2), pattern=north east lines, pattern color=Gray2] (c12) {}; 

\node[draw,  thick, dotted, color=black 
, rounded corners, inner xsep=5pt, inner ysep=5pt, fit=(w1) (w2)] (c1) {}; 

\node[draw,  thick, dotted, color=black 
, rounded corners, inner xsep=5pt, inner ysep=5pt, fit=(w3) (w4)] (c2) {};

\node[
] (w11) [] {$w_1$}; 

\node[
] (w21) [right=of w1, xshift=0mm,yshift=0mm] {$w_2$};

\node[
] (w31) [below=of w1, xshift=0mm,yshift=2mm] {$w_3$};

\node[
] (w41) [below=of w2, xshift=0mm,yshift=2mm] {$w_4$};

\node[] (m2) [above=of m1,xshift=-45mm, yshift=5mm] {};

\node[] (m3) [above=of m1,xshift=-15mm, yshift=10mm] {\textcolor{white}{10}};

\node[] (m4) [above=of m1,xshift=15mm, yshift=10mm] {};

\node[] (m5) [above=of m1,xshift=45mm, yshift=5mm] {};

\node[] (root) [below=of m1] {};

\path[-] (root) edge[ thick,black]  (m1);

\path[-] (w1) edge[ thick,black] node[left,xshift=-2.5mm,yshift=9mm] {$\leftj$} node[left,xshift=-1mm,yshift=4mm] {$\leftk$}  (m3);
\path[-] (w2) edge[ thick,black] node[right,xshift=2.5mm,yshift=9mm] {$\rightj$} node[right,xshift=1mm,yshift=4mm] {$\leftk$} 
node[right,xshift=0mm,yshift=-1mm] {$\col$}
(m4);
\path[-] (w3) edge[ thick,black] node[left,xshift=-6mm,yshift=5mm] {$\leftj$} node[left,xshift=-1mm,yshift=0mm] {$\rightk$} 
node[left,xshift=4mm,yshift=-4mm] {$\col$} (m2);
\path[-] (w4) edge[ thick,black] node[right,xshift=5.5mm,yshift=5mm] {$\rightj$} node[right,xshift=1mm,yshift=0mm] {$\rightk$} (m5);
\end{tikzpicture}
}
\end{center}
\vspace{-0.5cm}
\caption{An illustration of the $\baselogic_{n}$-model $M := (F,V)$ built atop the $\baselogic_{n}$-frame $F := (W,\{R_{[j]}, R_{[y]}\},\{\opt_{\otimes_{j}}, \opt_{\otimes_{y}}\})$ such that $W := \{w_{1}, w_{2}, w_{3}, w_{4}\}$, $R' := \{(w_{i},w_{i}) \ | \ 1 \leq i \leq 4\}$, Jade's relation $R_{[j]} := \{(w_{1}, w_{3}), (w_{3}, w_{1}), (w_{2}, w_{4}), (w_{4}, w_{2})\} \cup R'$, Kai's relation $R_{[k]} := \{(w_{1}, w_{2}), (w_{2}, w_{1}), (w_{3}, w_{4}), (w_{4}, w_{3})\} \cup R'$, $\opt_{\otimes_{j}} := \{w_{1},w_{3}\}$, $\opt_{\otimes_{k}} := \{w_{1},w_{2}\}$, and  $V := \{(\leftj,\{w_{1},w_{3}\}),(\rightj,\{w_{2},w_{4}\}),\allowbreak(\leftk,\{w_{1},w_{2}\}),\allowbreak(\rightk,\{w_{3},w_{4}\}),(\col,\{w_{2},w_3\})\}$.
}
\end{figure}

  Although there are two outcomes in the cycling scenario of Figure~\ref{fig:model-example} in which a collision is avoided, we assume that UK traffic law obliges cyclists to always cycle on the left-hand side of the road. In Figure~\ref{fig:model-example}, Jade's choice consisting of $w_{1}$ and $w_{3}$ is obligatory for Jade and Kai's choice consisting of $w_{1}$ and $w_{2}$ is obligatory for Kai. Thus, based on the assumed traffic laws, if both Jade and Kai commit to their obligatory choices, they jointly see to it that a collision is avoided. In Figure~\ref{fig:model-example}, obligatory choices are shaded. In relation to this, due to fact that $\dsnk$-models ensure that at least one obligatory choice exists for each agent $i \in \Ag$, each agent's set of obligations is consistent, i.e.\ no agent is obliged to see to it that $\phi$ and $\negnnf{\phi}$ for some proposition $\phi$. It can be straightforwardly checked that the example $\dsnk$-model in \fig~\ref{fig:model-example} satisfies all properties from Definition~\ref{def:frames-models}.

\section{Proof Theory for Deontic STIT Logics}\label{Sect:Calculi}

We present a class of labeled sequent calculi, 
 which generalize Gentzen-style sequent systems 
by including semantic information directly in the syntax of sequents. The calculi 
are variations of the calculi  
given by \citeauthor{BerLyo21} \citeyear{BerLyo21}, and were first defined by \citeauthor{Lyo21thesis} \citeyear{Lyo21thesis}. The syntactic expressions 
in our calculi, called \emph{labeled sequents}---or \emph{sequents} for short---encode semantic information by prefixing formulae with labels (representing worlds in a relational model) and include relational atoms (encoding the accessibility relations). 

\begin{definition}[Sequent]\label{def:language_seq} Let $\Ag = \{0,1, \ldots, n\}$ be a finite set of agent labels and let $\lab := \{ w, u, v, \ldots \}$ be a denumerable set of labels. A \emph{sequent} is defined to be an expression of the form $\Lambda := \rel \sar \Gamma$, where $\rel$ is a (potentially empty) finite set of \emph{relational atoms} of the form $\R_{[i]}wu$ and $\opt_{\Oi}w$, and $\Gamma$ is a (potentially empty) finite set of \emph{labeled formulae} of the form $w : \phi$, where $i$ ranges over $\Ag$, $\phi$ ranges over $\lang_{n}$, and $w,u$ range over $\lab$.
\end{definition}

We use $\rel$, $\qel$, $\ldots$ (potentially annotated) to denote sets of relational atoms, and use $\Gamma$, $\Delta$, $\ldots$ (potentially annotated) to denote sets of labeled formulae. We often use $\Lambda := \rel \sar \Gamma$ to denote an entire sequent, where we refer to $\rel$ as the \emph{antecedent} of $\Lambda$, and $\Gamma$ as the \emph{consequent} of $\Lambda$. Intuitively, a sequent $\rel \sar \Gamma$ is interpreted as stating that if all relational atoms in $\rel$ hold, then some labeled formula $w : \phi \in \Gamma$ holds (see \dfn~\ref{def:sequent-semantics-dsn} for the semantic interpretation of a sequent). We use the notation $Lab(\rel)$, $Lab(\Gamma)$, and $Lab(\Lambda)$ to represent the set of labels contained in $\rel$, $\Gamma$, and $\Lambda$, respectively; e.g.\ $Lab(R_{[2]}wv \sar u : p) = \{w,u,v\}$. Furthermore, for a label $w$ and set of labeled formulae $\Gamma$, we define $\Gamma \restriction w := \{\phi \ | \ w : \phi\}$ to be the set of formulae prefixed with the label $w$.

The $\gtdsnk$ calculi 
for each logic $\dsnk$, where $|\Ag| = n \in \mathbb{N}$ and $k \in \mathbb{N}$, are given a uniform presentation in \fig~\ref{fig:calculus}. The $\id$ rule encodes the fact that at any world in a $\dsnk$-model, either a propositional atom $p$ holds or $\negnnf{p}$ holds, and acts as a closure rule during proof-search. 
We refer to any sequent that is the conclusion of $\id$ as an \emph{initial sequent}.  The $\disr$, $\conr$, $\diar$, $\boxr$, $\ODir$, $\Oir$, $\stitdiar$, and $\stitr$ rules are obtained by transforming the semantic clauses of the corresponding logical connectives into inference rules (\dfn~\ref{def:semantics}). The structural rules $\stitrefl$ and $\stiteucl$, taken together, encode the fact that each $R_{\agbox}$ relation is an equivalence relation (i.e.\ is both reflexive and Euclidean) as dictated by condition \textbf{(C1)}. The conditions \textbf{(C2)}, \textbf{(C3)}, \textbf{(D2)}, and \textbf{(D3)} are transformed into the rules $\ioa$, $\choicer$, $\dtwo$, and $\dthree$, respectively, with \textbf{(D1)} encoded in the $\ODir$ rule. 

We note that $\choicer$ is a rule schema encoding that agent $i$ is limited to at most $k$ choices; $\mathsf{APC}$ stands for `axiom scheme for possible choices’ \cite[p. 437]{BelPerXu01}.   
 When $k = 0$, $\choicer$ is omitted from the calculus thus enforcing no upper-bound on the number of choices. 
 When $k > 0$, $\choicer$ contains $k(k+1)/2$ premises $\rel, R_{\agbox}x_{m}x_{j} \sar \Gamma$ with $0 \leq m \leq k-1$ and $m+1 \leq j \leq k$. For instance, if $k = 1$, then $\choicer$ takes the form
\begin{center}
\AxiomC{$\rel, R_{\agbox}w_{0}w_{1} \sar \Gamma$}
\RightLabel{$(\mathsf{APC}^{1}_{i})$}
\UnaryInfC{$\rel \sar \Gamma$}
\DisplayProof
\end{center}
and if $k = 2$, then $\choicer$ takes the form 
\begin{center}
\AxiomC{$\rel, R_{\agbox}w_{0}w_{1} \sar \Gamma$}
\AxiomC{$\rel, R_{\agbox}w_{0}w_{2} \sar \Gamma$}
\AxiomC{$\rel, R_{\agbox}w_{1}w_{2} \sar \Gamma$}
\RightLabel{$(\mathsf{APC}^{2}_{i})$.}
\TrinaryInfC{$\rel \sar \Gamma$}
\DisplayProof
\end{center}

\begin{figure}[t]
\noindent\hrule

\begin{center}
\begin{tabular}{c c} 

\AxiomC{ }
\RightLabel{$\id$}
\UnaryInfC{$\rel \sar w:p, w:\negnnf{p}, \Gamma$}
\DisplayProof

&

\AxiomC{$\rel \sar w: \phi, \Gamma$}
\AxiomC{$\rel \sar w: \psi, \Gamma$}
\RightLabel{$\conr$}
\BinaryInfC{$\rel \sar w: \phi \wedge \psi, \Gamma$}
\DisplayProof

\end{tabular}
\end{center}

\begin{center}
\begin{tabular}{c @{\hskip 2em} c}

\AxiomC{$\rel \sar w: \phi, w : \psi, \Gamma$}
\RightLabel{$\disr$}
\UnaryInfC{$\rel \sar w: \phi \vee \psi, \Gamma$}
\DisplayProof

&

\AxiomC{$\rel,\R_{[1]}w_{1}u,\dots, \R_{[n]}w_{n}u \sar \Gamma$}
\RightLabel{$\ioa^{\dag}$}
\UnaryInfC{$\rel \sar \Gamma$}
\DisplayProof

\end{tabular}
\end{center}

\begin{center}
\begin{tabular}{c @{\hskip 1em} c @{\hskip 1em} c}

\AxiomC{$\rel \sar u: \phi, \Gamma$}
\RightLabel{$\settr^{\dag}$}
\UnaryInfC{$\rel \sar w: \Box \phi, \Gamma$}
\DisplayProof

&

\AxiomC{$\rel \sar w: \Diamond \phi, u: \phi, \Gamma$}
\RightLabel{$\settdiar$}
\UnaryInfC{$\rel \sar w: \Diamond \phi, \Gamma$}
\DisplayProof

&

\AxiomC{$\rel, R_{[i]}ww \sar \Gamma$}
\RightLabel{$\stitrefl$}
\UnaryInfC{$\rel \sar \Gamma$}
\DisplayProof
\end{tabular}
\end{center}

\begin{center}
\begin{tabular}{c c} 

\AxiomC{$\rel, \opt_{\Oi}u \sar w : \ominus_{i} \phi, u : \phi, \Gamma$}
\RightLabel{$\ODir$}
\UnaryInfC{$\rel, \opt_{\Oi}u \sar w : \ominus_{i} \phi, \Gamma$}
\DisplayProof

&

\AxiomC{$\rel, R_{[i]}wu \sar w: \agdia \phi, u:\phi, \Gamma$}
\RightLabel{$\stitdiar$}
\UnaryInfC{$\rel, R_{[i]}wu \sar w: \agdia \phi, \Gamma$}
\DisplayProof

\end{tabular}
\end{center}

\begin{center}
\begin{tabular}{c c c}

\AxiomC{$\rel, \R_{[i]}wu \sar u: \phi, \Gamma$}
\RightLabel{$\stitr^{\dag}$}
\UnaryInfC{$\rel \sar w: [i] \phi, \Gamma$}
\DisplayProof

&

\AxiomC{$\rel, \opt_{\Oi}u \sar u : \phi, \Gamma$}
\RightLabel{$\Oir^{\dag}$}
\UnaryInfC{$\rel \sar w : \Oi \phi, \Gamma$}
\DisplayProof

&

\AxiomC{$\rel, \opt_{\Oi}u \sar \Gamma$}
\RightLabel{$\dtwo^{\dag}$}
\UnaryInfC{$\rel \sar \Gamma$}
\DisplayProof

\end{tabular}
\end{center}

\begin{center}
\begin{tabular}{c @{\hskip 1em} c}

\AxiomC{$\rel, R_{[i]}wu, R_{[i]}wv, R_{[i]}uv \sar \Gamma$}
\RightLabel{$\stiteucl$}
\UnaryInfC{$\rel, \R_{[i]}wu, \R_{[i]}wv \sar \Gamma$}
\DisplayProof

&

\AxiomC{$\rel, R_{[i]}wu, \opt_{\Oi}w, \opt_{\Oi}u \sar \Gamma$}
\RightLabel{$(\mathsf{D3}_{i})$}
\UnaryInfC{$\rel, R_{[i]}wu, \opt_{\Oi}w \sar \Gamma$}
\DisplayProof
\end{tabular}
\end{center}

\begin{center}
\begin{tabular}{c}
\AxiomC{$\Big\{ \rel, R_{\agbox}w_{m}w_{j} \sar \Gamma \ \Big| \ 0 \leq m \leq k-1 \text{, } m+1 \leq j \leq k \Big\}$}
\RightLabel{$\choicer$\index{$\choicer$}}
\UnaryInfC{$\rel \sar \Gamma$}
\DisplayProof
\end{tabular}
\end{center}

\hrule
\caption{The $\gtdsnk$ calculi \cite{Lyo21thesis}, where $|\Ag| = n \in \mathbb{N}$ and $k \in \mathbb{N}$. The superscript $\dag$ on $\settr$, $\stitr$, $\Oir$, $\ioa$, and $(\mathsf{D2}_{i})$ indicates that $u$ is a eigenvariable, i.e.\ it does not occur in the conclusion. We have $\stitdiar$, $\stitr$, $\ODir$, $\Oir$, $\stitrefl$, $\stiteucl$, $\choicer$, $\dtwo$, and $\dthree$ rules for each $i \in \Ag$. $\choicer$ is omitted 
when $k = 0$.} 
\label{fig:calculus}
\end{figure}

A \emph{derivation} (or, \emph{proof}) is built by sequentially applying instances of the rules in $\gtdsnk$ to initial sequents (or to \emph{assumptions} in certain cases), and if a sequent $\Lambda$ is derivable in $\gtdsnk$, i.e.\ $\Lambda$ is the \emph{conclusion}, then we write $\der \Lambda$. 
Each derivation has the form of a tree with the conclusion acting as the root and initial sequents forming the leaves. 
 The \emph{height} of a derivation is the longest path of sequents from the conclusion to an initial sequent in the derivation. The relational atoms and labeled formulae that are explicitly presented in the conclusion of a rule are called \emph{principal}, and those that are explicitly presented in the premises and are not principal are called \emph{auxiliary}.  
Let us consider an example derivation.

\begin{example} We show how to derive $\otimes_i p \rightarrow \Diam [i] p = \ominus_i \negnnf{p} \lor \Diam[i]p$ in $\gtdsnk$.

\begin{center}
\scalebox{0.93}{
\AXC{}
\RL{$\id$}
\UIC{$I_{\otimes_{i}} u, I_{\otimes_{i}} z, R_{[i]}zu \sar u: \negnnf{p}, x: \ODi \negnnf{p}, x: \Diam[i]p, z: [i]p, u:p$}
\RL{$\ODir$}
\UIC{$I_{\otimes_{i}} u, I_{\otimes_{i}} z, R_{[i]}zu \sar x: \ODi \negnnf{p}, x: \Diam[i]p, z: [i]p, u:p$}
\RL{$\dthree$}
\UIC{$I_{\otimes_{i}} z, R_{[i]}zu \sar x: \ODi \negnnf{p}, x: \Diam[i]p, z: [i]p, u:p$}
\RL{$([i])$}
\UIC{$I_{\otimes_{i}} z \sar x: \ODi \negnnf{p},  x: \Diam[i]p, z: [i]p$}
\RL{$\settdiar$}
\UIC{$I_{\otimes_{i}} z \sar x: \ODi \negnnf{p},x: \Diam[i]p$}
\RL{$\dtwo$}
\UIC{$\sar x: \ODi \negnnf{p}, x: \Diam[i]p$}
\RL{$(\lor)$}
\UIC{$\sar x: \ODi \negnnf{p} \lor \Diam[i]p$}
\DisplayProof
}
\end{center}
We point out that $\otimes_{i} p\rightarrow \Diam [i] p$ is an instance of the 
principle of \textit{Ought-implies-Can}, 
which is a central theorem of deontic $\sti$ \cite{Hor01,BerLyo21}.
\end{example}

Since the syntax of our sequents explicitly incorporates semantic information, 
such objects can be interpreted as abstractions of a $\dsnk$-model \cite{Lyo21thesis}. 
 This interpretation (given below) gives 
a correspondence between the semantics of Section~\ref{Sect:Logics} and our calculi.

\begin{definition}[Sequent Semantics] 
\label{def:sequent-semantics-dsn} Let $M = (W, \{R_{[i]} \ | \ i \in \Ag\}, \{\opt_{\Oi} \ | \ i \in \Ag\}, V)$ be a $\dsnk$-model with $I : \ \lab \mapsto W$ an \emph{interpretation function} mapping labels to worlds. The \emph{satisfaction} of a relational atom $R_{\agbox}wu$ or $\opt_{\Oi} w$ (written $M, I \models R_{\agbox}wu$, respectively $M, I \models \opt_{\Oi} w$) and a labeled formula $w : \phi$ (written $M, I \models w : \phi$) is defined as: 
\begin{itemize}

\item $M, I \models R_{\agbox}wu$ \ifandonlyif $(\I{w},\I{u}) \in R_{\agbox}$

\item $M, I \models \opt_{\Oi} w$ \ifandonlyif $\I{w} \in \opt_{\Oi}$

\item $M, I \models w : \phi$ \ifandonlyif $M, \I{w} \Vdash \phi$

\end{itemize}

A labeled sequent $\Lambda := \rel \sar \Gamma$ is \emph{satisfied} in $M$ with $I$ (written, $M,I \models \Lambda$) \ifandonlyif if  $M, I \models R_{\agbox}wu$ and $M, I \models \opt_{\Oi} w$ for all $R_{\agbox}wu, \opt_{\Oi} w \in \rel$, then $M, I \models w : \phi$ for some $w : \phi \in \Gamma$. We say that a
labeled sequent $\Lambda$ is \emph{falsified} in $M$ with $I$ \ifandonlyif $M, I \not\models \Lambda$, that is, $\Lambda$ is not satisfied by $M$ with $I$. Last, a labeled sequent $\Lambda$ is \emph{$\dsnk$-valid}\index{$\dsnk$-valid} (written $\models \Lambda$) \ifandonlyif it is satisfiable in every $\dsnk$ model $M$ with every interpretation function $I$. We say that a
labeled sequent $\Lambda$ is \emph{$\dsnk$-invalid}\index{$\dsnk$-invalid} \ifandonlyif $\not\models \Lambda$, i.e.\ $\Lambda$ is not $\dsnk$-valid.
\end{definition}

As shown in \thrm~\ref{thm:gtdsnk-properties} below, the $\gtdsnk$ calculi for deontic $\sti$ logics possess fundamental properties such as the \textit{height-preserving (hp) admissibility} of label substitutions $\sub$ and weakenings $\wk$ (see \fig~\ref{fig:structural-rules}). Moreover, all rules of the calculi are \textit{height-preserving invertible}, and each calculus admits a syntactic proof of cut admissibility. We briefly clarify the meaning of these properties, which will recur throughout the course of this paper, and after, list 
the properties possessed by each $\gtdsnk$ calculus. 

Let the rule $(r)$ be of the form:
\begin{center}
\AxiomC{$\Lambda_{1}$}
\AxiomC{$\ldots$}
\AxiomC{$\Lambda_{n}$}
\RightLabel{$(r)$}
\TrinaryInfC{$\Lambda$}
\DisplayProof
\end{center}
We say that the rule $(r)$ is (\emph{hp}-)\emph{admissible} in a calculus \iffi if $\Lambda_{1}, \ldots, \Lambda_{n}$ have proofs (with heights $h_{1}, \ldots, h_{n}$, respectively), then $\Lambda$ has a proof (with a height $h \leq \max\{h_{1}, \ldots, h_{n}\}$). We let the $i$-\emph{inverse} of $(r)$, written $(\hat{r}_{i})$, be the rule obtained by taking the conclusion of $(r)$ to be the premise of $(\hat{r}_{i})$ and the $i^{th}$ premise of $(r)$ to be the conclusion of $(\hat{r}_{i})$. We say that an $n$-ary rule $(r)$ is (\emph{hp}-)\emph{invertible} in a calculus \iffi $(\hat{r}_{i})$ is  (hp-)admissible for each $1 \leq i \leq n$. Last, a \emph{label substitution} $(w/u)$ is defined in the usual way, that is, for a labeled sequent $\Lambda$, $\Lambda(w/u)$ is obtained by replacing each occurrence of $u$ by $w$ in $\Lambda$. For example, $(R_{\agbox}wu, \opt_{\Oi}u \sar u : \phi)(v/u) = (R_{\agbox}wv, \opt_{\Oi}v \sar v : \phi)$.

Admissibility and invertibility properties serve a twofold purpose: first, such properties are useful in establishing the completeness of the $\gtdsnk$ calculi (\thrm~\ref{thm:completness-gtdsn}), and second, they are useful in showing the decidability of our logics via proof-search (\sect~\ref{Sect:Proof-search}).

\begin{figure}[t]
\noindent\hrule
\begin{center}
\begin{tabular}{c @{\hskip .5em} c @{\hskip .5em} c} 
\AxiomC{$\rel \sar \Gamma$}
\RightLabel{$\sub$}
\UnaryInfC{$\rel(w/u) \sar \Gamma(w/u)$}
\DisplayProof

&

\AxiomC{$\rel \sar \Gamma$}
\RightLabel{$\wk$}
\UnaryInfC{$\rel,\rel' \sar \Gamma,\Gamma'$}
\DisplayProof

&

\AxiomC{$\rel \sar w : \phi,\Gamma$}
\AxiomC{$\rel \sar w : \negnnf{\phi},\Gamma$}
\RightLabel{$\cut$}
\BinaryInfC{$\rel \sar \Gamma$}
\DisplayProof
\end{tabular}
\end{center}
\hrule
\caption{The set $\strucset$ of structural rules consists of the rules above.}
\label{fig:structural-rules}
\end{figure}

\begin{theorem}[{$\gtdsnk$ Properties ~\cite[Ch.~3.4]{Lyo21thesis}}]\label{thm:gtdsnk-properties}\label{thm:completness-gtdsn}\label{thm:soundness-gtdsn}
Let $|\Ag| = n \in \mathbb{N}$ and $k \in \mathbb{N}$.
\begin{enumerate}

\item For all $\phi \in \lang_{n}$, $\gtdsnk \vdash \rel \sar w : \phi, w : \negnnf{\phi}, \Gamma$;

\item The $\sub$ and $\wk$ rules are hp-admissible in $\gtdsnk$;

\item All rules in $\gtdsnk$ are hp-invertible;

\item The rule $\cut$ is admissible in $\gtdsnk$;

\item $\gtdsnk$ is sound, i.e.\ if $\vdash_{\gtdsnk} \Lambda$, then $\models \Lambda$;

\item $\gtdsnk$ is complete, i.e.\ $\Vdash_{\dsnk} \phi$, then $\vdash_{\gtdsnk} w : \phi$.

\end{enumerate}

\end{theorem}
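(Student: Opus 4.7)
The plan is to prove the six items in the stated order, since each relies on those before it. For item~(1), I would argue by induction on the complexity of $\phi$. The base cases $\phi = p$ and $\phi = \negnnf{p}$ are immediate from $\id$. For the inductive step, each connective is handled by applying, in sequence, the rule that introduces the outer operator of $\phi$ and the rule for its dual on $\negnnf{\phi}$ (both read bottom-up), so the result reduces to an application of the IH on the immediate subformula. In the modal cases $\stitr/\stitdiar$ and $\Oir/\ODir$, the first (upward) step conveniently installs the relational atom ($R_{[i]}wu$ or $\opt_{\Oi}u$) that is required as a side condition of the second step, so no extra structural rule is needed.

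For item~(2), I would prove hp-admissibility of $\sub$ and $\wk$ by induction on derivation height. Weakening trivially commutes with every rule of Figure~\ref{fig:calculus}, since every context is freely extendable. For $\sub$, every rule commutes with a uniform label substitution, provided that eigenvariables introduced by $\settr$, $\stitr$, $\Oir$, $\ioa$, and $\dtwo$ are first renamed away from the codomain of the substitution --- itself an hp-admissible operation that folds into the same induction. Item~(3), hp-invertibility of every rule, then splits into two regimes. For rules without eigenvariables, I would induct on derivation height: invertibility is trivial when the principal formula of the last rule differs from the one being inverted, and when they coincide the premise(s) already supply what is needed (up to hp-admissible $\wk$). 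For rules with an eigenvariable, I would first apply hp-admissible $\sub$ from item~(2) to rename the eigenvariable to match the witness we wish to expose, and then proceed as before.

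Item~(4), admissibility of $\cut$, is the main obstacle and calls for the standard Gentzen-style double induction, outer on the complexity of the cut formula and inner on the sum of the heights of the two cut-premise derivations. The case analysis is voluminous but routine in structure: (a) either premise being an initial sequent closes the case directly; (b) when the cut formula is not principal in the last rule of a premise, the cut permutes upward, using hp-invertibility and the hp-admissibility results from items~(2)--(3) to manage the resulting contexts; (c) when the cut formula is principal in the last rule on both sides, the key cases are the modal pairs $\settr/\settdiar$, $\stitr/\stitdiar$, and $\Oir/\ODir$, in each of which $\sub$ is used to instantiate the eigenvariable introduced on the universal side to the witness label supplied by the existential side, producing cuts on strict subformulae that the outer IH dispatches. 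The structural rules $\stitrefl$, $\stiteucl$, $\choicer$, $\dtwo$, $\dthree$, and $\ioa$ commute past cut without incident since they do not act on labeled formulae, modulo preparatory $\sub$ where eigenvariables appear.

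For items~(5) and~(6), soundness follows by induction on derivation height: each rule of Figure~\ref{fig:calculus} is a direct syntactic image of either a semantic clause from Definition~\ref{def:semantics} or one of the frame conditions \textbf{(C1)}--\textbf{(C3)}, \textbf{(D1)}--\textbf{(D3)} of Definition~\ref{def:frames-models}, and each preserves $\dsnk$-validity of sequents in the sense of Definition~\ref{def:sequent-semantics-dsn}. For completeness, I would derive each axiom and rule of Murakami's Hilbert-style axiomatization of $\dsnk$ inside $\gtdsnk$, simulating modus ponens by $\cut$ (admissible by item~(4)) and the necessitations by the corresponding modal rules together with items~(1)--(3); alternatively, in the spirit of this paper, completeness can be obtained from the terminating proof-search procedure of Section~\ref{Sect:Proof-search}, which either constructs a derivation or extracts a $\dsnk$-countermodel from a failed search.
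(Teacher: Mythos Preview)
The paper does not supply its own proof of this theorem; it simply cites \cite[Ch.~3.4]{Lyo21thesis} for all six items. Your proposal follows the standard route for labeled sequent calculi in the Simpson--Negri tradition, and this is almost certainly what the cited thesis does as well: generalized identity by induction on formula complexity, hp-admissibility of $\sub$ and $\wk$ and hp-invertibility by induction on derivation height, cut-elimination by the usual double induction with the modal principal cases handled via $\sub$ on eigenvariables, soundness by rule-local validity preservation, and completeness by simulating a Hilbert system (with modus ponens via $\cut$). One small caution: Murakami's axiomatization covers only the base deontic $\sti$ logic; for the full family $\dsnk$ with the limited-choice schema $\choicer$ you would need the extended axiomatization from \cite{Lyo21thesis} that the paper alludes to just before Definition~\ref{def:semantics}. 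Your alternative suggestion of recovering completeness from the proof-search procedure of Section~\ref{Sect:Proof-search} is legitimate and non-circular (Theorem~\ref{thm:correctness-prove} uses only items~(1) and~(5)), but note that the paper's exposition treats Theorem~\ref{thm:gtdsnk-properties} as prior input to that section, so this route inverts the paper's dependency order.
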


\section{Proof-search and Decidability}\label{Sect:Proof-search}

We now put the $\gtdsnk$ calculi to use and demonstrate that each system serves as a basis for automated reasoning with the corresponding logic $\dsnk$. In particular, we design a proof-search procedure, referred to as $\provenk$ (see \alg~\ref{alg:ProveDSk-logical} below), that decides the validity of each formula $\phi \in \lang_{n}$, and which additionally provides witnesses for the answers it yields. 
That is to say, $\provenk$ bottom-up applies rules from $\gtdsnk$ to the input formula, and when a proof is found (i.e.\ proof-search succeeds), it follows that the input formula is valid; when a proof is not found (i.e.\ proof-search fails), we show that a counter-model witnessing the invalidity of the formula can be extracted.

An interesting feature of our proof-search algorithm is the inclusion of a novel loop-checking mechanism, which ensures that proof-search avoids entering an infinite loop. Although our loop-checking mechanism is sophisticated, it is necessary. In fact, at the end of the section, we give a counter-example to terminating proof-search in the absence of loop-checking, illustrating the need for such a mechanism.

The first 
tool we introduce, to be used during proof-search, is the notion of an \emph{$\rel^{i}$-path}. 
Intuitively, an $\rel^{i}$-path exists between two labels in a set $\rel$ of relational atoms when one label can be reached from the other by means of an undirected sequence of relational atoms of the form $R_{\agbox}wu$. Semantically, an $\rel^{i}$-path existing between two labels
$w$ and $u$ means that both are interpreted as worlds within the same choice-cell.

\begin{definition}[$\rel^{i}$-Path]\label{def:Ri-path} Let $w \sim_{i} u \in \{R_{[i]}wu, R_{[i]}uw\}$ and let $\rel$ be a set of relational atoms. An \emph{$\rel^{i}$-path} of relational atoms from a label $w$ to $u$ occurs in $\rel$ (written $w \sim^{\rel}_{i} u$) \ifandonlyif $w = u$, $w \sim_{i} u$, or there exist labels $v_{i}$ (with $i \in \{0,\ldots,n\}$) such that $w \sim_{i} v_{0}, \ldots, v_{n} \sim_{i} u$ occurs in $\rel$.
\end{definition}

By making use of the above definition, it is straightforward to verify the following:

\begin{lemma}\label{lem:ripath-equiv-rel}
If $\Lambda = \rel \sar \Gamma$ is a sequent, then $\ripath$ is an equivalence relation over $\lab(\Lambda)$.
\end{lemma}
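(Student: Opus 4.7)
The plan is to verify the three defining properties of an equivalence relation (reflexivity, symmetry, transitivity) directly from \dfn~\ref{def:Ri-path}, exploiting the fact that the base relation $\sim_{i}$ is itself symmetric by construction (since $w \sim_{i} u$ is defined as membership in the unordered pair $\{R_{[i]}wu, R_{[i]}uw\}$).

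First, reflexivity is immediate: for any $w \in \lab(\Lambda)$, the first disjunct $w = u$ in \dfn~\ref{def:Ri-path} gives $w \sim^{\rel}_{i} w$ at once. Second, for symmetry I would do a case split on which disjunct witnesses $w \sim^{\rel}_{i} u$. The $w = u$ case is trivial. The $w \sim_{i} u$ case reduces to observing that $\{R_{[i]}wu, R_{[i]}uw\} = \{R_{[i]}uw, R_{[i]}wu\}$, so $u \sim_{i} w$ holds with the same witnessing atom in $\rel$. For the path case, where $w \sim_{i} v_{0}, \ldots, v_{n} \sim_{i} u$ occurs in $\rel$, I simply reverse the sequence: since each $\sim_{i}$-link is symmetric, the reversed chain $u \sim_{i} v_{n}, \ldots, v_{0} \sim_{i} w$ also occurs in $\rel$, witnessing $u \sim^{\rel}_{i} w$.

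Third, for transitivity, suppose $w \sim^{\rel}_{i} u$ and $u \sim^{\rel}_{i} z$. The strategy is to concatenate the two witnessing chains. If either relation is witnessed by the $w = u$ or $u = z$ clause, the conclusion follows trivially from the other hypothesis. Otherwise, let $w \sim_{i} v_{0}, \ldots, v_{n} \sim_{i} u$ and $u \sim_{i} v'_{0}, \ldots, v'_{m} \sim_{i} z$ both occur in $\rel$ (treating the ``single atom'' cases as chains of length one). Concatenating via $u$ as an intermediate label yields a chain $w \sim_{i} v_{0}, \ldots, v_{n} \sim_{i} u \sim_{i} v'_{0}, \ldots, v'_{m} \sim_{i} z$ in $\rel$, so $w \sim^{\rel}_{i} z$.

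I do not expect any obstacle here; the result is essentially bookkeeping on the definition, and the only minor point to handle cleanly is that \dfn~\ref{def:Ri-path} phrases the ``path'' case with a fixed intermediate list $v_{0},\ldots,v_{n}$, so the concatenation argument should be stated so as to cover the degenerate chain lengths (length zero and one) uniformly. No appeal to the calculus rules or to any property of $\Lambda$ beyond the shape $\rel \sar \Gamma$ is needed.
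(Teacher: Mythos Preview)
Your proposal is correct and matches the paper's approach: the paper simply states that the lemma is ``straightforward to verify'' from \dfn~\ref{def:Ri-path} and gives no explicit proof, so your direct check of reflexivity, symmetry, and transitivity is exactly the intended argument spelled out in full.
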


The use of $\rel^{i}$-paths is uniquely beneficial in managing 
bottom-up rule applications of $\ioa$ during proof-search. In essence, given a sequent $\Lambda$ and $n = |\Ag|$ many labels $w_{1}, \ldots, w_{n} \in \lab(\Lambda)$, we check to see if a label $u \in \lab(\Lambda)$ exists such that $w_{i} \sim_{i}^{\rel} u$ for every $i \in \{1,\ldots,n\}$, and if not, then $\ioa$ is applied bottom-up to the sequent during proof-search. This avoids unnecessary applications of the $\ioa$ rule during proof-search since if such a label $u \in \lab(\Lambda)$ exists, then $u$ is interpreted as a world in the intersection of the $n = |\Ag|$ many choice-cells that respectively contain $w_{1}, \ldots, w_{n}$, meaning $\ioa$ need not be bottom-up applied relative to these labels as they already satisfy \textbf{(C2)}.

We introduce a second helpful notion in the management of applications of the $\ioa$ rule, namely, the notion of a \emph{thread}, which is defined below.

\begin{definition}[Thread]\label{def:thread} 
 We define a \emph{thread} from $\rel_{0} \sar \Gamma_{0}$ to $\rel_{h} \sar \Gamma_{h}$ to be a sequence $\thrd = (\rel_{i} \sar \Gamma_{i})_{i=0}^{h}$ of sequents such that
\begin{enumerate}

\item for each $i \in \{0, \ldots, h-1\}$, each sequent $\rel_{i+1} \sar \Gamma_{i+1}$ is obtained from $\rel_{i} \sar \Gamma_{i}$ by a bottom-up application of a rule in $\gtdsnk$ with the former sequent serving as a premise and the latter serving as the conclusion, and

\item if a sequent $\rel_{i+1} \sar \Gamma_{i+1}$ is obtained from $\rel_{i} \sar \Gamma_{i}$ by a bottom-up application of $\ioa$, meaning that $\rel_{i+1} = \rel_{i}, R_{[1]}w_{1}u, \ldots, R_{[n]}w_{1}u$, then none of the labels $w_{1}, \ldots, w_{n}$ were introduced via a prior application of $\ioa$.

\end{enumerate}
Occasionally, we say that a thread $\thrd$ is \emph{from} $\rel_{0} \sar \Gamma_{0}$ or \emph{to} $\rel_{h} \sar \Gamma_{h}$ if we only aim to indicate the first or last element of $\thrd$. Last, we call $h$ the \emph{height} of the thread $\thrd$ and $\rel_{h} \sar \Gamma_{h}$ a \emph{top sequent}.
\end{definition}

The first condition in the above definition states that a thread corresponds to a single path in a `partial' $\gtdsnk$ proof, i.e.\ the path may not necessarily end at an initial sequent. The second condition stipulates that bottom-up applications of $\ioa$ are only applied to labels \emph{not} introduced by prior $\ioa$ applications. We shall see that our proof-search procedure always constructs threads with this second property by definition as this ensures the termination of proof-search (see \lem~\ref{lem:termination-label-intro} and \thrm~\ref{thm:termination-prove} below).

Rather than considering single bottom-up applications of the $\ioa$ rule during proof-search, we define an operation, referred to as $\ioaop$, which sequentially applies $\ioa$ to certain collections of labels from a given sequent.

\begin{definition}[$\ioat$-tuple, $\ioat$-satisfied, $\ioaop$]\label{def:ioa-terminology-operation} Let $\Lambda = \rel \sar \Gamma$ be a sequent, $|\Ag| = n$, and $\thrd = (\rel_{i} \sar \Gamma_{i})_{i=0}^{h}$ be a thread from $\rel_{0} \sar \Gamma_{0}$ to $\Lambda$, i.e.\ $\rel = \rel_{h}$ and $\Gamma = \Gamma_{h}$. 

We define an \emph{$\ioat$-label} in $\Lambda$ to be a label introduced by a bottom-up application of $\ioa$ in $\thrd$, and a \emph{non-$\ioat$-label} to be a label in $\Lambda$ that is not an $\ioat$-label. A tuple $(w_{1}, \ldots, w_{n}) \in \lab(\Lambda)$ of length $n$ is defined to be an \emph{$\ioat$-tuple} \iffi for each $i \in \{1,\ldots,n\}$, the label $w_{i}$ is a non-$\ioat$-label.

We say that a tuple $(w_{1}, \ldots, w_{n}) \in \lab(\Lambda)^{n}$ is \emph{$\ioat$-satisfied} \iffi there exists an $\ioat$-label $u \in Lab(\Lambda)$ such that $w_{i} \sim^{\rel}_{i} u$ for all $i \in Ag$. We say that $\Lambda$ is $\ioat$-satisfied \iffi each tuple in $\lab(\Lambda)^{n}$ is $\ioat$-satisfied.

Let $\mathbf{I}$ be the set of all $\ioat$-tuples of $\Lambda$ that are not $\ioat$-satisfied. We define
$$
\mathtt{IoaOp}(\Lambda) := \rel, \bigcup_{\vec{w} \in \mathbf{I}} R_{[i]}w_{i}u \sar \Gamma
$$
with $u$ fresh for each $\ioat$-tuple $\vec{w} = (w_{1}, \ldots, w_{n}) \in \mathbf{I}$.
\end{definition}

There are two reasons for the introduction of the above operation: first, after an application of $\ioaop$ we are ensured that if we transform our labeled sequent into a 
counter-model, 
the model will satisfy condition $\ioacond$ (independence of agents)  which is required in proving our proof-search procedure correct (see \lem~\ref{lem:IOA-sat}). Second, this operation limits bottom-up applications of $\ioa$ to only certain labels, namely, labels that were not introduced by bottom-up applications of $\ioa$. This ensures that bottom-up applications of $\ioa$ do not enter into an infinite loop during proof-search, introducing a fresh label ad infinitum.

We may disregard proofs where the $\stitr$ rule is applied to $\ioat$-labels, and so, $\ioat$-labels will never give rise to new relational atoms of the form $R_{\agbox}uv$ during proof-search, ultimately facilitating termination. This point is discussed in more detail in  \rmk~\ref{rmk:admiss-rules-use}.

\begin{lemma}\label{lem:IOA-sat}
Let $\thrd = (\rel_{i} \sar \Gamma_{i})_{i=0}^{h}$ be a thread where $\Lambda = \rel_{h-1} \sar \Gamma_{h-1}$ and $\Lambda' = \rel_{h} \sar \Gamma_{h} = \ioaop(\Lambda)$. If $(w_{1}, \ldots, w_{n}) \in \lab(\Lambda')^{n}$, then 
$(w_{1}, \ldots, w_{n})$ is $\ioat$-satisfied in $\Lambda'$. 
\end{lemma}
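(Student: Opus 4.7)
My plan is to reduce the claim for an arbitrary $n$-tuple in $\lab(\Lambda')^n$ to the case of an $\ioat$-tuple (one consisting entirely of non-$\ioat$-labels), since the latter is the situation directly handled by $\ioaop$.

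First, I would establish an auxiliary observation: for every $\ioat$-label $v \in \lab(\Lambda')$ and every $k \in \Ag$ there exists a non-$\ioat$-label $u \in \lab(\Lambda')$ with $R_{[k]} u v \in \rel_h$, so that $v \sim_k^{\rel_h} u$. The reason is that each $\ioat$-label was introduced by some application of $\ioa$ -- either earlier in $\thrd$ or as part of the current $\ioaop$ step -- and by Definition~\ref{def:thread}(2) together with the definition of $\ioaop$, such an application always places the fresh label in the $R_{[k]}$-successor position of non-$\ioat$-labels for every $k \in \Ag$.

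Next, given an arbitrary $\vec{w} = (w_1, \ldots, w_n) \in \lab(\Lambda')^n$, I build an associated tuple $\vec{w}' = (w_1', \ldots, w_n')$ by taking $w_i' := w_i$ when $w_i$ is a non-$\ioat$-label, and otherwise using the observation to pick a non-$\ioat$-label $w_i'$ with $w_i \sim_i^{\rel_h} w_i'$. Any label in $\lab(\Lambda') \setminus \lab(\Lambda)$ is one of the fresh labels introduced by $\ioaop$ and hence an $\ioat$-label, so the non-$\ioat$-labels of $\Lambda$ and of $\Lambda'$ coincide; consequently $\vec{w}'$ is an $\ioat$-tuple in both $\Lambda$ and $\Lambda'$. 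The goal then reduces to showing that $\vec{w}'$ is $\ioat$-satisfied in $\Lambda'$, after which transitivity and symmetry of $\sim_i^{\rel_h}$ provided by Lemma~\ref{lem:ripath-equiv-rel} transports a witness back to $\vec{w}$.

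To handle $\vec{w}'$, I split on whether it was already $\ioat$-satisfied in $\Lambda$: if so, the witnessing $\ioat$-label $u$ persists because $\rel_{h-1} \subseteq \rel_h$ and $u$ remains an $\ioat$-label in $\Lambda'$; if not, then $\vec{w}' \in \mathbf{I}$, and the definition of $\ioaop$ supplies a fresh $\ioat$-label $u^{\vec{w}'}$ together with $R_{[i]} w_i' u^{\vec{w}'} \in \rel_h$ -- hence $w_i' \sim_i^{\rel_h} u^{\vec{w}'}$ -- for every $i \in \Ag$. Composing this $u$ with $w_i \sim_i^{\rel_h} w_i'$ via Lemma~\ref{lem:ripath-equiv-rel} then yields $w_i \sim_i^{\rel_h} u$ for every $i \in \Ag$, witnessing that $\vec{w}$ is $\ioat$-satisfied in $\Lambda'$. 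The main obstacle -- which the above reduction is designed to overcome -- is that $\ioaop$ only directly attends to $\ioat$-tuples built from non-$\ioat$-labels, while the lemma quantifies over arbitrary tuples in $\lab(\Lambda')^n$; the decisive move is the auxiliary observation, and it relies crucially on the restriction imposed by Definition~\ref{def:thread}(2) on how $\ioa$ may be applied during proof-search.
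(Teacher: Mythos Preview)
Your proposal is correct and follows essentially the same strategy as the paper: reduce an arbitrary tuple to an $\ioat$-tuple by replacing each $\ioat$-label with a non-$\ioat$ predecessor supplied by the $\ioa$ application that introduced it, then invoke the definition of $\ioaop$ (together with Lemma~\ref{lem:ripath-equiv-rel}) to obtain and transport a witness. The only structural difference is that the paper organises the argument as a two-level case split---first on whether the tuple lies in $\lab(\Lambda)^{n}$, then within that case replacing old $\ioat$-labels---whereas you replace all $\ioat$-labels in one pass and instead make explicit the (in the paper, tacit) case split on whether the resulting $\ioat$-tuple was already $\ioat$-satisfied in $\Lambda$ or belongs to $\mathbf{I}$; this is a mild streamlining rather than a different idea.
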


\begin{proof} Let $(w_{1}, \ldots, w_{n}) \in \lab(\Lambda')$. We have two cases: either (i) $(w_{1}, \ldots, w_{n}) \in \lab(\Lambda)^{n}$, or (ii) $(w_{1}, \ldots, w_{n}) \not\in \lab(\Lambda)^{n}$.

(i) Suppose that $(w_{1}, \ldots, w_{n}) \in \lab(\Lambda)^{n}$ and let $\{i_{1}, \ldots, i_{m}\} \subseteq \{1, \ldots, n\}$ such that each $w_{i_{j}}$ with $j \in \{1, \ldots, m\}$ is an $\ioat$-label. Then, since each $\ioat$-label $w_{i_{j}}$ was introduced via a bottom-up application of $\ioa$, there exists a label $u_{i_{j}}$ such that $R_{[i_{j}]}u_{i_{j}}w_{i_{j}} \in \rel_{h-1}$. Let $(v_{1}, \ldots, v_{n})$ be the tuple obtained by replacing each occurrence of $w_{i_{j}}$ in $(w_{1}, \ldots, w_{n})$ with $u_{i_{j}}$. Observe that $(v_{1}, \ldots, v_{n})$ is an $\ioat$-tuple since no $v_{i}$ with $i \in \{1,\ldots, n\}$ is an $\ioat$-label by \dfn~\ref{def:thread} (i.e.\ 
threads only allow bottom-up applications of the $\ioa$ rule to non-$\ioat$-labels). Then, $(v_{1}, \ldots, v_{n})$ is $\ioat$-satisfied in $\Lambda'$ as $\ioaop$ was applied, implying the existence of a label $z$ such that $v_{i} \sim_{i}^{\rel_{h}} z$ for all $i \in \{1, \ldots, n\}$. Since $R_{[i_{j}]}u_{i_{j}}w_{i_{j}} \in \rel_{h-1} \subseteq \rel_{h}$, we also have that $u_{i_{j}} \sim_{i}^{\rel_{h}} w_{i_{j}}$ for all $j \in \{1, \ldots, m\}$, implying that $u_{i_{j}} \sim_{i}^{\rel_{h}} z$ for each $u_{i_{j}}$ by \lem~\ref{lem:ripath-equiv-rel}. It therefore follows that $(w_{1}, \ldots, w_{n})$ is $\ioat$-satisfied in $\Lambda'$.

(ii) Let us assume that $(w_{1}, \ldots, w_{n}) \not\in \lab(\Lambda)^{n}$, meaning that the following set
$$
N_{1} := \{i \in \{1, \ldots, n\} \ | \ w_{i} \in \lab(\Lambda') \setminus \lab(\Lambda)\}
$$
is non-empty. We define $N_{2} := \{1, \ldots, n\} \setminus N_{1}$, and note that $N_{1}$ contains the indicies of all fresh labels $w_{i}$ introduced by $\ioaop$ to $\Lambda'$ and $N_{2}$ contains the indicies of all labels $w_{j}$ occurring in $\Lambda$. Since each $w_{i}$ with $i \in N_{1}$ was introduced by $\ioaop$, we know that there exists a label $u_{i} \in \lab(\Lambda)$ such that $R_{\agbox}u_{i}w_{i} \in \rel_{h}$, i.e.\ $w_{i} \sim^{\rel_{h}}_{i} u_{i}$. If we consider the tuple $(v_{1}, \ldots, v_{n})$ obtained by replacing each occurrence of $w_{i}$ in $(w_{1}, \ldots, w_{n})$ with each such $u_{i}$, then $(v_{1}, \ldots, v_{n}) \in \lab(\Lambda)$, implying that $(v_{1}, \ldots, v_{n})$ is $\ioat$-satisfied by case (i) above. Hence, there exists a label $z$ such that $v_{i} \sim^{\rel}_{i} z$ for each $i \in \{1, \ldots, n\}$. For any label $w_{i}$ in $(w_{1}, \ldots, w_{n})$ with $i \in N_{1}$, we have that $w_{i} \sim^{\rel_{h}}_{i} u_{i}$ and $u_{i} \sim^{\rel_{h}}_{i} z$, meaning that $w_{i} \sim^{\rel}_{i} z$ for each $i \in N_{1}$ by \lem~\ref{lem:ripath-equiv-rel}. Since $w_{j} \sim^{\rel_{h}}_{j} z$ holds for each $j \in N_{2}$ as well, we have that $(w_{1}, \ldots, w_{n})$ is indeed $\ioat$-satisfied in $\Lambda'$.
\end{proof}

We now introduce and prove the admissibility of a selection of inference rules (shown in \fig~\ref{fig:admiss-rules-for-proof-search}). The $\settlc$, $\Oilc$, and $\stitlc$ rules are particularly helpful in controlling how $\Box$, $\Oi$, and $\agbox$ formulae are processed during proof-search. The $\settlc$ and $\Oilc$ rules allow for the label of the principal formula to be changed, and the $\stitlc$ rule allows for the principal formula $u : \agbox \phi$ to be `unpacked' at a label $w$ so long as a relational atom of the form $R_{\agbox}wu$ is present in the sequent. The $\stitdialc$ rule plays a crucial role in ensuring the correctness of our proof-search algorithm (see \thrm~\ref{thm:correctness-prove}, the $\stitdiam \xi$ case, case (ii\textsuperscript{*}) of the inductive step on p.~\pageref{pref:NB-on-admissible-stitdia-rule}). Last, the $\stitsym$ rule is helpful in showing the admissibility of $\stittrans$, which is used in establishing the admissibility of $\stitdialc$.

Such rules will be used to change the functionality of proof-search so that labels are only introduced via the $\settlc$, $\Oilc$, $\stitr$, and $\stitlc$ rules in a ``treelike manner'' as witnessed by \lem~\ref{lem:gen-tree-is-tree} below and which gives rise to a loop-checking mechanism (see \dfn~\ref{def:blocking}) employed in proof-search to ensure termination. We come back to this later, in \rmk~\ref{rmk:admiss-rules-use}.

\begin{figure}[t]
\noindent\hrule

\begin{center}
\begin{tabular}{c c}
\AxiomC{$\rel, \R_{[i]}wu, \R_{[i]}uw \sar \Gamma$}
\RightLabel{$\stitsym$}
\UnaryInfC{$\rel, \R_{[i]}wu \sar \Gamma$}
\DisplayProof

&

\AxiomC{$\rel, \R_{[i]}wu, \R_{[i]}uv, \R_{[i]}wv  \sar \Gamma$}
\RightLabel{$\stittrans$}
\UnaryInfC{$\rel, \R_{[i]}wu, \R_{[i]}uv \sar \Gamma$}
\DisplayProof
\end{tabular}
\end{center}

\begin{center}
\begin{tabular}{c c}
\AxiomC{$\rel, \R_{[i]}wu, \R_{[i]}wv \sar u : [i] \phi, v : \phi, \Gamma$}
\RightLabel{$\stitlc^{\dag}$}
\UnaryInfC{$\rel, \R_{[i]}wu \sar u : [i] \phi, \Gamma$}
\DisplayProof

&

\AxiomC{$\rel \sar w : \Box \phi, \Gamma$}
\RightLabel{$\settlc$}
\UnaryInfC{$\rel \sar u : \Box \phi, \Gamma$}
\DisplayProof
\end{tabular}
\end{center}

\begin{center}
\begin{tabular}{c c}
\AxiomC{$\rel, R_{\agbox}wu \sar w : \agdia \phi, u : \agdia \phi, u : \phi, \Gamma$}
\RightLabel{$\stitdialc$}
\UnaryInfC{$\rel, R_{\agbox}wu \sar w : \agdia \phi, \Gamma$}
\DisplayProof

&

\AxiomC{$\rel \sar w : \Oi \phi, \Gamma$}
\RightLabel{$\Oilc$}
\UnaryInfC{$\rel \sar u : \Oi \phi, \Gamma$}
\DisplayProof
\end{tabular}
\end{center}

\hrule
\caption{Admissible rules. The side condition $\dag$ stipulates that $v$ must be an eigenvariable.}
\label{fig:admiss-rules-for-proof-search}
\end{figure}

\begin{lemma}\label{lem:admissible-rules} All rules in \fig~\ref{fig:admiss-rules-for-proof-search} are admissible in $\gtdsnk$.
\end{lemma}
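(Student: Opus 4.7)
The plan is to establish admissibility of the six rules in an order that lets each bootstrap on earlier results, relying on the hp-admissibility of $\sub$ and $\wk$, the hp-invertibility of every $\gtdsnk$-rule, and the admissibility of $\cut$ granted by Theorem~\ref{thm:gtdsnk-properties}. For $\settlc$, I would invoke hp-invertibility of $\settr$ applied to a derivation of $\rel \sar w : \Box \phi, \Gamma$ to obtain $\rel \sar z : \phi, \Gamma$ for a fresh $z$, and then re-apply $\settr$ to produce $\rel \sar u : \Box \phi, \Gamma$; the same pattern using $\Oir$ handles $\Oilc$.

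For $\stitsym$, from a derivation of $\rel, R_{[i]}wu, R_{[i]}uw \sar \Gamma$, I would weaken in $R_{[i]}ww$, then apply $\stiteucl$ with labels $w,u,w$ to absorb $R_{[i]}uw$ and obtain $\rel, R_{[i]}wu, R_{[i]}ww \sar \Gamma$, and finally apply $\stitrefl$ to discard $R_{[i]}ww$. With $\stitsym$ in place, $\stittrans$ follows analogously: from $\rel, R_{[i]}wu, R_{[i]}uv, R_{[i]}wv \sar \Gamma$, weaken in $R_{[i]}uw$, apply $\stiteucl$ with labels $u,w,v$ to absorb $R_{[i]}wv$, then apply $\stitsym$ to remove $R_{[i]}uw$. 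For $\stitlc$, starting from the premise $\rel, R_{[i]}wu, R_{[i]}wv \sar u : [i]\phi, v : \phi, \Gamma$, I would invert $\stitr$ on $u : [i]\phi$ to obtain $\rel, R_{[i]}wu, R_{[i]}wv, R_{[i]}uz \sar z : \phi, v : \phi, \Gamma$ for fresh $z$; substitute $v$ for $z$ to merge the two $\phi$-occurrences, giving $\rel, R_{[i]}wu, R_{[i]}wv, R_{[i]}uv \sar v : \phi, \Gamma$; apply admissible $\stittrans$ to remove $R_{[i]}wv$; and finally apply $\stitr$ with $v$ as eigenvariable, which is legitimate because $v$ is fresh in the conclusion of $\stitlc$.

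The main obstacle is $\stitdialc$, since no invertibility or structural rule in $\gtdsnk$ directly drops a labeled $\agdia$-formula from the consequent. My approach is to apply $\cut$ on the labeled formula $u : \agbox \negnnf{\phi}$. The right cut-premise, $\rel, R_{[i]}wu \sar w : \agdia\phi, u : \agdia\phi, \Gamma$, is obtained from the hypothesis by weakening in $R_{[i]}uu$, applying $\stitdiar$ with principal $u : \agdia\phi$ and relational atom $R_{[i]}uu$ to consume the witness $u : \phi$, and then applying $\stitrefl$ to remove $R_{[i]}uu$. The left cut-premise, $\rel, R_{[i]}wu \sar u : \agbox\negnnf{\phi}, w : \agdia\phi, \Gamma$, is built bottom-up: $\stitr$ introduces a fresh $z$ and relational atom $R_{[i]}uz$; admissible $\stittrans$ introduces $R_{[i]}wz$; $\stitdiar$ on $w : \agdia\phi$ via $R_{[i]}wz$ introduces the witness $z : \phi$; and the resulting sequent is closed by the generalized identity (item~1 of Theorem~\ref{thm:gtdsnk-properties}) on the complementary pair $z : \negnnf{\phi}, z : \phi$. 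A final $\cut$ then assembles the goal $\rel, R_{[i]}wu \sar w : \agdia\phi, \Gamma$.
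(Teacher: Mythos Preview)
Your proposal is correct. For $\stitsym$, $\stittrans$, $\settlc$, and $\Oilc$ you follow essentially the same route as the paper. The interesting divergences are in $\stitlc$ and $\stitdialc$.

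For $\stitlc$, the paper takes a heavier path: it builds two auxiliary derivations and uses $\cut$ twice (once on $u:\agdia\agbox\phi$ and once on a related formula), appealing to the generalized identity and to $\wk$ several times. Your argument---invert $\stitr$ on $u:[i]\phi$, substitute the eigenvariable $v$ for the fresh $z$ (which is legitimate since $\sub$ is hp-admissible and consequents are sets, so the two copies of $v:\phi$ merge), remove $R_{[i]}wv$ via the already-established $\stittrans$, and re-apply $\stitr$ with $v$ as eigenvariable---is cut-free and noticeably shorter. The paper's approach, by contrast, never exploits the hp-invertibility of $\stitr$ for this rule; your use of invertibility plus substitution is what buys the simplification.

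For $\stitdialc$, both arguments use a single $\cut$, but on different formulae: the paper cuts on $w:\agdia\agdia\phi$ versus $w:\agbox\agbox\negnnf{\phi}$, whereas you cut on $u:\agdia\phi$ versus $u:\agbox\negnnf{\phi}$. Your choice keeps the cut formula at modal depth one and lets the ``identity'' branch close after one $\stitr$, one admissible $\stittrans$, and one $\stitdiar$; the paper needs two nested $([i])$-applications and a $\stiteucl$ step to reach the axiom. Both are sound, but yours is the more economical construction.
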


\begin{proof} We first show the admissibility of the rules $\stitsym$, $\stittrans$, $\settlc$, and $\Oilc$. The first two rules
leverage the hp-admissibility of $\wk$ and the latter two rules
rely on the hp-invertibility of $\settr$ and $\Oir$.

\begin{center}
\scalebox{0.93}{
\begin{tabular}{c c}
\AxiomC{$\rel, \R_{[i]}wu, \R_{[i]}uw \sar \Gamma$}
\RightLabel{$\wk$}
\dashedLine
\UnaryInfC{$\rel, \R_{[i]}ww, \R_{[i]}wu, \R_{[i]}uw \sar \Gamma$}
\RightLabel{$\stiteucl$}
\UnaryInfC{$\rel, \R_{[i]}ww, \R_{[i]}wu \sar \Gamma$}
\RightLabel{$\stitrefl$}
\UnaryInfC{$\rel, \R_{[i]}wu \sar \Gamma$}
\DisplayProof

&

\AxiomC{$\rel \sar w : \Box \phi, \Gamma$}
\RightLabel{\thrm~\ref{thm:gtdsnk-properties}-3}
\dashedLine
\UnaryInfC{$\rel \sar v : \phi, \Gamma$}
\RightLabel{$\settr$}
\UnaryInfC{$\rel \sar u : \Box \phi, \Gamma$}
\DisplayProof
\end{tabular}
}
\end{center}

\begin{center}
\scalebox{0.93}{
\resizebox{\columnwidth}{!}{
\begin{tabular}{c c}
\AxiomC{$\rel, \R_{[i]}wu, \R_{[i]}uv, \R_{[i]}wv  \sar \Gamma$}
\RightLabel{$\wk$}
\dashedLine
\UnaryInfC{$\rel, \R_{[i]}wu, \R_{[i]}uw, \R_{[i]}uv, \R_{[i]}wv  \sar \Gamma$}
\RightLabel{$\stiteucl$}
\UnaryInfC{$\rel, \R_{[i]}wu, \R_{[i]}uw, \R_{[i]}uv \sar \Gamma$}
\RightLabel{$\stitsym$}
\dashedLine
\UnaryInfC{$\rel, \R_{[i]}wu, \R_{[i]}uv \sar \Gamma$}
\DisplayProof

&

\AxiomC{$\rel \sar w : \Oi \phi, \Gamma$}
\RightLabel{\thrm~\ref{thm:gtdsnk-properties}-3}
\dashedLine
\UnaryInfC{$\rel, \ideal v \sar v : \phi, \Gamma$}
\RightLabel{$\Oir$}
\UnaryInfC{$\rel \sar u : \Oi \phi, \Gamma$}
\DisplayProof
\end{tabular}
}
}
\end{center}
To show the admissibility of $\stitlc$, we invoke the hp-admissibility of $\wk$ as well as $\cut$ admissibility (\thrm~\ref{thm:gtdsnk-properties}).

\begin{center}
\scalebox{0.93}{
\begin{tabular}{c c}
$\Pi_{1} = \Bigg \{$

&

\AxiomC{}
\RightLabel{\thrm~\ref{thm:gtdsnk-properties}}
\dashedLine
\UnaryInfC{$\rel, \R_{[i]}wu, \R_{[i]}ww \sar u : \agdia [i] \phi, u : \agdia \negnnf{\phi}, u : \agbox \phi, w : \agbox \phi, \Gamma$}
\RightLabel{$\stitdiar$}
\UnaryInfC{$\rel, \R_{[i]}wu, \R_{[i]}ww \sar u : \agdia [i] \phi, u : \agdia \negnnf{\phi}, w : [i] \phi, \Gamma$}
\RightLabel{$\stitrefl$}
\UnaryInfC{$\rel, \R_{[i]}wu \sar u : \agdia [i] \phi, u : \agdia \negnnf{\phi}, w : [i] \phi, \Gamma$}
\DisplayProof
\end{tabular}
}
\end{center}

\begin{center}
\scalebox{0.93}{
\begin{tabular}{c c}
$\Pi_{2} = \Bigg \{$

&

\AxiomC{$\rel, \R_{[i]}wu, \R_{[i]}wv \sar u : [i] \phi, v : \phi, \Gamma$}
\RightLabel{$\stitr$}
\UnaryInfC{$\rel, \R_{[i]}wu \sar w : [i] \phi, u : [i] \phi, \Gamma$}
\RightLabel{$\wk$}
\dashedLine
\UnaryInfC{$\rel, \R_{[i]}wu \sar w : [i] \phi, u : [i] \phi, u : \agdia \agbox \phi, \Gamma$}
\AxiomC{$\Pi_{1}$}
\dashedLine
\RightLabel{$\cut$}
\dashedLine
\BinaryInfC{$\rel, \R_{[i]}wu \sar w : \agbox \phi, u : \agdia \agbox \phi, \Gamma$}
\RightLabel{$\stitdiar$}
\UnaryInfC{$\rel, \R_{[i]}wu \sar u : \agdia \agbox \phi, \Gamma$}
\RightLabel{$\wk$}
\dashedLine
\UnaryInfC{$\rel, \R_{[i]}wu \sar u : \agdia \agbox \phi, u : [i] \phi, \Gamma$}
\DisplayProof
\end{tabular}
}
\end{center}
\smallskip

\begin{center}
\scalebox{0.93}{
\begin{tabular}{c c}
$\Pi_{3} = \Bigg \{$

&

\AxiomC{}
\RightLabel{\thrm~\ref{thm:gtdsnk-properties}}
\dashedLine
\UnaryInfC{$\rel, \R_{[i]}wu, \R_{[i]}uz, \R_{[i]}uv, \R_{[i]}zv  \sar z : \agdia \negnnf{\phi}, v : \phi, v : \negnnf{\phi}, \Gamma$}
\RightLabel{$\stitdiar$}
\UnaryInfC{$\rel, \R_{[i]}wu, \R_{[i]}uz, \R_{[i]}uv, \R_{[i]}zv   \sar z : \agdia \negnnf{\phi}, v : \phi, \Gamma$}
\RightLabel{$\stiteucl$}
\UnaryInfC{$\rel, \R_{[i]}wu, \R_{[i]}uz, \R_{[i]}uv \sar z : \agdia \negnnf{\phi}, v : \phi, \Gamma$}
\RightLabel{$(\agbox) \cdot 2$}
\UnaryInfC{$\rel, \R_{[i]}wu \sar u : \agbox \agdia \negnnf{\phi}, u : \agbox \phi, \Gamma$}
\DisplayProof
\end{tabular}
}
\end{center}

\smallskip
\begin{center}
\scalebox{0.93}{
\AxiomC{$\Pi_{3}$}
\AxiomC{$\Pi_{2}$}
\RightLabel{$\cut$} 
\dashedLine
\BinaryInfC{$\rel, \R_{[i]}wu \sar u : \agbox \phi, \Gamma$}
\DisplayProof
}
\end{center}
The admissibility of $\stitdialc$ follows from the hp-admissibility of $\wk$ and cut admissibility. 

\smallskip
\begin{center}
\scalebox{0.93}{
\begin{tabular}{c c}
$\Pi_{1} = \Bigg \{$

&

\AxiomC{$\rel, R_{\agbox}wu \sar w : \agdia \phi, u : \agdia \phi, u : \phi, \Gamma$}
\RightLabel{$\stitdiar$}
\UnaryInfC{$\rel, R_{\agbox}wu \sar w : \agdia \phi, u : \agdia \phi, \Gamma$}
\RightLabel{$\wk$}
\dashedLine
\UnaryInfC{$\rel, R_{\agbox}wu \sar w : \agdia \phi, w : \agdia \agdia \phi, u : \agdia \phi, \Gamma$}
\RightLabel{$\stitdiar$}
\UnaryInfC{$\rel, R_{\agbox}wu \sar w : \agdia \phi, w : \agdia \agdia \phi, \Gamma$}
\DisplayProof
\end{tabular}
}
\end{center}

\smallskip

\begin{center}
\scalebox{0.93}{
\begin{tabular}{c c}
$\Pi_{2} = \Bigg \{$

&

\AxiomC{}
\RightLabel{\thrm~\ref{thm:gtdsnk-properties}}
\dashedLine
\UnaryInfC{$\rel, R_{\agbox}wu, R_{\agbox}wv', R_{[i]}wu', R_{[i]}u'v'  \sar v' : \negnnf{\phi}, v' : \phi, w : \agdia \phi, \Gamma$}
\RightLabel{$\stitdiar$}
\UnaryInfC{$\rel, R_{\agbox}wu, R_{\agbox}wv', R_{[i]}wu', R_{[i]}u'v'  \sar v' : \negnnf{\phi}, w : \agdia \phi, \Gamma$}
\RightLabel{$\stittrans$}
\UnaryInfC{$\rel, R_{\agbox}wu, R_{[i]}wu', R_{[i]}u'v'  \sar v' : \negnnf{\phi}, w : \agdia \phi, \Gamma$}
\RightLabel{$([i])\cdot 2$}
\UnaryInfC{$\rel, R_{\agbox}wu \sar w : \agbox \agbox \negnnf{\phi}, w : \agdia \phi, \Gamma$}
\DisplayProof
\end{tabular}
}
\end{center}
\smallskip

\begin{center}
\scalebox{0.93}{
\AxiomC{$\Pi_{1}$}
\AxiomC{$\Pi_{2}$}
\RightLabel{$\cut$}
\dashedLine
\BinaryInfC{$\rel \sar w : \agdia \phi, \Gamma$}
\DisplayProof
}\end{center}
\vspace{-0.3cm}
\end{proof}

We now introduce the fundamental component of our loop-checking mechanism, i.e.\ the notion of a \emph{generation tree}. A generation tree is a graph that tracks how certain labels are introduced within a thread, and can be used to check if a label $w$ has an ancestor $u$ associated with the same set of formulae (in which case we say that $w$ is \emph{blocked}; see \dfn~\ref{def:blocking} below). In such a case, the $\stitr$ rule need not be applied to the leaf of the path containing $w$ and $u$ in the 
tree. As discussed in \lem~\ref{lem:termination-label-intro} below, this loop-checking mechanism bounds the depth of the generation tree, ultimately securing terminating proof-search.

\begin{definition}[Generation Tree]\label{def:generation-tree} Let $\thrd = (\rel_{i} \sar \Gamma_{i})_{i=0}^{h}$ be a thread from $\emptyset \sar w : \phi$ to $\Lambda$. We define the \emph{generation tree} $\gt_{\Lambda}^{\thrd} = (V,E)$ as follows:
\begin{itemize}

\item $w \in V$;

\item if $\settr$ is applied bottom-up in $\thrd$ with $w : \Box \phi$ principal and $u : \phi$ auxiliary, then add $u$ to $V$ and $(w,u)$ to $E$;

\item if $\Oir$ is applied bottom-up in $\thrd$ with $w : \Oi \phi$ principal, and $\ideal u$ and $u : \phi$ auxiliary, then add $u$ to $V$ and $(w,u)$ to $E$;

\item if $\dtwo$ is applied bottom-up in $\thrd$ with $\ideal u$ auxiliary, 
add $u$ to $V$ and $(w,u)$ to $E$;

\item if $\stitr$ is applied bottom-up in $\thrd$ with $u : \agbox \phi$ principal and $v : \phi$ auxiliary, then add $v$ to $V$ and $(u,v)$ to $E$;

\item if $\stitlc$ is applied bottom-up in $\thrd$ with $R_{\stitbox}uv$ and $v : \agbox \phi$ principal, and $R_{\stitbox}uz$ and $z : \phi$ auxiliary, then add $z$ to $V$ and $(u,z)$ to $E$.\footnote{We note that 
$u$ must already occur in $V$, so we are indeed permitted to add $(u,z)$ to $E$ (see \lem~\ref{lem:gen-tree-is-tree}).}

\end{itemize}
Note that in the $\settr$ and $\Oir$ cases, the edge $(w,u)$ added to $E$ always ensures that $u$ is a child of the root $w$. The significance of this is explained later on in \rmk~\ref{rmk:admiss-rules-use}.
\end{definition}

\begin{lemma}\label{lem:gen-tree-is-tree}
Let $\thrd = (\rel_{i} \sar \Gamma_{i})_{i=0}^{h}$ be a thread from $\emptyset \sar w : \phi$. The generation tree $\gt^{\thrd}_{\Lambda} = (V,E)$ is a tree.
\end{lemma}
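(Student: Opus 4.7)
The plan is to argue by induction on the height $h$ of the thread $\thrd$. In the base case $h=0$, we have $\Lambda = (\emptyset \sar w:\phi)$, so by \dfn~\ref{def:generation-tree} $V = \{w\}$ and $E = \emptyset$, which is trivially a tree.

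For the inductive step, let $\thrd'$ be the truncation of $\thrd$ to height $h-1$ with endpoint $\Lambda' = \rel_{h-1}\sar\Gamma_{h-1}$, and assume $\gt^{\thrd'}_{\Lambda'} = (V', E')$ is a tree. Let $(r)$ be the rule applied bottom-up from $\Lambda'$ to obtain $\Lambda$. If $(r)$ is not among the five tree-modifying rules $\settr, \Oir, \dtwo, \stitr, \stitlc$, then $(V,E) = (V',E')$ and the claim follows immediately. Otherwise, inspection of \dfn~\ref{def:generation-tree} shows that the step adjoins a single new vertex $z$ and a single new edge $(s,z)$. Since appending to a tree a new vertex connected by one edge to an existing vertex yields a tree (the operation of attaching a leaf), it suffices to verify that (i)~$z \notin V'$ and (ii)~$s \in V'$.

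Condition (i) follows in each of the five rule cases from the eigenvariable side condition: the auxiliary label $z$ is fresh with respect to $\Lambda'$, so in particular $z \notin V' \subseteq \lab(\Lambda')$. For condition (ii), the $\settr$, $\Oir$, and $\dtwo$ cases are direct: by the definition of the generation tree, the source $s$ is the root $w$, which lies in $V'$ from the base case (and since vertices are never removed). The $\stitr$ and $\stitlc$ cases are the delicate ones — indeed, this is exactly the point flagged by the footnote to \dfn~\ref{def:generation-tree}. Here $s$ is the label of the principal labeled formula (respectively the origin label of the principal relational atom), and we must justify $s \in V'$.

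The main obstacle, then, is cases (ii) for $\stitr$ and $\stitlc$, which I plan to resolve by strengthening the inductive hypothesis with an auxiliary invariant: \emph{every non-$\ioat$-label occurring in $\Lambda'$ is already an element of $V'$}. In the preservation argument, every rule application other than $\ioa$ either introduces a fresh label simultaneously into $V$ (the tree-building rules) or introduces no new labels at all (propositional, structural, and modal-diamond rules); and $\ioa$ introduces only $\ioat$-labels, which are excluded from the invariant. Combining this invariant with the design convention that $\stitr$ and $\stitlc$ are applied only with a non-$\ioat$-label as the source (the same convention used to ensure terminating proof-search, cf.\ \lem~\ref{lem:termination-label-intro}), we conclude $s \in V'$ in the remaining two cases, closing the induction.
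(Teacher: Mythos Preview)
Your approach is essentially the paper's: induction on the height of the thread, trivial base case, no change for non-tree-building rules, and for each of the five tree-building rules verify that one fresh leaf is attached to an existing vertex. Your explicit auxiliary invariant (every non-$\ioat$-label of $\Lambda'$ lies in $V'$) is a clean way of packaging what the paper uses only implicitly when it writes ``either $u = w$ or the label was introduced by a prior bottom-up application of $\settr$, $\Oir$, $\dtwo$, $\stitr$, or $\stitlc$. Therefore, $u \in V'$.''

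There is one point of divergence worth flagging. To conclude $s \in V'$ in the $\stitr$ and $\stitlc$ cases, you invoke a ``design convention'' that these rules are only applied with a non-$\ioat$-label as source, citing the proof-search algorithm and \lem~\ref{lem:termination-label-intro}. But \lem~\ref{lem:gen-tree-is-tree} is stated for an arbitrary thread in the sense of \dfn~\ref{def:thread}, and that definition places no restriction on the principal label of $\stitr$ or on the source label of the principal relational atom in $\stitlc$; the only constraint is condition~(2), governing $\ioa$. The paper's proof does not appeal to the algorithm here: for $\stitlc$ it argues directly from the thread condition on $\ioa$ (namely that $\ioa$ is never applied to an $\ioat$-label) to conclude that the source $u$ of the principal atom $R_{\agbox}uv$ cannot itself be an $\ioat$-label, whence $u \in V'$. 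For $\stitr$ the paper simply groups it with $\settr$, $\Oir$, $\dtwo$ and asserts $u \in V'$. Your invariant is the right tool; the piece to replace is the appeal to an external algorithmic convention with an argument internal to \dfn~\ref{def:thread}, as the paper attempts for $\stitlc$.
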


\begin{proof} We prove the result by induction on the height of the thread $\thrd$.

\textit{Base case.} If $h = 0$, then $\thrd = \emptyset \sar w : \phi$, implying that $\gt^{\thrd}_{\Lambda} = (V,E)$ with $V = \{w\}$ and $E = \emptyset$, which is a tree.

\textit{Inductive step.} Let $\thrd = (\rel_{i} \sar \Gamma_{i})_{i=0}^{h+1}$ be a thread to $\Lambda$ with $\thrd' = (\rel_{i} \sar \Gamma_{i})_{i=0}^{h}$ a thread to $\Lambda'$. By IH, we know that $\gt^{\thrd'}_{\Lambda'} = (V',E')$ is a tree. If $\Lambda'$ was obtained from $\Lambda$ by a rule other than $\settr$, $\Oir$, $\dtwo$, $\stitr$, or $\stitlc$ then $\gt^{\thrd}_{\Lambda} = \gt^{\thrd'}_{\Lambda'}$ by \dfn~\ref{def:generation-tree}. Alternatively, if $\Lambda'$ was obtained from $\Lambda$ by $\settr$, $\Oir$, $\dtwo$, or $\stitr$, then due to the eigenvariable condition imposed on each rule, $\gt^{\thrd}_{\Lambda} = (V,E)$ where $V = V' \cup \{v\}$ and $E = E' \cup \{(u,v)\}$ with $u \in V'$ and $v$ fresh. Last, concerning the $\stitlc$ case, recall that the $\ioa$ rule is never applied to an $\ioat$-label in a thread $\thrd$ by \dfn~\ref{def:thread} and~\ref{def:ioa-terminology-operation}. As such, if $\stitlc$ is bottom-up applied to the sequent $\rel, R_{\agbox}uv \sar v : \agbox \phi, \Gamma$ to obtain the sequent $\rel, R_{\agbox}uv, R_{\agbox}uz \sar v : \agbox \phi, z : \phi, \Gamma$, then $u$ cannot be an $\ioat$-label, meaning, either $u = w$ or the label was introduced by a prior bottom-up application of $\settr$, $\Oir$, $\dtwo$, $\stitr$, or $\stitlc$. Therefore, $u \in V'$ and $\gt^{\thrd}_{\Lambda} = (V,E)$ where $V = V' \cup \{z\}$ and $E = E' \cup \{(u,z)\}$ with $z$ fresh. Hence, in the $\settr$, $\Oir$, $\dtwo$, $\stitr$, and $\stitlc$ cases, $\gt^{\thrd}_{\Lambda}$ is nothing more than $\gt^{\thrd'}_{\Lambda'}$ with a new edge protruding from a vertex to a fresh node, showing that $\gt^{\thrd}_{\Lambda}$ is indeed a tree.
\end{proof}

We adopt a notion of blocking in our setting, used to bound the number of labels introduced during proof-search. It is based on 
the work of \citeauthor{HorSat04} \citeyear{HorSat04}, who introduced loop-checking by means of (in)direct blocking in tableaux for description logics, and \citeauthor{TiuIanGor12} \citeyear{TiuIanGor12}, who applied a loop-checking mechanism for grammar logics in the context of nested sequent calculi. Both methods work in fundamentally the same way: in the former, a tableau is constructed, which is a tree whose nodes are sets of description logic formulae, and in the latter, a proof of nested sequents is constructed, where each nested sequent is a tree whose nodes are sets of grammar logic formulae. In both cases, the trees `grow' throughout proof-search and when two nodes are encountered along the same branch with the same set of formulae, one stops growing the branch, thus limiting the size of the tableau or nested sequent. 
 In our setting, although our labeled sequents do not form a tree, we may leverage generation trees to employ a similar 
 mechanism.

\begin{definition}[Blocking]\label{def:blocking}
 Let $\Lambda = \rel \sar \Gamma$ be a sequent and $\thrd = (\rel_{i} \sar \Gamma_{i})_{i=0}^{h}$ be a thread from $\emptyset \sar w : \phi$ to $\Lambda$. We say that $u$ is \emph{directly blocked} by its proper ancestor $v$ in the generation tree $\gt^{\thrd}_{\Lambda}$ \iffi (i) $v$ is not the root of $\gt^{\thrd}_{\Lambda}$, (ii) $\opt_{\Oi}u \in \rel$ \iffi $\opt_{\Oi}v \in \rel$, for all $i \in Ag$, and (iii) $\Gamma \restriction u = \Gamma \restriction v$. We refer to such a $u$ that is directly blocked as a \emph{loop node}, and 
 to such a $v$ as a \emph{loop ancestor}. We say that $u$ is \emph{indirectly blocked} \iffi it has a proper ancestor in the generation tree $\gt^{\thrd}_{\Lambda}$ that is directly blocked. Last, we say that $u$ is \emph{blocked} in $\Lambda$ \iffi it is directly or indirectly blocked in the generation tree $\gt^{\thrd}_{\Lambda}$, and 
 that $u$ is \emph{unblocked} otherwise. 
\end{definition}

\begin{remark}\label{rmk:ioat-label-unblocked} By the definition of a generation tree (\dfn~\ref{def:generation-tree}), no label introduced via a bottom-up application of $\ioa$ (i.e.\ an $\ioat$-label) will occur as a node in the generation tree. Hence, all $\ioat$-labels will be considered \emph{unblocked} by \dfn~\ref{def:blocking} above.
\end{remark}

Next, we introduce \emph{saturation conditions}, used to determine when the remaining rules of $\gtdsnk$ are no longer applicable during proof-search and `enough' information has been introduced to ensure that a counter-model can be extracted in the case of failure. 
Our conditions are motivated by conditions enforced in proof-search algorithms for 
description logics~\cite{HorSat04} and  multi-modal grammar logics~\cite{TiuIanGor12}.

\begin{definition}[Saturation Conditions] Let $\thrd$ be a thread to $\Lambda$. We define the \emph{saturation conditions} accordingly:
\begin{description}

\item[$\scid$]  for each $w \in \lab(\Lambda)$, if $w : \phi \in \Gamma$, then $w : \negnnf{\phi} \not\in \Gamma$;

\item[$\scdis$] for each $w \in \lab(\Lambda)$, if $w : \phi \lor\psi \in \Gamma$, then $w :\phi, w : \psi \in \Gamma$;

\item[$\sccon$] for each $w \in \lab(\Lambda)$, if $w : \phi \land \psi \in \Gamma$, then either $w :\phi$ or $w : \psi \in \Gamma$;

\item[$\scdia$] for each $w, u \in \lab(\Lambda)$, if $w : \Diamond \phi \in \Gamma$, then $u : \phi \in \Gamma$;

\item[$\scbox$] for each $w \in \lab(\Lambda)$, if $w : \Box \phi \in \Gamma$, then there exists an unblocked $u$ in the generation tree $\gt^{\thrd}_{\Lambda}$ such that $u : \phi \in \Gamma$;

\item[$\scagdia$] for each $w, u \in \lab(\Lambda)$ and $i \in \Ag$, if $w : \agdia \phi \in \Gamma$ and $\reli wu \in \rel$, then $u : \phi, u : \agdia \phi \in \Gamma$;

\item[$\scagbox$] for each $w \in \lab(\Lambda)$ and $i \in \Ag$, if $w : \agbox \phi \in \Gamma$ with $w$ unblocked in the generation tree $\gt^{\thrd}_{\Lambda}$, then there exists a $u \in \lab(\Lambda)$ such that $\reli wu \in \rel$ and $u : \phi \in \Gamma$;

\item[$\scref$] for each $w \in \lab(\Lambda)$, $\reli ww \in \rel$;

\item[$\sceuc$] for each $w, u, v \in \lab(\Lambda)$, if $\reli wu, \reli wv \in \rel$, then $\reli uv \in \rel$;

\item[$\scodi$] for each $w, u \in \lab(\Lambda)$ and $i \in \Ag$, if $w : \ODi \phi \in \Gamma$ and $\relio u \in \rel$, then $u : \phi \in \Gamma$;

\item[$\scoi$] for each $w \in \lab(\Lambda)$ and $i \in \Ag$, if $w : \ODi \phi \in \Gamma$, then there exists an unblocked $u$ in the generation tree $\gt^{\thrd}_{\Lambda}$ such that $\relio u \in \rel$ and $u : \phi \in \Gamma$;

\item[$\scdii$] for each $i \in \Ag$ there exists an unblocked $w$ in the generation tree $\gt^{\thrd}_{\Lambda}$ such that $\relio w \in \rel$;

\item[$\scdiii$] for each $w, u \in \lab(\Lambda)$ and $i \in \Ag$, if $\relio w, \reli wu \in \rel$, then $\relio u \in \rel$;

\item[$\scapc$] for each $w_{0}, \ldots, w_{k} \in \lab(\Lambda)$ and $i \! \in \! \Ag$, $R_{\agbox}w_{j}w_{m} \in \rel$ for some $j,m \in \{0, \ldots, k\}$ with $j \neq m$.

\end{description}
We refer to $\Lambda$ as \emph{stable} \iffi it satisfies all above conditions and every $\ioat$-tuple of the sequent is $\ioat$-satisfied.
\end{definition}

Our proof-search algorithm $\provenk$ is shown in \alg~\ref{alg:ProveDSk-logical} below, and decides if a formula is $\dsnk$-valid, meaning we take $n$ and $k$ to be fixed, input parameters. $\provenk$ is split into three parts due to its length. We note that the algorithm applies rules from $\gtdsnk$ and admissible rules from \fig~\ref{fig:admiss-rules-for-proof-search} bottom-up on an input of the form $\emptyset \sar w_{0} : \phi$ with $\phi \in \lang_{n}$. Each recursive call of the algorithm corresponds to a bottom-up application of a rule, and at any given step, $\provenk$ has generated a finite number of threads from $\emptyset \sar w_{0} : \phi$ that form a `partial proof' of the input. As we will see below, if the input is valid, then eventually a proof in $\gtdsnk$ may be extracted from successful proof-search, and if the input is invalid, then a counter-model may be extracted from failed proof-search. We also note that if $k = 0$, i.e.\ we do not bound the number of choices available to an agent, then we omit lines 15--20 and do not enforce the saturation condition $\scapc$. 

The  saturation conditions, along with $\ioat$-satisfiability, are used to guide the computation of $\provenk$. Such conditions serve a twofold role: first, they are used to determine the (in)applicability of a rule during proof-search, thus letting us control when rules are bottom-up applied, and yielding termination of the algorithm (\thrm~\ref{thm:termination-prove}). Second, such conditions ensure that enough information has entered into a sequent during proof-search to be able to extract a counter-model in the case where the input formula is invalid (\thrm~\ref{thm:correctness-prove}). In particular, if a stable sequent is encountered during proof-search, then it can be transformed into a finite $\dsnk$-model (see \dfn~\ref{def:stability-model} and \lem~\ref{lem:stability-model-is-DSn} below). 
We remark that it is decidable to check if a sequent (which is a finite object) is stable.

The proof-search strategy of our algorithm $\provenk$ is, intuitively, as follows: First, $\provenk$ is split into three parts, namely, lines 5-20, lines 21-40, and lines 41-60. Lines 5-20 bottom-up apply the inference rules $\stitrefl$, $\stiteucl$, $(\mathsf{D3}_{i})$, and $\choicer$, which only affect relational atoms. Lines 21-40 bottom-up apply the inference rules $\disr$, $\conr$, $\settdiar$, $\stitdialc$, and $\ODir$, which only affect labeled formulae. All rules in  lines 5-20 and 21-40 are \emph{non-generating} in the sense that they do not introduce new labels. Conversely, lines 41-60 apply the \emph{generating} rules $\settr$, $\Oir$, $\dtwo$, $\stitr$, $\stitlc$, and $\ioaop$ (i.e.\ a sequence of $\ioa$ applications) that bottom-up introduce new, fresh labels to sequents. We note that $\settr$ and $\Oir$ are preceded by applications of $\settlc$ and $\Oilc$, respectively, as discussed in more detail in the remark below. Since all rules of $\gtdsnk$ are hp-invertible (\thrm~\ref{thm:gtdsnk-properties}-3), we can bottom-up apply our rules in any order. However, we find it conceptually easier to prove facts about $\provenk$ by 
grouping rules of a similar functionality together.

\begin{remark}\label{rmk:admiss-rules-use} As it is important for understanding $\provenk$, we discuss our use of the admissible rules $\settlc$, $\Oilc$, and $\stitlc$. Lines 41-43 bottom-up apply $\settlc$ followed by $\settr$ and lines 44-47 bottom-up apply $\Oilc$ followed by $\Oir$. In each case, the label $w$ of $w : \Box \psi$ and $w : \Oi \psi$ is changed (via the $\settlc$ and $\Oilc$ rules) to the `source' label $w_{0}$ before bottom-up applying $\settr$ and $\Oir$, respectively. (NB. We always assume that the input to $\provenk$ is a sequent of the form $\emptyset \sar w_{0} : \phi$, meaning $w_{0}$ will be the root of every generation tree constructed throughout proof-search.) This has the effect that each label introduced by the bottom-up application of the subsequent $\settr$ or $\Oir$ is a child of the root of the generation tree. Observe that by the definition of a blocked node (\dfn~\ref{def:blocking}), that a node in the generation tree cannot be blocked by the root of the generation tree. This means that fresh labels generated by $\settr$ and $\Oir$ are unblocked by definition, which is crucial for constructing a counter-model when proof-search fails (see \dfn~\ref{def:stability-model} and \lem~\ref{lem:stability-model-is-DSn} below) as our counter-model is built using only unblocked labels. Hence, labels generated by bottom-up applications of $\settr$ and $\Oir$ will always exist within our counter-model, which will be vital in proving the correctness of $\provenk$ (see \thrm~\ref{thm:correctness-prove} below). 

Lines 52-57 bottom-up apply $\stitr$ to unblocked non-$\ioat$-labels and $\stitlc$ to $\ioat$-labels. Applying $\stitr$ to only unblocked non-$\ioat$-labels limits the growth of sequents during proof-search since repeated bottom-up applications of $\stitr$ will eventually produce blocked non-$\ioat$-labels as discussed in the proof of termination (see \thrm~\ref{thm:termination-prove} below). Our reason for the use of $\stitlc$ however, is that since if a sequent $\rel, R_{\stitbox}wu \sar \Gamma$ is encountered during proof-search with $u : \stitbox \phi \in \Gamma$ and $u$ an $\ioat$-label, then applying $\stitlc$ lets us `unpack' $u : \stitbox \phi$ and obtain $\rel, R_{\stitbox}wu, R_{\stitbox}wv \sar \Gamma, v : \phi$ with $v$ fresh rather than $\rel, R_{\stitbox}wu, R_{\stitbox}uv \sar \Gamma, v : \phi$. This prevents chains of relational atoms from being introduced that protrude outward from $\ioat$-labels. In fact, none of the generating rules are ever applied to an $\ioat$-label, meaning such labels can be ignored in our generation tree, and thus do not give rise to infinite loops during proof-search, as they never give rise to fresh labels during proof-search.
\end{remark}

We now explain how a counter-model (which we call a \emph{stability model}) can be obtained from failed proof-search, and then show the correctness of the proof-search algorithm.

\begin{algorithm}[t]
\KwIn{A Labeled Sequent: $\Lambda = \rel \sar \Gamma$}
\KwOut{A Boolean: \texttt{True}, \texttt{False}}

\uIf{$w : \phi, w :\negnnf{\phi} \in \Gamma$}
     {\Return \texttt{True};}

\uIf{$\rel \sar \Gamma$ is stable}
     {\Return \texttt{False};}
     
\uIf{for some $w \in \lab(\Lambda)$, $\reli ww \not\in \rel$}
    {
    Let $\rel' := \reli ww, \rel$;\\
    \Return $\provenk$($\rel' \sar \Gamma$)
    }
    
\uIf{for some $w, u, v \in \lab(\Lambda)$, $\reli wu, \reli wv \in \rel$, but $\reli uv \not\in \rel$}
    {
    Let $\rel' := \reli uv, \rel$;\\
    \Return $\provenk$($\rel' \sar \Gamma$);
    }
    
\uIf{for some $w, u \in \lab(\Lambda)$, $\relio w, \reli wu \in \rel$, but $\relio u \not\in \rel$}
    {
    Let $\rel' := \relio u, \rel$;\\
    \Return $\provenk$($\rel' \sar \Gamma$);
    }

\uIf{for some $w_{0}, \ldots, w_{k} \in \lab(\Lambda)$, $R_{\agbox}w_{j}w_{m} \not\in \rel$ for each $j,m \in \{0, \ldots, k\}$ with $j \neq m$}
    {
    Let $\rel_{m,j} := R_{\agbox}w_{m}w_{j}, \rel$ (with $0 \leq m \leq k - 1$ and $m + 1 \leq j \leq k$);\\
    \uIf{$\provenk (\rel_{m,j} \sar \Gamma) = \mathtt{False}$ for some $m$ and $j$}
        {
        \Return \texttt{False};
        }\Else{
        \Return \texttt{True};
        }
    }
     
\caption{$\provenk$}\label{alg:ProveDSk-logical}
\end{algorithm}

\setcounter{algocf}{0}
\begin{algorithm}
\setcounter{AlgoLine}{20}

\uIf{there exists a $w : \phi \lor \psi \in \Gamma$, but either $w : \phi \not\in \Gamma$ or $w : \psi \not\in \Gamma$}
     {Let $\Gamma' := w :\phi, w :\psi, \Gamma$;\\
     \Return $\provenk$($\rel \sar \Gamma'$);}
     
\uIf{there exists a $w : \phi \land \psi \in \Gamma$, but $w : \phi, w : \psi \not\in \Gamma$}
{
    Let $\Gamma_{1} := w :\phi, \Gamma$;\\
    Let $\Gamma_{2} := w :\psi, \Gamma$;\\
    \uIf{$\provenk (\rel \sar \Gamma_{i}) = \mathtt{False}$ for some $i \in \{1,2\}$}
    {
    \Return \texttt{False};
    }\Else{
    \Return \texttt{True};
    }
}
    
\uIf{$w : \Diamond \phi \in \Gamma$, but $u : \phi \not\in \Gamma$ for some $u \in \lab(\Lambda)$}
    {
    Let $\Gamma' := u :\phi, \Gamma$;\\
     \Return $\provenk$($\rel \sar \Gamma'$);
    }
    
\uIf{$w : \agdia \phi \in \Gamma$, but $u : \phi, u : \agdia \phi \not\in \Gamma$ for some $u \in \lab(\Lambda)$ such that $R_{\agbox}wu \in \rel$}
    {
    Let $\Gamma' := u :\phi, u : \agdia \phi, \Gamma$;\\
     \Return $\provenk$($\rel \sar \Gamma'$);
    }
    
\uIf{$w : \ODi \phi \in \Gamma$, but $u : \phi \not\in \Gamma$ for some $u \in \lab(\Lambda)$ such that $\relio u \in \rel$}
    {
    Let $\Gamma' := u :\phi, \Gamma$;\\
     \Return $\provenk$($\rel \sar \Gamma'$);
    }
\caption{$\provenk$ (Continued)}
\end{algorithm}

\setcounter{algocf}{0}
\begin{algorithm}\label{alg:with-loop-check}
\setcounter{AlgoLine}{40}

\uIf{$w : \Box \phi \in \Gamma$, but $u : \phi \not\in \Gamma$ for all $u \in \lab(\Lambda)$}
    {
     Replace $w : \Box \phi \in \Gamma$ with $w_{0} : \Box \phi$ and let $\Gamma' := v :\phi, \Gamma$ with $v$ fresh;\\
     \Return $\provenk$($\rel \sar \Gamma'$);
    }
    
\uIf{$w : \Oi \phi \in \Gamma$, but $u : \phi \not\in \Gamma$ for all $u \in \lab(\Lambda)$ with $\ideal u \in \rel$}
    {
     Replace $w : \Oi \phi \in \Gamma$ with $w_{0} : \Oi \phi$;\\
     Let $\rel' := \relio v, \rel$ and $\Gamma' := v :\phi, \Gamma$ with $v$ fresh;\\
     \Return $\provenk$($\rel' \sar \Gamma'$);
    }

\uIf{for all $w \in \lab(\Lambda)$, $\relio w \not\in \rel$}
    {
    Let $\rel' := \relio u, \rel$ with $u$ fresh;\\
    \Return $\provenk$($\rel' \sar \Gamma$);
    }
    
\uIf{$w : \agbox \phi \in \Gamma$ with $w$ unblocked in $\gt^{\thrd}_{\Lambda}$, but $u : \phi \not\in \Gamma$ for all $u \in \lab(\Lambda)$ such that $R_{\agbox}wu \in \rel$}
    {
    \uIf{$w$ is not an $\ioat$-label}
        {
        Let $\rel' := R_{\agbox}wv, \rel$ and $\Gamma' := v :\phi, \Gamma$ with $v$ fresh;\\
    \Return $\provenk$($\rel' \sar \Gamma'$);
        }
    \uIf{$w$ is an $\ioat$-label}
        {
        \tcp{As $w$ is an $\ioat$-label, $\rel$ is of the form $R_{\agbox}vw, \rel'$.}
        Let $\rel'' := R_{\agbox}vz, R_{\agbox}vw, \rel'$ and $\Gamma' := z :\phi, \Gamma$ with $z$ fresh;\\
    \Return $\provenk$($\rel'' \sar \Gamma'$);
        }
    }
    
\uIf{for some non-$\ioat$-labels $w_{1}, \ldots, w_{n} \in \lab(\Lambda)$, there does not exist an $\ioat$-label $u \in \lab(\Lambda)$ such that $w_{i} \ripath u$ for each $i \in \Ag$}
    {
    Let $\Lambda' := \ioaop(\Lambda)$;\\
    \Return $\provenk$($\Lambda'$);
    }
\caption{$\provenk$ (Continued)}
\end{algorithm}

\begin{definition}[Stability Model]\label{def:stability-model} Let $\thrd$ be a thread from $\emptyset \sar w_{0} : \phi$ to $\Lambda := \rel \sar \Gamma$ generated by $\provenk(\emptyset \sar w_{0} : \phi)$. We define the \emph{stability model} relative to $\Lambda$ to be $\mstamod :=$ $(\wstamod,\{\rstamod \ | \ i \in \Ag \}, \{\istamod \ | \ i \in \Ag\}, \vstamod)$ such that
\begin{itemize}

\item $\wstamod$ is the set of unblocked labels in $\Lambda$;

\item $\rstamod$ is the Euclidean closure of the set $S^{\Lambda}_{[i]}$, where $(w,u) \in S^{\Lambda}_{[i]}$ \iffi (i) $\reli wu \in \rel$ with $w$ and $u$ both unblocked in $G_{\Lambda}^{\thrd}$, or (ii) $\reli wv \in \rel$ with $w$ unblocked, and $v$ directly blocked by $u$ in $G_{\Lambda}^{\thrd}$;\footnote{Since $\Lambda$ is stable, we know it satisfies the saturation condition $\scref$, meaning $(w,w) \in S^{\Lambda}_{[i]}$ for each $w \in \wstamod$. Hence, taking the Euclidean closure of $S^{\Lambda}_{[i]}$ implies that $\rstamod$ will be both reflexive and Euclidean,  i.e.\ it will be an equivalence relation.}

\item $\istamod = \{w \ | \ \relio w \in \rel \text{ and $w$ is unblocked.}\}$;

\item $w \in \vstamod(p)$ \iffi $w : \negnnf{p} \in \Gamma$ and $w$ is unblocked.

\end{itemize}
\end{definition}

\begin{lemma}\label{lem:reacability-is-equiv-rel} Let $\Lambda := \rel \sar \Gamma$ be a stable sequent with $w,u \in \wstamod$. If $w \ripath u$, then $\rstamod wu$.
\end{lemma}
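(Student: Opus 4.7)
The plan is to reduce the general path case to a single direct relational atom $R_{[i]}wu \in \rel$ in the antecedent, and then invoke clause (i) of the definition of $S^{\Lambda}_{[i]}$ since both $w,u$ are unblocked. The key observation is that the stability of $\Lambda$ makes the relation $R_{[i]}$ inside $\rel$ behave as an equivalence relation over $\lab(\Lambda)$.

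More precisely, I would first use the saturation conditions $\scref$ and $\sceuc$ to show that the (directed) relation $\{(x,y) \in \lab(\Lambda)^2 \mid R_{[i]}xy \in \rel\}$ is reflexive and Euclidean on $\lab(\Lambda)$. From reflexivity and the Euclidean property one derives symmetry (apply $\sceuc$ to $R_{[i]}xx$ and $R_{[i]}xy$ to obtain $R_{[i]}yx$) and then transitivity (from $R_{[i]}xy$ and $R_{[i]}yz$, use symmetry to get $R_{[i]}yx$, then $\sceuc$ yields $R_{[i]}xz$). So the in-$\rel$ relation $R_{[i]}$ on $\lab(\Lambda)$ is already an equivalence relation, and not merely what $\ripath$ provides.

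With that in hand, proceed by induction on the length of the $\rel^{i}$-path witnessing $w \ripath u$. The base case $w = u$ is immediate: by $\scref$ we have $R_{[i]}ww \in \rel$, and since $w$ is unblocked (being in $\wstamod$), clause (i) of the definition of $S^{\Lambda}_{[i]}$ gives $(w,w) \in S^{\Lambda}_{[i]} \subseteq R^{\Lambda}_{[i]}$. For the inductive step, unfold the path $w \sim_i v_0, v_0 \sim_i v_1, \ldots, v_n \sim_i u$, use the derived symmetry to orient every link as an $R_{[i]}$-atom in $\rel$, and then iterate the derived transitivity to obtain the single atom $R_{[i]}wu \in \rel$. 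Since $w,u \in \wstamod$, both are unblocked, so clause (i) again yields $(w,u) \in S^{\Lambda}_{[i]}$, and hence $\rstamod wu$ because $S^{\Lambda}_{[i]} \subseteq \rstamod$.

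I do not expect a real obstacle here: the intermediate labels $v_0, \ldots, v_n$ on the path may well be blocked, but that is irrelevant, because the argument collapses the whole path to a direct atom $R_{[i]}wu \in \rel$ before ever consulting the definition of $S^{\Lambda}_{[i]}$; blockedness of the endpoints is the only thing that matters, and by hypothesis both endpoints lie in $\wstamod$. The only subtlety worth flagging explicitly is the derivation of symmetry from $\scref$ plus $\sceuc$, since without it one could not reorient the $\sim_i$-steps into $R_{[i]}$-atoms in the correct direction.
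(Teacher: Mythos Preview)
Your proposal is correct and follows essentially the same approach as the paper: the paper's proof is a one-liner invoking precisely $\scref$ and $\sceuc$ on the stable sequent, and your argument simply unpacks what that means (deriving symmetry and transitivity of the in-$\rel$ relation $R_{[i]}$ from reflexivity plus the Euclidean property, then collapsing the path to a single atom). Your explicit observation that blockedness of intermediate path labels is irrelevant, because the collapse to $R_{[i]}wu \in \rel$ happens entirely within $\rel$ before consulting $S^{\Lambda}_{[i]}$, is a useful clarification the paper leaves implicit.
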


\begin{proof} If $w \ripath u$, then since $\Lambda$ is stable, and satisfies both $\scref$ and $\sceuc$, it follows that $\rstamod wu$.
\end{proof}

We now show that any stability model is indeed a $\dsnk$-model of 
\dfn~\ref{def:frames-models}.

\begin{lemma}\label{lem:stability-model-is-DSn}
Let $\thrd$ be a thread from $\emptyset \sar w_{0} : \phi$ to the stable sequent $\Lambda := \rel \sar \Gamma$ generated by $\provenk(\emptyset \sar w_{0} : \phi) = \mathtt{false}$. Then, the stability model $\mstamod$ is a finite $\dsnk$-model.
\end{lemma}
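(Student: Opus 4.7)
The plan is to verify each of the conditions \textbf{(C1)}--\textbf{(C3)} and \textbf{(D1)}--\textbf{(D3)} from Definition~\ref{def:frames-models}, together with the finiteness of $\mstamod$. Finiteness is immediate: termination of $\provenk$ ensures $\lab(\Lambda)$ is finite, and $\wstamod \subseteq \lab(\Lambda)$.

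For \textbf{(C1)}, I would first observe that the saturation conditions $\scref$ and $\sceuc$ make the relation $\{(w,u) \mid \reli wu \in \rel\}$ reflexive and Euclidean on $\lab(\Lambda)$, hence an equivalence relation. Since $\scref$ places the diagonal on $\wstamod$ into $S^{\Lambda}_{[i]}$, the Euclidean closure $\rstamod$ is reflexive and Euclidean on $\wstamod$, and therefore an equivalence relation. For \textbf{(C2)}, stability requires every $\ioat$-tuple to be $\ioat$-satisfied, and the argument of Lemma~\ref{lem:IOA-sat} extends this to every tuple in $\lab(\Lambda)^{n}$. So given $(u_{1},\ldots,u_{n}) \in (\wstamod)^{n}$, there is an $\ioat$-label $u$ with $u_{i} \ripath u$ for each $i \in \Ag$. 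By Remark~\ref{rmk:ioat-label-unblocked}, $u \in \wstamod$, and Lemma~\ref{lem:reacability-is-equiv-rel} gives $\rstamod u_{i} u$ for each $i$, placing $u \in \bigcap_{i \in \Ag} \rstamod(u_{i})$. For \textbf{(C3)} (when $k>0$), $\scapc$ applied to $w_{0},\ldots,w_{k} \in \wstamod$ yields some pair $j \neq m$ with $\reli w_{j}w_{m} \in \rel$, so $(w_{j},w_{m}) \in S^{\Lambda}_{[i]} \subseteq \rstamod$. Finally, \textbf{(D1)} is immediate from the definition of $\istamod$, and \textbf{(D2)} follows from $\scdii$.

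The main obstacle is \textbf{(D3)}, because the Euclidean closure used to define $\rstamod$ introduces pairs not witnessed by any single relational atom in $\rel$. To address this, I would prove a key lemma: if $\relio z \in \rel$ and $(z,z') \in S^{\Lambda}_{[i]}$ with $z,z' \in \wstamod$, then $\relio z' \in \rel$. The two clauses defining $S^{\Lambda}_{[i]}$ must be treated separately: clause (i) invokes $\scdiii$ directly on $\reli zz' \in \rel$; clause (ii) uses $\scdiii$ on $\relio z$ and $\reli z v' \in \rel$ (for the directly blocked $v'$) to obtain $\relio v' \in \rel$, and then clause (ii) of Definition~\ref{def:blocking} transfers this to $\relio z' \in \rel$ for the loop ancestor $z'$. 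Symmetric edges $(z',z) \in S^{\Lambda}_{[i]}$ are handled by the symmetry of $\{(w,u) \mid \reli wu \in \rel\}$ derived from $\scref$ and $\sceuc$. With the lemma in place, \textbf{(D3)} follows by induction on the length of an undirected $S^{\Lambda}_{[i]}$-path from $w$ to $v$ in $\wstamod$, which exists because $\rstamod$ coincides with the equivalence closure of $S^{\Lambda}_{[i]}$.
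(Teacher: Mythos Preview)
Your proposal is correct and follows essentially the same approach as the paper: the verification of \textbf{(C1)}--\textbf{(C3)} and \textbf{(D1)}--\textbf{(D2)} is identical, and your treatment of \textbf{(D3)} via a single-step lemma on $S^{\Lambda}_{[i]}$ followed by path induction is just a slight repackaging of the paper's direct induction on undirected $S^{\Lambda}_{[i]}$-path length. The paper additionally notes non-emptiness of $\wstamod$ (via $w_{0}$) and well-definedness of $\vstamod$ (via $\scid$), which you omit but are trivial.
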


\begin{proof} First, since $\Lambda$ is a stable sequent generated from $\provenk(\emptyset \sar w_{0} : \phi)$, 
$w_{0}\in 
\wstamod$ ensures its non-emptiness. We show that the model satisfies \textbf{(C1)}--\textbf{(C3)} and \textbf{(D1)}--\textbf{(D3)}.

\vspace{-0.1cm}
\begin{itemize}

\item[\textbf{(C1)}] To prove that \textbf{(C1)} holds, we show that for all $i \in \Ag$, $\rstamod \subseteq \wstamod \times \wstamod$ is an equivalence relation. By \dfn~\ref{def:stability-model}, we know that $\rstamod$ is the Euclidean closure of the $S^{\Lambda}_{[i]}$ relation. Since $\Lambda$ is stable, we know that it satisfies the saturation condition $\scref$, implying that $\rstamod$ is reflexive and Euclidean, and so, is an equivalence relation.

\item[\textbf{(C2)}] Suppose that $w_{1}, \ldots, w_{n} \in \wstamod$. Since $\Lambda$ is saturated, we know that it is $\ioat$-satisfied, meaning that an $\ioat$-label $u \in \lab(\Lambda)$ exists such that $w_{i} \ripath u$ for each $i \in \Ag$. By \rmk~\ref{rmk:ioat-label-unblocked}, we know that $u$ is unblocked, and so, $u \in \wstamod$. By the $\scref$ and $\sceuc$ saturation conditions, it follows from the fact that $w_{i} \ripath u$ for each $i \in \Ag$ that $R_{\agbox}w_{i}u \in \rel$ for each $i \in \Ag$. Since $w_{1}, \ldots, w_{n} \in \wstamod$, each label is unblocked by \dfn~\ref{def:stability-model}, and as mentioned above, $u$ is unblocked as well, implying that $\rstamod w_{i}u$ for each $i \in \Ag$. Thus, $\ioacond$ is satisfied in this case.

\item[\textbf{(C3)}] Suppose that $w_{0}, \ldots, w_{k} \in \wstamod$ and let $i \in \Ag$. Since $\Lambda$ is stable and satisfies $\scapc$, we know that $R_{\agbox}w_{j}w_{m} \in \rel$ holds for some $j,m \in \{0, \ldots, k\}$.  By definition, $\rstamod w_{j}w_{m}$ holds, implying the desired result.

\item[\textbf{(D1)}] Follows immediately from the definition of $\istamod$.

\item[\textbf{(D2)}] Let $i \in \Ag$. Since $\Lambda$ is stable, we know that $\Lambda$ satisfies $\scdii$, meaning that there exists an unblocked $w \in \wstamod$ such that $\relio w \in \rel$. Hence, by the definition of $\istamod$, $w \in \istamod$, implying that $\istamod$ is non-empty.

\item[\textbf{(D3)}] Let $i \in \Ag$, $w,u \in \wstamod$, and suppose that $w \in \istamod$ and $\rstamod wu$. Let us define $v \sim_{i}^{S} z$ \iffi $(v,z)$ or $(z,v) \in S^{\Lambda}_{\agbox}$. (NB. See \dfn~\ref{def:stability-model} above for the definition of $S^{\Lambda}_{\agbox}$.) We prove the claim by induction on the minimal length of an undirected path $w \sim_{i}^{S} v_{1}, \ldots, v_{n} \sim_{i}^{S} u$ from $w$ to $u$ in $S^{\Lambda}_{\agbox}$ and omit the case when the path is of length zero (i.e.\ $w = u$) as the case is trivial.

\textit{Base case.} Suppose that $(w,u) \in S^{\Lambda}_{\agbox}$; the case when $(u,w) \in S^{\Lambda}_{\agbox}$ is shown similarly. We have two cases to consider: either (i) $R_{\agbox}wu \in \rel$, or (ii) $R_{\agbox}wu \not\in \rel$.

(i) If $R_{\agbox}wu \in \rel$, then since $w \in \istamod$, we know that $\ideal w \in \rel$ by \dfn~\ref{def:stability-model}. By the saturation condition $\scdiii$, it follows that $\ideal u \in \rel$, and since $u \in \wstamod$, we know that $u$ is unblocked, implying that $u \in \istamod$.

(ii) If $R_{\agbox}wu \not\in \rel$, then $(w,u) \in S^{\Lambda}_{\agbox}$ must hold due to clause (ii) of the definition of $S^{\Lambda}_{\agbox}$ (see \dfn~\ref{def:stability-model} above). Therefore, there is a loop node $v$ with $u$ its loop ancestor such that $R_{\agbox}wv \in \rel$. Since $w \in \istamod$, we know that $\ideal w \in \rel$, implying that $\ideal v \in \rel$ by the saturation condition $\scdiii$. By \dfn~\ref{def:blocking}, it follows that $\ideal u \in \rel$ since $u$ is the loop ancestor of $v$. Since $u$ is unblocked, we have that $u \in \istamod$.

\textit{Inductive step.} Let $w \sim_{i}^{S} v_{1}, \ldots, v_{n} \sim_{i}^{S} u$ be a minimal undirected path between $w$ and $u$ whose length is $n+1$. By IH, we know that $v_{n} \in \istamod$, implying that $\ideal v_{n} \in \rel$ since $v_{n}$ is unblocked. (NB. Every label participating in $S^{\Lambda}_{\agbox}$ and $\rstamod$ is unblocked by definition.) Suppose that $(u,v_{n}) \in S^{\Lambda}_{\agbox}$; the case when $(v_{n},u) \in S^{\Lambda}_{\agbox}$ is shown similarly. We have two cases to consider: either (i) $R_{\agbox}uv_{n} \in \rel$, or (ii) $R_{\agbox}uv_{n} \not\in \rel$.

(i) If $R_{\agbox}uv_{n} \in \rel$, then by 
conditions $\scref$ and $\sceuc$, we know 
$R_{\agbox}v_{n}u \in \rel$. We may conclude that $u \in \istamod$ by an argument similar to case (i) of the base case above.

(ii) Suppose that $R_{\agbox}uv_{n} \not\in \rel$. Then, $(u,v_{n}) \in S^{\Lambda}_{\agbox}$ must hold due to clause (ii) of the definition of $S^{\Lambda}_{\agbox}$, implying the existence of a loop node $z$ with $v_{n}$ its loop ancestor such that $R_{\agbox}uz \in \rel$. Since $\ideal v_{n} \in \rel$ and $v_{n}$ is the loop ancestor of $z$, we have that $\ideal z \in \rel$. By the fact that $R_{\agbox}uz \in \rel$ and by saturation conditions $\scref$ and $\sceuc$, we know that $R_{\agbox}zu \in \rel$, showing that $\ideal u \in \rel$. It follows that $u \in \istamod$.

\end{itemize}
Last, we note that by condition $\scid$, the valuation $\vstamod$ is well-defined, and since $\mstamod$ was extracted from $\Lambda$, which is a finite object, $\mstamod$ will be finite. 
\end{proof}

We are now in the position to prove correctness.

\begin{theorem}[Correctness]\label{thm:correctness-prove} Let $n, k \in \mathbb{N}$ and $\phi \in \lang_{n}$.
\begin{itemize}

\item[(i)] If $\provenk(\emptyset \sar w_{0} :\phi) = \mathtt{true}$, then there is a proof of $\emptyset \sar w_{0} :\phi$ in $\gtdsnk$ witnessing that $\phi$ is $\dsnk$-valid. 

\item[(ii)] If $\provenk(\emptyset \sar w_{0} :\phi) = \mathtt{false}$, then there is finite $\dsnk$-model witnessing that $\phi$ is $\dsnk$-invalid.

\end{itemize}
\end{theorem}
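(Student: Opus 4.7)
The plan is to prove the two directions separately. For~(i), each recursive call of $\provenk$ corresponds to the bottom-up application of an inference rule from $\gtdsnk$ or one of the admissible rules in \fig~\ref{fig:admiss-rules-for-proof-search}. When $\provenk(\emptyset \sar w_{0} : \phi)$ returns $\texttt{true}$, every branch of the recursion terminates at an initial sequent via the check at line~1. Reading the trace in reverse yields a derivation that uses rules of $\gtdsnk$ together with $\settlc$, $\Oilc$, and $\stitlc$; by \lem~\ref{lem:admissible-rules}, these admissible rules can be simulated inside $\gtdsnk$ proper, producing a bona-fide $\gtdsnk$-proof of $\emptyset \sar w_{0} : \phi$. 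Soundness (\thrm~\ref{thm:soundness-gtdsn}) then gives $\Vdash_{\dsnk} \phi$.

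For~(ii), if $\provenk$ returns $\texttt{false}$, the recursion must reach a stable sequent $\Lambda := \rel \sar \Gamma$ that is not an initial sequent, from which \lem~\ref{lem:stability-model-is-DSn} supplies a finite $\dsnk$-model $\mstamod$. The task reduces to a \emph{falsification lemma}: for every $u \in \wstamod$ and every $u : \psi \in \Gamma$, $\mstamod, u \not\Vdash \psi$. Applied to $w_{0} : \phi$---which lies in the consequent and is unblocked since the root of $\gt^{\thrd}_{\Lambda}$ is never blocked by \dfn~\ref{def:blocking}---this yields $\mstamod, w_{0} \not\Vdash \phi$, witnessing $\dsnk$-invalidity. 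The lemma is proved by induction on the complexity of $\psi$: the atomic cases use $\scid$ and the clause defining $\vstamod$; the propositional cases use $\sccon$ and $\scdis$; the $\Box$, $\Diamond$, $\Oi$, $\ODi$ cases are closed using $\scbox$, $\scdia$, $\scoi$, $\scodi$, $\scdii$, together with the IH and, where relevant, the definition of $\istamod$ and the frame properties \textbf{(D1)}--\textbf{(D3)} verified in \lem~\ref{lem:stability-model-is-DSn}.

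The main obstacle lies in the $[i]\psi$ and $\agdia \psi$ cases, where blocking interacts nontrivially with the modal semantics. For $u : [i]\psi$ with $u \in \wstamod$, the condition $\scagbox$ supplies some $v \in \lab(\Lambda)$ with $R_{[i]}uv \in \rel$ and $v : \psi \in \Gamma$, but $v$ itself may be blocked; I will split into cases according to whether $v$ is unblocked (clause~(i) of \dfn~\ref{def:stability-model} gives $\rstamod u v$ and IH closes the case), directly blocked by a loop ancestor $v'$ (the equality $\Gamma \restriction v = \Gamma \restriction v'$ transfers $\psi$ to $v' \in \wstamod$ and clause~(ii) yields $\rstamod u v'$), or indirectly blocked (traced via $\gt^{\thrd}_{\Lambda}$ to an unblocked loop ancestor, exploiting the Euclidean closure in the definition of $\rstamod$ together with \lem~\ref{lem:reacability-is-equiv-rel}). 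The $\agdia \psi$ case is dual and hinges on the formula-propagation of $\scagdia$ along $R_{[i]}$-edges, which ensures that every $\rstamod$-successor of $u$ inside $\wstamod$ carries $\psi$ in its local consequent and hence falsifies it by IH; here \rmk~\ref{rmk:ioat-label-unblocked} together with the $\ioat$-satisfaction guaranteed by stability handles successors routed through $\ioat$-labels.
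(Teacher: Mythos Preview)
Your overall architecture matches the paper's: part~(i) via admissibility of the auxiliary rules, part~(ii) via a falsification lemma proved by induction on formula complexity over the stability model. The $[i]\psi$ case is fine, though your third subcase (indirectly blocked~$v$) is vacuous: since $u \in \wstamod$ is unblocked, no proper ancestor of $u$ is directly blocked, and the witness $v$ produced by the algorithm is a child of $u$ (or of the non-$\ioat$ predecessor in the $\stitlc$ case) in $\gt^{\thrd}_{\Lambda}$, so $v$ can only be unblocked or a loop node. The paper accordingly considers only those two cases.

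The genuine gap is in your $\agdia\psi$ case. Calling it ``dual'' undersells it; it is in fact the hardest case, and the paper devotes the bulk of its argument to it. The issue is that $\rstamod$ is the \emph{Euclidean closure} of $S^{\Lambda}_{[i]}$, so an $\rstamod$-successor $v$ of $u$ need not satisfy $R_{[i]}uv \in \rel$; it may be reachable only via an undirected $S^{\Lambda}_{[i]}$-path that passes through loop-node/loop-ancestor identifications (clause~(ii) of \dfn~\ref{def:stability-model}). Your sentence ``formula-propagation of $\scagdia$ along $R_{[i]}$-edges \ldots\ ensures that every $\rstamod$-successor carries $\psi$'' conflates $R_{[i]}$-edges in $\rel$ with $\rstamod$-edges and does not explain how $\psi$ survives the loop transitions. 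The paper closes this by an inner induction on the minimal length of an undirected $S^{\Lambda}_{[i]}$-path from $u$ to $v$, showing that \emph{both} $v:\xi$ and $v:\agdia\xi$ lie in $\Gamma$ at every step. Propagating $\agdia\xi$ (not just $\xi$) is essential: when you cross from a loop node $z$ to its loop ancestor $z'$ via $\Gamma\restriction z = \Gamma\restriction z'$, you need $z':\agdia\xi$ available so that $\scagdia$ can fire again on the next $R_{[i]}$-edge out of $z'$. This is precisely why $\scagdia$ adds $u:\agdia\phi$ as well as $u:\phi$, and why the admissible rule $\stitdialc$ (rather than plain $\stitdiar$) is used in the algorithm---a point the paper flags explicitly. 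Your appeal to \rmk~\ref{rmk:ioat-label-unblocked} and $\ioat$-satisfaction is a red herring here; the difficulty is the Euclidean closure, not the status of $\ioat$-labels.
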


\begin{proof} (i) 
Every instruction in $\provenk$ can be seen as a bottom-up application of a rule in $\gtdsnk$ or of an admissible rule from \fig~\ref{fig:admiss-rules-for-proof-search} with $\ioaop$ equivalent to a series of bottom-up $\ioa$ applications. 
 We can invoke \thrm~\ref{thm:gtdsnk-properties}-1 to obtain proofs of all top sequents.

(ii) Let us suppose that $\provenk(\emptyset \sar w_{0} :\phi) = \mathtt{false}$. It follows that $\provenk(\emptyset \sar w_{0} :\phi)$ generated a thread $\thrd = (\rel_{i} \sar \Gamma_{i})_{i=0}^{h}$ from $\emptyset \sar w_{0} : \phi$ to $\Lambda := \rel \sar \Gamma$ with $\Lambda$ stable. We may construct the stability model $\mstamod$ as specified in \dfn~\ref{def:stability-model}. By \lem~\ref{lem:stability-model-is-DSn}, we know that $\mstamod$ is a finite $\dsnk$-model. Let us now prove that for any $u \in \wstamod$ with $u : \psi \in \Gamma$, $\mstamod , u \not\Vdash \psi$. We show this by induction on the complexity of $\psi$ and note that since $w_{0} : \phi \in \Gamma$, the above claim implies that $\mstamod , w_{0} \not\Vdash \phi$, meaning that $w_{0} : \phi$ is not provable in $\gtdsnk$ by soundness (\thrm~\ref{thm:soundness-gtdsn}). We only consider the 
inductive step when $\psi$ is of the form $\agdia \xi$, $\ODi \xi$, $\agbox \xi$, and $\Oi \xi$ since all remaining cases are simple or similar.

\textbf{$\agdia \xi$.} Suppose $u : \agdia \xi \in \Gamma$ and assume that for some arbitrary $v \in \wstamod$, $\rstamod uv$. We aim to show that $\mstamod, v \Vdash \xi$, and have two cases to consider: either (i) $R_{\agbox}uv \in \rel$, or (ii) $R_{\agbox}uv \not\in \rel$.

(i) If $R_{\agbox}uv \in \rel$, then since $\Lambda$ is stable, we know that $v : \xi \in \Gamma$. Hence, by IH and the definition of $\rstamod$ we have that $\mstamod, v \not\Vdash \xi$. 

(ii) Suppose that $R_{\agbox}uv \not\in \rel$. We define $u \sim_{i}^{S} v$ \iffi $(u,v)$ or $(v,u) \in S^{\Lambda}_{\agbox}$, and prove that $v : \xi, v : \agdia \xi \in \Gamma$ by induction on the minimal length of an undirected path $u \sim_{i}^{S} z_{1}, \ldots, z_{n} \sim_{i}^{S} v$ from $u$ to $v$ in $S^{\Lambda}_{\agbox}$. (NB. See \dfn~\ref{def:stability-model} above for the definition of $S^{\Lambda}_{\agbox}$.) Note that the path cannot be of length $0$, since then $u = v$, implying that $R_{\agbox}uu \not\in \rel$ by our assumption, which contradicts the fact that $R_{\agbox}uu \in \rel$ by the saturation condition $\scref$.

\textit{Base case.} Suppose that $(u,v) \in S^{\Lambda}_{\agbox}$; the case when $(v,u) \in S^{\Lambda}_{\agbox}$ is shown similarly. Then, $(u,v) \in S^{\Lambda}_{\agbox}$ must hold due to clause (ii) of the definition of $S^{\Lambda}_{\agbox}$ (\dfn~\ref{def:stability-model}). Hence, there is a loop node $z$ with $v$ its loop ancestor such that $R_{\agbox}uz \in \rel$. Since $u : \agdia \xi \in \Gamma$, we know that $z : \xi, z : \agdia \xi \in \Gamma$, implying that $v : \xi, v : \agdia \xi \in \Gamma$ as $v$ is the loop ancestor of $z$.

\textit{Inductive step.} Let $u \sim_{i}^{S} z_{1}, \ldots, z_{n} \sim_{i}^{S} v$ be a minimal undirected path between $u$ and $v$ whose length is $n+1$. By IH, we know that $z_{n} : \xi, z_{n} : \agdia \xi \in \Gamma$. Suppose that $(v,z_{n}) \in S^{\Lambda}_{\agbox}$; the case when $(z_{n},v) \in S^{\Lambda}_{\agbox}$ is shown similarly. We have two cases to consider: either (i) $R_{\agbox}vz_{n} \in \rel$, or (ii) $R_{\agbox}vz_{n} \not\in \rel$.

(i\textsuperscript{*}) If $R_{\agbox}vz_{n} \in \rel$, then by saturation conditions $\scref$ and $\sceuc$, we know 
$R_{\agbox}z_{n}v \in \rel$. We may
conclude that $v : \xi, v :\agdia \xi \in \Gamma$ since $\Lambda$ satisfies the saturation condition $\scagdia$.

(ii\textsuperscript{*}) Suppose that $R_{\agbox}vz_{n} \not\in \rel$. Then, $(v,z_{n}) \in S^{\Lambda}_{\agbox}$ must hold due to clause (ii) of the definition of $S^{\Lambda}_{\agbox}$, implying the existence of a loop node $z$ with $z_{n}$ its loop ancestor such that $R_{\agbox}vz \in \rel$. Since $z_{n}$ is the loop ancestor of $z$, we have that $z : \xi, z : \agdia \xi \in \Gamma$. By the fact that $R_{\agbox}vz \in \rel$ and by saturation conditions $\scref$ and $\sceuc$, we know that $v : \xi, v : \agdia \xi \in \Gamma$.\footnote{This case of our proof demonstrates the utility of the $\stitdialc$ rule. Since $z : \agdia \xi$ occurs in $\Gamma$, and $R_{\agbox}vz \in \rel$ implies $R_{\agbox}zv \in \rel$ by the saturation conditions $\scref$ and $\sceuc$, we know that $v : \xi, v : \agdia \xi \in \Gamma$. If we utilized the $\stitdiar$ rule rather than $\stitdialc$, we would not immediately have that $z : \agdia \xi \in \Gamma$.}
\label{pref:NB-on-admissible-stitdia-rule}

Hence, regardless of the length of a minimal path between $u$ and $v$, we have that $v : \xi, v : \agdia \xi \in \Gamma$, showing that $\mstamod, v \not\Vdash \xi$ by IH. As $v$ is arbitrary, may conclude that $\mstamod, u \not\Vdash \agdia \xi$.

\textbf{$\ODi \xi$.} Suppose $u : \ODi \xi \in \Gamma$. Since $\Lambda$ is stable we know that $v : \xi \in \Gamma$ for all $v$ such that $\relio v \in \rel$. Hence, by IH and the definition of $\istamod$ we have that $\mstamod, v \not\Vdash \xi$ for all $v \in \wstamod$ such that $\istamod v$, implying $\mstamod, u \not\Vdash \ODi \xi$.

\textbf{$\agbox \xi$.} Suppose $u : \agbox \xi \in \Gamma$. Since $\Lambda$ is stable, we know that 
there exists a $v \in \lab(\Lambda)$ such that $\reli uv \in \rel$ and $v :\xi \in \Gamma$. There are two cases: 
 (i) $v$ is a loop node or (ii) $v$ is not.

(i) If $v$ is a loop node, then there exists a loop ancestor $z$ (which is therefore unblocked) such that $z : \xi \in \Gamma$ and $\rstamod uz$. Hence, by IH, we have that for some $z \in \wstamod$, $\mstamod, z \not\Vdash \xi$.

(ii) By IH, the definition of $\wstamod$, and the definition of $\rstamod$, it follows that there exists a $v \in \wstamod$ such that $\rstamod uv$ and $\mstamod, v \not\Vdash \xi$. 

Hence, $\mstamod, u \not\Vdash \agbox \xi$.

\textbf{$\Oi \xi$.} Suppose $u : \Oi \xi \in \Gamma$. Since $\Lambda$ is stable, we know that there exists an unblocked $v \in \lab(\Lambda)$ such that $\relio v \in \rel$ and $v :\xi \in \Gamma$. By IH, the definition of $\wstamod$, and the definition of $\istamod$, it follows that there exists a $v \in \istamod$ such that $\istamod v$ and $\mstamod, v \not\Vdash \xi$. Hence, $\mstamod, u \not\Vdash \Oi \xi$.
\end{proof}

Let us now argue that $\provenk$ terminates regardless of its input $\emptyset \sar w_{0} : \phi$ for $\phi \in \lang_{n}$. We first define the set of subformulae of a given formula $\phi$ (Definition~\ref{def:subformula}), which due to its finiteness 
limits the formulae that can occur during proof-search. We then prove 
that only a finite number of labels can be created during proof-search, implying that only a finite number of sequents can be generated, and yielding the termination of $\provenk$.

\begin{definition}[Subformula]\label{def:subformula} Let $\phi$ be a formula in $\lang_{n}$. We define the set $\sufo(\phi)$ of subformulae of $\phi$ recursively as follows:
\begin{itemize}

\item $\sufo(\phi) = \{\phi\}$ for $\phi \in \{p, \negnnf{p} \ | \ p \in Var\}$;

\item $\sufo(\triangledown \psi) = \{\triangledown \psi\} \cup sufo(\psi)$ for $\triangledown\in \{\Box, \Diamond\} \cup \{\Oi, \agbox, \ODi, \agdia \ | \ i \in Ag\}$;

\item  $\sufo(\psi \circ \chi) = \sufo(\psi) \cup \sufo(\chi)$ for $\circ \in \{\land, \lor\}$.

\end{itemize}
\end{definition}

\begin{lemma}\label{lem:termination-label-intro} Let $n, k \in \mathbb{N}$ and $\phi \in \lang_{n}$.
\begin{enumerate}

\item For each $\Box \psi \in \sufo(\phi)$, $\settlc$ and $\settr$ are applied bottom-up at most once in any thread during the computation of $\provenk(\emptyset \sar w_{0} : \phi)$;

\item For each $\Oi \psi \in \sufo(\phi)$, $\Oilc$ and $\Oir$ are applied bottom-up at most once in any thread during the computation of $\provenk(\emptyset \sar w_{0} : \phi)$;

\item For each $i \in \Ag$, $\dtwo$ is applied bottom-up at most once in any thread during the computation of $\provenk(\emptyset \sar w_{0} : \phi)$;

\item $\stitr$ and $\stitlc$ will be applied only a finite number of times in any thread of the computation of $\provenk(\emptyset \sar w_{0} : \phi)$;

\item $\ioaop$ will be applied only a finite number of times in any thread of the computation of $\provenk(\emptyset \sar w_{0} : \phi)$.

\end{enumerate}
\end{lemma}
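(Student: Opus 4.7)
For parts (1)--(3), the plan is a direct book-keeping argument reading off the guarded conditionals of $\provenk$. For each $\Box\psi\in\sufo(\phi)$, the conditional at lines 41--43 fires only when there is no label $u\in\lab(\Lambda)$ with $u:\psi\in\Gamma$. After the combined $\settlc/\settr$ step, the fresh label $v$ satisfies $v:\psi\in\Gamma'$, and moreover the label of the principal formula has been rewritten to $w_0:\Box\psi$; since $v:\psi$ is never removed (no rule of $\gtdsnk$ deletes labeled formulae) the guard is permanently falsified for $\Box\psi$ along this thread, so the pair $(\settlc,\settr)$ fires at most once per $\Box\psi\in\sufo(\phi)$. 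The argument for (2) is identical, with the additional observation that the fresh witness $v$ is decorated with $\relio v$, so the guard of lines 44--47 is also permanently falsified. For (3), the guard at lines 48--50 requires $\relio w\notin\rel$ for \emph{every} $w\in\lab(\Lambda)$; after a single application for agent $i$, the fresh $\relio u$ persists, so $\dtwo$ fires at most once per agent.

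For part (4), the main obstacle, I would combine (1)--(3) with the blocking mechanism and the finiteness of $\sufo(\phi)$ to bound the generation tree $\gt^\thrd_\Lambda$. First observe that labels are generated only by $\settr$, $\Oir$, $\dtwo$, $\stitr$, $\stitlc$, and $\ioaop$; by Remark~\ref{rmk:admiss-rules-use} and Definition~\ref{def:generation-tree}, only the first five contribute nodes to $\gt^\thrd_\Lambda$, and by (1)--(3) only finitely many of those nodes are contributed by $\settr$, $\Oir$, and $\dtwo$ (all of them children of the root $w_0$). Hence along any branch of $\gt^\thrd_\Lambda$, beyond an initial segment of bounded length, every node is produced by $\stitr$ or $\stitlc$. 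Each such node $u$ is labelled with the set $\Gamma\!\restriction\!u\subseteq\sufo(\phi)$ together with the finite family $\{i\in\Ag\mid\opt_{\Oi}u\in\rel\}$; the number of distinct such pairs is bounded by $2^{|\sufo(\phi)|}\cdot 2^{n+1}$. By Definition~\ref{def:blocking}, once two nodes on the same branch (neither being the root) coincide on this data, the deeper one is directly blocked; by the guard of lines 51--57, $\stitr$ and $\stitlc$ are only applied to unblocked labels, so no further descendants are generated along that branch. Thus every branch has bounded length. The tree is also finitely branching, since at each node the number of rule applications that produce a child is bounded by $|\sufo(\phi)|$ (one per $\agbox$-subformula). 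By König's lemma $\gt^\thrd_\Lambda$ is finite, and so $\stitr$ and $\stitlc$ are applied only finitely often in $\thrd$.

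For part (5), I would argue that $\ioaop$ fires only when there is an $\ioat$-tuple $(w_1,\ldots,w_n)$ of non-$\ioat$-labels that is not $\ioat$-satisfied. Since $\ioaop$ introduces only $\ioat$-labels (Definition~\ref{def:ioa-terminology-operation}), and non-$\ioat$-labels are generated exclusively by the rules covered by (1)--(4), there are only finitely many non-$\ioat$-labels in any thread, hence only finitely many $\ioat$-tuples. By Lemma~\ref{lem:IOA-sat}, once $\ioaop$ is applied every tuple over the current label set becomes $\ioat$-satisfied, and by Definition~\ref{def:Ri-path} adding further relational atoms preserves $\ioat$-satisfaction of existing tuples; therefore a new application of $\ioaop$ is possible only after new non-$\ioat$-labels have been introduced. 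Since only finitely many such introductions occur, $\ioaop$ fires only finitely often. The combination of (1)--(5) then yields that only finitely many labels appear in any thread, setting up the subsequent termination theorem. The one subtlety I expect to spend care on is confirming that none of the non-generating rules (lines 5--40) affect the guards or the blocking status in ways that could re-enable an already-used generating rule, which is straightforward since these rules add neither labels nor new modal formulae at fresh labels.
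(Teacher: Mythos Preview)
Your proposal is correct and follows essentially the same approach as the paper: parts (1)--(3) argue that the relevant guards are permanently falsified after one firing, part (4) bounds the generation tree via finite branching and bounded depth from the pigeonhole/blocking argument (you make K\"onig's lemma explicit where the paper leaves it implicit), and part (5) reduces to (1)--(4). The only minor difference is in (5): the paper argues that eventually all non-$\ioat$-labels are exhausted and then a single final $\ioaop$ suffices, whereas you argue that between any two $\ioaop$ applications a new non-$\ioat$-label must appear---both are valid and amount to the same bound.
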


\begin{proof} Claims 1 and 2 are straightforward since once $\settr$ and $\Oir$ are applied bottom-up (which is preceded by an application of $\settlc$ and $\Oilc$, respectively, as discussed in \rmk~\ref{rmk:admiss-rules-use}) with $w_{0} : \Box \psi$ or $w_{0} : \Oi \psi$ principal, then the saturation conditions $\scbox$ and $\scoi$ will hold for \emph{any} labeled formula of the form $u : \Box \psi$ or $u : \Oi \psi$ for the remainder of proof-search. Similarly, once $\dtwo$ is applied bottom-up for an agent $i \in \Ag$, it never need be applied for that agent again as $\scdii$ will continue to be satisfied for the remainder of proof-search. We therefore dedicate the remainder of the proof to showing claims 4 and 5.

4. Observe that any generation tree defined relative to a thread $\thrd$ generated during the computation of $\provenk(\emptyset \sar w_{0} : \phi)$ has a finite branching factor bounded by the number of agents plus the number of $\Box$, $\Oi$, and $\agbox$ subformulae of $\phi$. Moreover, since each label $u$ can only be associated with formulae from $\{\ideal u \ | \ i \in \Ag\} \cup \sufo(\phi)$, which is a finite set, we know the depth of the generation tree is 
bounded by a finite number as eventually a loop node must occur along any given path in the generation tree. Hence, since each bottom-up application of $\stitr$ and $\stitlc$ corresponds to an edge in a generation tree, we have that $\stitr$ and $\stitlc$ can be bottom-up applied only a finite number of times in any thread.

5. Note that only $\ioaop$ and bottom-up applications of $\settr$, $\Oir$, $\dtwo$, $\stitr$, and $\stitlc$ add fresh labels during proof-search. By claims 1--4 above, we know that if we consider any thread of proof-search then eventually all possible labels that could be added via $\settr$, $\Oir$, $\dtwo$, $\stitr$, and $\stitlc$ will have been added. After such a point, $\ioaop$ will eventually be applied if the top sequent of a thread is not yet $\ioat$-satisfied, yielding an $\ioat$-satisfied sequent $\Lambda$ due to \lem~\ref{lem:IOA-sat}. Since no other bottom-up applications of rules introduce labels that could violate the $\ioat$-satisfiability of a sequent, all sequents generated after $\Lambda$ will be $\ioat$-satisfied as well, meaning that $\ioaop$ will never be applied again. Hence, $\ioaop$ will only be applied a finite number of times in any thread during proof-search.
\end{proof}

\begin{theorem}[Termination]\label{thm:termination-prove} For each $n,k \in \mathbb{N}$ and $\phi \in \lang_{n}$, $\provenk(\emptyset \sar w_{0} : \phi)$ terminates.
\end{theorem}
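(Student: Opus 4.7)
The plan is to leverage Lemma~\ref{lem:termination-label-intro} to bound the total data that can appear in any sequent generated in a given thread, and then conclude by a König-style argument on the proof-search tree. In outline, I would first fix a thread $\thrd$ produced during the computation of $\provenk(\emptyset \sar w_{0} : \phi)$ and argue that the set of labels occurring along $\thrd$ is finite. Indeed, the only rules that introduce fresh labels are $\settr$, $\Oir$, $\dtwo$, $\stitr$, $\stitlc$, and $\ioaop$; by Lemma~\ref{lem:termination-label-intro}-(1,2,3) each of $\settr$, $\Oir$ (with its preceding $\settlc$/$\Oilc$) and $\dtwo$ is applied at most a bounded number of times in $\thrd$ (bounded by $|\sufo(\phi)|$ and $|\Ag|$), and by Lemma~\ref{lem:termination-label-intro}-(4,5) the rules $\stitr$, $\stitlc$, and $\ioaop$ also fire only finitely often. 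Thus $\lab(\Lambda)$ is bounded by some finite $N$ depending only on $\phi$, $n$, $k$.

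Next I would bound the non-label-generating data. Every labeled formula $u : \psi$ appearing in any sequent along $\thrd$ must satisfy $\psi \in \sufo(\phi)$ (inspection of the rules in $\gtdsnk$ and of Figure~\ref{fig:admiss-rules-for-proof-search} shows that bottom-up applications only decompose existing formulae or copy them to other labels), and every relational atom is of the form $R_{[i]}wu$ or $\ideal u$ with $w,u \in \lab(\Lambda)$ and $i \in \Ag$. Hence the whole set of labeled formulae and relational atoms that could possibly occur in any sequent along $\thrd$ is finite, of size bounded by $N^{2}\cdot n + N\cdot n + N\cdot|\sufo(\phi)|$.

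Then I would argue that each non-generating bottom-up rule application in $\thrd$ strictly enlarges either the set of relational atoms or the set of labeled formulae of the current sequent: the guards in $\provenk$ (lines 5--40) only fire when the corresponding saturation condition is violated, and the rule application restores it by inserting new data drawn from the finite pool above. Combined with the finitely many generating applications established in Lemma~\ref{lem:termination-label-intro}, this shows the length of $\thrd$ is bounded by a finite number $B$ depending only on $\phi$, $n$, $k$. Since $\thrd$ was arbitrary, every branch of the proof-search tree has length at most $B$.

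Finally, the branching of the recursion tree is finite at every node: each call of $\provenk$ recurses along at most the number of premises of the rule it applies, which is $\leq 2$ for $\conr$ and $\leq k(k+1)/2$ for $\choicer$, and exactly $1$ for all other rules. Thus the recursion tree is finitely branching and has uniformly bounded depth, so it is finite, and $\provenk(\emptyset \sar w_{0} : \phi)$ terminates. The main subtlety I expect is in the step where non-generating rules are shown to strictly increase the finite store of data; this hinges on the fact that every guard in the algorithm is phrased as the negation of the corresponding saturation condition, so a rule application both witnesses an absent piece of data and inserts exactly that piece, preventing immediate re-triggering and forcing progress towards the (finite) saturated state.
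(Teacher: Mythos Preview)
Your proposal is correct and follows essentially the same approach as the paper: use Lemma~\ref{lem:termination-label-intro} to bound the set of labels along any thread, observe that the resulting pool of possible relational atoms and labeled formulae is finite, and then use the fact that every recursive call strictly enlarges the current sequent to conclude that each thread is finite. The paper's proof is terser---it does not spell out the explicit size bound or the finite-branching/K\"onig step you include, simply noting that strictly growing sequents in a finite space of possibilities force termination---but the argument is the same in substance.
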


\begin{proof} By \lem~\ref{lem:termination-label-intro} above we know that only a finite number of labels will be introduced during the computation of $\provenk(\emptyset \sar w_{0} : \phi)$. Furthermore, each label will only be associated with a finite number of formulae from $\sufo(\phi)$ and a finite number of relational atoms. Hence, only a finite number of possible sequents can be generated during proof-search. Since each recursive call of proof-search, i.e.\ bottom-up application of a rule, creates a strictly larger sequent (as each recursive call introduces at least one new relational atom or labeled formula to a sequent), $\provenk$ never generates the same sequent twice in any given thread, implying that the algorithm will eventually terminate due to the finite space of sequents that can be generated.
\end{proof}

Last, as a consequence of the above we obtain decidability and the finite model property.

\begin{corollary}[Decidability and FMP]\label{cor:fmp}
For each $n, k \in \mathbb{N}$, the logic $\dsnk$ is decidable and has the finite model property.
\end{corollary}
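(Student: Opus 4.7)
The plan is to derive both decidability and the finite model property as immediate bookkeeping on top of the three meta-results just established for $\provenk$: termination (Theorem~\ref{thm:termination-prove}), correctness (Theorem~\ref{thm:correctness-prove}), and soundness of $\gtdsnk$ (Theorem~\ref{thm:soundness-gtdsn}). No new constructions or lemmas are needed; the work is to package what has been proved.

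For decidability, I would argue that the map $\phi \mapsto \provenk(\emptyset \sar w_{0} : \phi)$ is a total computable function by Theorem~\ref{thm:termination-prove}, and that its Boolean output correctly tracks $\dsnk$-validity. If $\provenk$ returns $\mathtt{true}$, Theorem~\ref{thm:correctness-prove}(i) extracts a $\gtdsnk$-proof of $\emptyset \sar w_{0} : \phi$, and soundness (Theorem~\ref{thm:soundness-gtdsn}) then gives $\Vdash_{\dsnk} \phi$. If $\provenk$ returns $\mathtt{false}$, Theorem~\ref{thm:correctness-prove}(ii) produces a finite $\dsnk$-model falsifying $\phi$, so $\not\Vdash_{\dsnk} \phi$. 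Hence $\provenk$ is a decision procedure for $\dsnk$, yielding decidability.

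For the finite model property, I would proceed by contrapositive. Suppose $\phi \in \lang_{n}$ is not $\dsnk$-valid. Then by soundness, $\emptyset \sar w_{0} : \phi$ is not provable in $\gtdsnk$, so $\provenk(\emptyset \sar w_{0} : \phi)$ cannot return $\mathtt{true}$ (otherwise Theorem~\ref{thm:correctness-prove}(i) would yield a proof). By termination, however, the algorithm does return some Boolean, so it must return $\mathtt{false}$. Theorem~\ref{thm:correctness-prove}(ii) then hands us the stability model $\mstamod$, which is a finite $\dsnk$-model witnessing the invalidity of $\phi$; this is exactly the FMP.

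The only conceptual point worth flagging in the writeup — and the reason this cannot be dismissed as entirely trivial — is that termination is essential for forcing the dichotomy: without Theorem~\ref{thm:termination-prove}, unprovability could in principle manifest as a nonterminating computation rather than a $\mathtt{false}$ verdict, and the countermodel construction would never be triggered. There is no other obstacle to anticipate.
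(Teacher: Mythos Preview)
Your proposal is correct and matches the paper's approach: the paper states the corollary immediately after Theorems~\ref{thm:correctness-prove} and~\ref{thm:termination-prove} without giving an explicit proof, treating it as a direct consequence of exactly the ingredients you identify (termination forcing a Boolean output, correctness translating that output into a proof or a finite counter-model, and soundness closing the loop). Your remark that termination is what forces the $\mathtt{true}$/$\mathtt{false}$ dichotomy is the only nontrivial point, and it is the right one to highlight.
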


\begin{figure}[t]
    \centering
\scalebox{0.92}{
\begin{tikzpicture}
\pgftransformscale{.8}

\node (w) [ ] {$\overset{\boxed{\Diam [1] p, \Diam [2] q, [1]p, [2]q}}{w}$};

\node (u0) [below left=of w,xshift=2cm] {$\overset{\boxed{p, [1]p, [2]q}}{u_{0}}$};

\node (u1) [below right=of w,xshift=-2cm] {$\overset{\boxed{q, [1]p, [2]q}}{u_{1}}$};

\node (u2) [below=of u0,yshift=-.3cm] {$\overset{\boxed{q, [1]p, [2]q}}{u_{2}}$};

\node (u3) [below=of u2,yshift=-.3cm] {$\overset{\boxed{p, [1]p, [2]q}}{u_{3}}$};

\node (u4) [below=of u3,yshift=-.3cm] {$\overset{\boxed{q, [1]p, [2]q}}{u_{4}}$};

\node (u5) [below=of u1,xshift=-2cm,yshift=-.75cm] {$\vdots$};

\node (u6) [below=of u4,yshift=1.25cm] {$\vdots$};

\draw[-] (w) to node[above] {1} (u0);

\draw[-] (w) to node[above] {2} (u1);

\draw[-] (u0) to node[left] {2} (u2);

\draw[-] (u1) to node[above] {1} (u0);

\draw[-] (u2) to node[left] {1} (u3);

\draw[-] (u1) to node[above] {2} (u2);

\draw[-] (u3) to node[left] {2} (u4);

\draw[-] (u1) to node[above] {1} (u3);

\node[draw, dashed, color=black, rounded corners, inner xsep=20pt, inner ysep=5pt, fit=(w)(u0)(u1)] (c1) {}; 

\node (cc1) [below right=of c1,xshift=-1cm,yshift=1.5cm] {$\mathrm{S_{1}}$};

\node[draw, dashed, color=black, rounded corners, inner xsep=20pt, inner ysep=5pt, fit=(w)(u2)(u1)(c1)] (c2) {}; 

\node (cc2) [below right=of c2,xshift=-1cm,yshift=1.5cm] {$\mathrm{S_{2}}$};

\node[draw, dashed, color=black, rounded corners, inner xsep=20pt, inner ysep=5pt, fit=(w)(u3)(u1)(c2)] (c3) {}; 

\node (cc3) [below right=of c3,xshift=-1cm,yshift=1.5cm] {$\mathrm{S_{3}}$};

\end{tikzpicture}
}
\caption{Three `snapshots' of proof-search without loop-checking, showing that proof-search will continue ad infinitum if loop-checking is not enforced.}\label{fig:loop}
\end{figure}

\subsection{Necessity of Loop-Checking in Proof-Search}

We show that proof-search is not guaranteed to terminate without loop-checking, thus motivating its introduction and use. 
We provide an example of a (non-deontic) $\sti$ formula 
$\Diam [1] p \lor \Diam [2] q$, such that proof-search in $\mathsf{G3DS}_{2}^{2}$ does not terminate if we disregard loop-checking. 
In order to improve readability, we represent labeled sequents as labeled graphs, and explain how these graphs are expanded during proof-search:

\begin{definition}[Graph of Labeled Sequent] Let $\Lambda = \rel \sar \Gamma$ be a labeled sequent. We define its \emph{corresponding graph} to be the triple $G(\Lambda) = (V,E,L)$ such that $V = \lab(\Lambda)$, $E = \{(w,u,i) \ | \ R_{\agbox}wu \in \rel\}$, and $L(w) = \Gamma \restriction w$.
\end{definition}

Now, let us suppose that our proof-search algorithm omits loop-checking, meaning that lines 51--57 (see p.~\pageref{alg:with-loop-check}) are 
removed from $\provenk$ and 
replaced with the following lines of pseudo-code that simply apply the $\stitr$ rule bottom-up when the condition is met.

\medskip

\begin{algorithm}[H]
\If{$w : \agbox \phi \in \Gamma$, but $u : \phi \not\in \Gamma$ for all $u \in \lab(\Lambda)$ such that $R_{\agbox}wu \in \rel$}
    {
    Let $\rel' := R_{\agbox}wv, \rel$ and $\Gamma' := v :\phi, \Gamma$ with $v$ fresh;\\
    \Return $\provenk$($\rel' \sar \Gamma'$);
    }
\caption{Applying the $\stitr$ Rule without Loop-checking}
\end{algorithm}

\medskip

Now, suppose our proof-search algorithm takes $\sar w : \Diam [1] p \lor \Diam [2] q$ as an input. Then, the derivation shown below will initially be constructed yielding the labeled node $w$ shown at the top of the graph in \fig~\ref{fig:loop} (which corresponds to the top sequent in the derivation). The loops/relational atoms $R_{[1]}ww, R_{[2]}ww$ have been omitted in the graph. For simplicity, we always omit the presentation of loops in \fig~\ref{fig:loop} and assume their presence.

\begin{center}
\scalebox{0.93}{
\AxiomC{$R_{[1]}ww, R_{[2]}ww \sar w : \Diam [1] p, \Diam [2] q, w : [1] p, w : [2] q$}
\RightLabel{$\diar$}
\UnaryInfC{$R_{[1]}ww, R_{[2]}ww \sar w : \Diam [1] p, \Diam [2] q, w : [1] p$}
\RightLabel{$\diar$}
\UnaryInfC{$R_{[1]}ww, R_{[2]}ww \sar w : \Diam [1] p, \Diam [2] q$}
\RightLabel{$\disr$}
\UnaryInfC{$R_{[1]}ww, R_{[2]}ww \sar w : \Diam [1] p \lor \Diam [2] q$}
\RightLabel{$(\mathsf{Ref}_{2})$}
\UnaryInfC{$R_{[1]}ww \sar w : \Diam [1] p \lor \Diam [2] q$}
\RightLabel{$(\mathsf{Ref}_{1})$}
\UnaryInfC{$\sar w : \Diam [1] p \lor \Diam [2] q$}
\DisplayProof
}
\end{center}

As the top sequent in the derivation above is not stable, our algorithm will continue proof-search, thus yielding the derivation shown below, whose top sequent corresponds to the graph shown in `snapshot 1', i.e.\ the dashed rectangle $\mathrm{S_{1}}$ in \fig~\ref{fig:loop}, though without the horizontal $1$-edge between $u_{0}$ and $u_{1}$. We refer to this graph as $G_{1}$. 
 To improve readability, and due to the size of the labeled sequents occurring in the derivation below, we have included ellipses to indicate the side formulae that are inherited from lower sequents. In 
 \fig~\ref{fig:loop}, we have opted to denote pairs of relational atoms that form a 2-cycle (e.g.\ $R_{[1]}wu_{0}, R_{[1]}u_{0}w$ and $R_{[1]}wu_{1}, R_{[2]}u_{1}w$) as a single, undirected edge to simplify presentation.

\begin{center}
\scalebox{0.93}{
\AxiomC{}
\UnaryInfC{$R_{[1]}wu_{1}, \ldots \sar u_{1} : q, u_{1} : [1] p, u_{1} : [2] q, \ldots$}
\RightLabel{$\diar$}
\UnaryInfC{$R_{[1]}wu_{1}, \ldots \sar u_{1} : q, u_{1} : [1] p, \ldots$}
\RightLabel{$\diar$}
\UnaryInfC{$R_{[1]}wu_{1}, R_{[2]}u_{1}w, \ldots \sar u_{1} : q, \ldots, w : [2] q$}
\RightLabel{$(\mathsf{Euc}_{2})$}
\UnaryInfC{$R_{[1]}wu_{1}, R_{[1]}u_{1}u_{1}, R_{[2]}u_{1}u_{1}, \ldots \sar u_{1} : q, \ldots, w : [2] q$}
\RightLabel{$(\mathsf{Ref}_{2})$}
\UnaryInfC{$R_{[1]}wu_{1}, R_{[1]}u_{1}u_{1}, \ldots \sar u_{1} : q, \ldots, w : [2] q$}
\RightLabel{$(\mathsf{Ref}_{1})$}
\UnaryInfC{$R_{[2]}wu_{1}, \ldots \sar u_{1} : q, \ldots, w : [2] q$}
\RightLabel{$([2])$}
\UnaryInfC{$R_{[1]}wu_{0}, \ldots \sar u_{0} : p, u_{0} : [1] p, u_{0} : [2] q, \ldots, w : [2] q$}
\RightLabel{$\diar$}
\UnaryInfC{$R_{[1]}wu_{0}, \ldots \sar u_{0} : p, u_{0} : [1] p, \ldots, w : [2] q$}
\RightLabel{$\diar$}
\UnaryInfC{$R_{[1]}wu_{0}, R_{[1]}u_{0}w, \ldots \sar u_{0} : p, \ldots, w : [2] q$}
\RightLabel{$(\mathsf{Euc}_{1})$}
\UnaryInfC{$R_{[1]}wu_{0}, R_{[1]}u_{0}u_{0}, R_{[2]}u_{1}u_{1}, \ldots \sar u_{0} : p, \ldots, w : [2] q$}
\RightLabel{$(\mathsf{Ref}_{2})$}
\UnaryInfC{$R_{[1]}wu_{0}, R_{[1]}u_{0}u_{0}, \ldots \sar u_{0} : p, \ldots, w : [2] q$}
\RightLabel{$(\mathsf{Ref}_{1})$}
\UnaryInfC{$R_{[1]}wu_{0}, \ldots \sar u_{0} : p, \ldots, w : [2] q$}
\RightLabel{$([1])$}
\UnaryInfC{$R_{[1]}ww, R_{[2]}ww \sar w : \Diam [1] p, \Diam [2] q, w : [1] p, w : [2] q$}
\DisplayProof
}
\end{center}

One can see that $\mathbf{(C_{[1]})}$ and $\mathbf{(C_{[2]})}$ are unsatisfied in $G_{1}$ due to the occurrence of $[1]p$ at $u_{1}$ and $[2]q$ at $u_{0}$, respectively. We assume w.l.o.g. that proof-search applies $([2])$ bottom-up, creating a new label/vertex $u_{2}$ emanating from $u_{0}$, which can be seen in \fig~\ref{fig:loop}. Proof-search will then apply $(\mathsf{Ref}_{1})$ and $(\mathsf{Ref}_{2})$, introducing loops at $u_{2}$, followed by an application of $(\mathsf{Euc}_{2})$, introducing a $2$-edge between $u_{2}$ and $u_{0}$. Recall that $k = 2$, that is, each agent has a maximum of two choices. Since no relational atom of the form $R_{[1]}vz$ with $v, z \in \{u_{0}, u_{1}, u_{2}\}$ and $v \neq z$ exists within our labeled sequent/graph constructed thus far, i.e.\ there is no $1$-edge between $u_{0}$, $u_{1}$, or $u_{2}$, the $\mathbf{(C_{APC})}$ saturation condition is not satisfied. Hence, proof-search will apply $(\mathsf{APC}_{1}^{2})$ bottom-up, yielding a premise with 
$R_{[1]}u_{1}u_{0}$, followed by an application of $(\mathsf{Euc}_{1})$, introducing 
$R_{[1]}u_{0}u_{1}$. After applying the $\settdiar$ rule bottom-up twice, introducing the formulae $[1]p$ and $[2]q$ at $u_{2}$, we obtain the labeled sequent/graph shown in `snapshot 2' in \fig~\ref{fig:loop}, i.e.\ the graph in the dashed rectangle $\mathrm{S_{2}}$, though without the diagonal $2$-edge between $u_{1}$ and $u_{2}$. We refer to this graph as $G_{2}$.

Observe that the $\mathbf{(C_{[1]})}$ saturation condition is not satisfied due to the occurrence of $[1]p$ at $u_{2}$ in $G_{2}$. Hence, 
the algorithm will apply the $([1])$ rule bottom up, introducing the label $u_{3}$ protruding from $u_{2}$, followed by applications of $(\mathsf{Ref}_{1})$, $(\mathsf{Ref}_{2})$, and $(\mathsf{Euc}_{1})$. One can see that $\mathbf{(C_{APC})}$ is not satisfied since a $2$-edge does not exist between $u_{1}$, $u_{2}$, or $u_{3}$, and so, $(\mathsf{APC}_{2}^{2})$ will be applied bottom-up. This rule application, in conjunction with an application of $(\mathsf{Euc}_{2})$,  introduces $R_{[2]}u_{1}u_{2}$ and $R_{[2]}u_{2}u_{1}$, which are represented as an undirected $2$-edge between $u_{1}$ and $u_{2}$. After two applications of the $\settdiar$ rule, we obtain the labeled sequent/graph in `snapshot 3' in \fig~\ref{fig:loop}, i.e.\ the graph in the dashed rectangle $\mathrm{S_{3}}$, although without the diagonal $1$-edge between $u_{1}$ and $u_{3}$. We call this graph $G_{3}$.

The $\mathbf{(C_{[1]})}$ saturation condition is not satisfied in $G_{3}$ due to the occurrence of $[2]q$ at $u_{3}$. As a consequence, our proof-search algorithm will apply the $([2])$ rule bottom up, introducing the label $u_{4}$ protruding from $u_{3}$, followed by applications of $(\mathsf{Ref}_{1})$, $(\mathsf{Ref}_{2})$, and $(\mathsf{Euc}_{2})$. At this stage, $\mathbf{(C_{APC})}$ is not satisfied in the graph since a $1$-edge does not exist between $u_{1}$, $u_{3}$, or $u_{4}$, and so, $(\mathsf{APC}_{1}^{2})$ will be applied bottom-up. This rule application yields a premise that, in conjunction with an application of $(\mathsf{Euc}_{1})$, will introduce $R_{[1]}u_{1}u_{3}$ and $R_{[1]}u_{3}u_{1}$ (represented as the undirected $1$-edge between $u_{1}$ and $u_{3}$). After two applications of the $\settdiar$ rule, we obtain the labeled sequent/graph shown in \fig~\ref{fig:loop}. By repeating the above pattern, one can see that the left path in the labeled sequent/graph will continue to grow through alternating bottom-up applications of the $([1])$ and $([2])$ rules, i.e.\ proof-search does not terminate for the formula $\Diam [1]p \lor \Diam [2]q$ in the absence of loop-checking.

We emphasize that, since the chosen formula is $\ODi$- and $\Oi$-free, the above example shows the importance of a loop-checking mechanism for (non-deontic) $\sti$ as well.\footnote{
Although the proof-search algorithm for non-deontic $\sti$ logic by \citeauthor{Negri2020} \citeyear{Negri2020} is susceptible to the above counter-example, we conjecture that a modified version of our loop-checking mechanism can be used to ensure termination in their algorithm.}

\section{Applications: Duty, Compliance, and Joint Fulfillment Checking}\label{Sect:Applications}

Proof-search algorithms make a logic suitable for reasoning tasks. In this last section, we discuss three tasks of particular interest to agent-based normative reasoning:

\begin{description}

\item \textit{Duty Checking:} Determine an agent’s obligations relative to a given knowledge base.

\item
\textit{Compliance Checking}: Determine if a choice, considered by an agent as potential conduct, complies with the given knowledge base.

\item
\textit{Joint Fulfillment Checking}: Determine whether under a specified factual context an agent can (still) jointly fulfill all their duties.
\end{description}
The referred knowledge base is a set consisting of obligations and facts and, in what follows, we distinguish between the norm base and the factual context of the knowledge base. We provide an example for each of these applications. For the sake of readability, we consider a single-agent setting in all cases, but the approach generalizes to multi-agent reasoning. In considering the first two tasks, we assume that the given knowledge base is consistent, unless stated otherwise (we note that consistency can also be determined through proof-search).

 As argued for in \sect~\ref{Sect:Intro}, the derivations and (counter-)models obtained through our applications 
have explanatory value\footnote{For some purposes, formal proofs may be less explanatory than informal ones 
containing reasoning gaps. This is especially true when the intended explainee is 
a layperson \cite{Mil19}. For experts, derivations and interpretable models serve as appropriate explanations; cf. \cite{BirCot17}. 
} since both provide justifications of (non-)theoremhood by representing a constructive step-by-step reasoning process interpretable by the explainee.

\subsection{Duty Checking}\label{sect5.1}

Consider the following scenario: Yara borrowed a hammer from a friend (Kai) in order to repair a leaking shed. Furthermore, Yara promised to return the hammer to Kai before noon. Let \noon be the proposition ``Yara arrives at Kai's place before noon (to return the hammer).''  We say that (due to the promise) Yara is under the obligation to return the hammer before noon, i.e.\  $\otimes_{y} \noon$. We assume that Yara is free to see to it that she arrives before noon or not, represented as $\Diam [y] \noon \land \Diam [y] \negnnf{\noon}$. Now, Yara knows that Kai lives around the corner and considers her means of going there, e.g.\ she could go by foot or take the car. Suppose Yara entertains the idea to go by foot, i.e.\ $\Diam[y]\foot$. Moreover, let us assume that Yara knows, as a fact, that walking will ensure that she arrives by noon, i.e.\ $\Box(\foot\rightarrow\noon)$. 
 
 The knowledge base for this scenario is the collection
$$
\Sigma:=\{\Oy\noon, \Diam[y] \noon,  \Diam[y]\lnot \noon, \Diam[y]\foot, \Diam[y]\negnnf{\foot}, \Box(\foot\rightarrow\noon)\},
$$
though we simplify it to $\Sigma:=\{\Oy\noon, \Diam[y] \negnnf{\noon}, \Diam[y]\foot, \Box(\foot\rightarrow\noon)\}$ as $\Diam[y]\noon$ and $\Diam[y]\negnnf{\foot}$ are implied by the other formulae. Since Yara is obliged to meet Kai, Yara wonders whether she has the obligation to go by foot, i.e.\ $\Oy\foot$. Accordingly, we check whether the formula
 $$
\phi := (\Oy\noon\land\Diam[y]\lnot \noon \land \Diam[y]\foot\land \Box(\foot\rightarrow\noon))\rightarrow \Oy\foot
$$
is valid, i.e.\ we can check whether $w : \phi$ is derivable in $\mathsf{G3DS}^{0}_{1}$, where $\Ag = \{y\}$ and the limited choice parameter $k = 0$ since we have not enforced an \textit{a priori} limit on Yara's number of choices. Hence, we check whether the duty $\otimes_y\foot$ is implied by the context $\Sigma$. 

To test the validity of $\phi$, let us run our proof-search algorithm $\mathtt{Prove}_{1}^{0}(\emptyset \sar w : \phi)$. Doing so, one finds that $\phi$ is invalid as the stable sequent $\Lambda = \rel \sar \Gamma_{0}, \Gamma_{1}, \Gamma_{2}, \Gamma_{3}$ is generated via proof-search, where each set in $\Lambda$ is as follows:
\begin{description}
 
\item $\rel := R_{[y]}ww, R_{[y]}uu, R_{[y]}vv, R_{[y]}zz, I_{\otimes_{y}}z$;

\item $\Gamma_{0} := w : \ominus_{y} \negnnf{\noon}, w : \Box \langle y \rangle \noon, w : \Box \langle y \rangle \negnnf{\foot}, w : \Diam(\foot \land \negnnf{\noon}), w : \foot \land \negnnf{\noon}, w : \foot, w : \otimes_{y} \foot$;

\item $\Gamma_{1} := u : \langle y \rangle \noon, u : \noon, u : \foot \land \negnnf{\noon}, u : \foot$;

\item $\Gamma_{2} := v : \langle y \rangle \negnnf{\foot}, v : \negnnf{\foot}, v : \foot \land \negnnf{\noon}, v : \negnnf{\noon}$;

\item $\Gamma_{3} := z : \foot, z : \foot \land \negnnf{\noon}, z : \negnnf{\noon}$.
 
\end{description}
As explained in the previous section, we may transform $\Lambda$ into a counter-model $\mstamod$ for $\phi$. We define the stability model $M := (W,R_{[y]},I_{\Oy},V)$ in accordance with \dfn~\ref{def:stability-model}, where $W=\{w,u,v,z\}$, $R_{[y]}=\{(w,w),(u,u),(v,v),(z,z)\}$, $I_{\Oy}=\{z\}$, and $V(\noon) = \{w,v,z\}$ and $V(\foot)=\{v\}$. A graphical representation of $\mstamod$ is presented in \fig~\ref{fig:countermodel_1}.

\begin{figure}
\begin{center}
\scalebox{0.93}{
\begin{tikzpicture}

\node[
] (w) [] {$w: \no,\negnnf{\fo}$}; 

\node[
] (v) [right=of w, xshift=0mm,yshift=0mm] {$v:\no,\fo$};

\node[
] (u) [below=of w, xshift=0mm,yshift=0mm] {$u: \negnnf{\no},\negnnf{\fo}$};

\node[
] (z) [right=of u, xshift=0mm,yshift=0mm] {\textcolor{white}{$z: \no,\negnnf{\fo}$}};

\node[draw, very thick, 
 color=black, rounded corners, inner xsep=10pt, inner ysep=10pt, fit=(w) (v) (u) (z)] (m1) [label={[xshift=3cm, yshift=-3.5cm]}] {}; 

\node[
 very thick, 
, rounded corners,  inner xsep=6pt, inner ysep=6pt, fit=(z), pattern=north west lines, pattern color=Gray2] (c1) {}; 

\node[
] (v1) [right=of u, xshift=0mm,yshift=0mm] {$z: \no,\negnnf{\fo}$};

\node[draw,  thick, densely dashed, color=black 
, rounded corners, inner xsep=6pt, inner ysep=6pt, fit=(v)] (c1) {}; 

\node[draw,  thick, densely dashed, color=black 
, rounded corners, inner xsep=6pt, inner ysep=6pt, fit=(z)] (c1) {}; 

\node[draw,  thick, densely dashed, color=black 
, rounded corners, inner xsep=6pt, inner ysep=6pt, fit=(w)] (c1) {}; 

\node[draw,  thick, densely dashed, color=black 
, rounded corners, inner xsep=6pt, inner ysep=6pt, fit=(u)] (c1) {}; 

\end{tikzpicture}
}
\end{center}
\vspace{-0.5cm}
\caption{The stability model $\mstamod$ falsifying $\phi := (\Oy\noon\land\Diam[y]\lnot \noon \land \Diam[y]\foot\land \Box(\foot\rightarrow\noon))\rightarrow \Oy\foot$ in the logic $\mathsf{DS}^{0}_{1}$.\label{fig:countermodel_1}} %
\end{figure}

One may readily verify that $\phi$ is indeed falsified on $\mstamod$ at $w$, showing that Yara is not obliged to go to Kai's place by foot. This fact is explained by the above counter-model, namely, by the world $z$ which shows that Yara can consistently satisfy her obligation to arrive before noon without going by foot (perhaps she can take the bus). 

However, if going by foot is the \emph{sole means} to ensure a timely arrival---i.e.\ if we additionally assume that $\Box(\noon\rightarrow\foot)$---then it becomes obligatory for Yara to travel by foot. In this case, we say that going by foot is a necessary and sufficient condition for Yara to fulfill her duties. 
The following derivation in $\mathsf{G3DS}^{0}_{1}$ (which may be found by means of our proof-search algorithm) witnesses her obligation to travel by foot under the assumption $\Box(\noon\rightarrow\foot)$ as opposed to $\Box(\foot\rightarrow\noon)$. 
The right branch of the proof, denoted by $\Pi$, is a straightforward adaptation of the left branch.

\bigskip

\hspace{-1.2cm}{
\scalebox{0.91}{
\AxiomC{}
\RightLabel{$\id$}
\UnaryInfC{$\ldots \sar \ldots, w : \no, u : \foot, u : \noon, u : \negnnf{\noon}$}
\RightLabel{$\ODir$}
\UnaryInfC{$\ldots \sar \ldots, w : \no, u : \foot, u : \noon$}
\RightLabel{$\settdiar$}

\AxiomC{}
\RightLabel{$\id$}
\UnaryInfC{$\ldots \sar \ldots, w : \no, u : \foot, u : \negnnf{\foot}$}
\RightLabel{$\settdiar$}

\RightLabel{$\conr$}
\BinaryInfC{$R_{[y]}ww, R_{[y]}uu, I_{\otimes_{y}}u \sar \ldots, w : \no, u : \foot, u : \noon \land \negnnf{\foot}$}
\RightLabel{$\settdiar$}
\UnaryInfC{$R_{[y]}ww, R_{[y]}uu, I_{\otimes_{y}}u \sar \ldots, w : \no, u : \foot$}
\RightLabel{$(\mathsf{Ref}_{y})$}
\UnaryInfC{$R_{[y]}ww, I_{\otimes_{y}}u \sar \ldots, w : \no, u : \foot$}
\RightLabel{$\Oir$}
\UnaryInfC{$R_{[y]}ww, \sar \ldots, w : \no$}

\AXC{$\Pi$}
\UnaryInfC{$R_{[y]}ww, \sar \ldots, w : \negnnf{\fo}$}
\RL{$\conr$}

\insertBetweenHyps{\hskip -15pt}
\BIC{$R_{[y]}ww \Rightarrow w: \Ody \negnnf{\no}, w: \Diam(\no \land \negnnf{\fo}), w : \no \land \negnnf{\fo},  w: \Oy \fo$}\RL{$\diar$}
\UIC{$R_{[y]}ww \Rightarrow w: \Ody \negnnf{\no}, w: \Diam(\no \land \negnnf{\fo}), w: \Oy \fo$}\RL{$(\lor) \times 2$}

\UIC{$R_{[y]}ww \Rightarrow w: \Ody \negnnf{\no} \lor \Diam(\no \land \negnnf{\fo}) \lor \Oy \fo$}\RL{$(\mathsf{Ref}_{y})$}
\UIC{$\Rightarrow w: \Ody \negnnf{\no} \lor \Diam(\no \land \negnnf{\fo}) \lor \Oy \fo$}\RL{=}
\dashedLine
\UIC{$\Rightarrow w: (\Oy \no \land \Box (\no\rightarrow \fo))\rightarrow \Oy \fo$}
\DisplayProof
}
}
\bigskip

We may transform the above into a proof of $(\Oy \no \land \Diam[y] \negnnf{\no} \land \Diam[y] \fo \land \Box (\no\rightarrow \fo))\rightarrow \Oy \fo$ by weakening in the additional assumptions accordingly, using $\wk$. Hence, 
Yara does have the obligation to see to it that she goes by foot (i.e.\ $\Oy\fo$) given the strengthened assumption that going by foot is the only means by which Yara can get to Kai's by noon (i.e.\ $\Box(\noon\rightarrow\foot)$). The above discussion generalizes to arbitrary finite scenarios.

\subsection{Compliance Checking}

To determine whether a considered choice complies with a given knowledge base, consisting of a norm base and a factual context, we must determine whether the considered choice does not conflict with an obligation entailed by the norm base under the given factual context (recall that we assume the knowledge base to be consistent in this scenario). 

We emphasize that checking whether an agent’s conduct is \textit{consistent} with a given knowledge base does not define compliance checking. Consider the simple example where Jade is obliged to drive on the left-hand side of the road, i.e.\ $\otimes_j\leftj$ and where she considers driving on the right-hand side instead, i.e.\ $[j]\rightj$. Furthermore, assume that driving left and right are mutually exclusive, i.e.\ $\Box(\leftj\rightarrow\lnot\rightj)$. Such a scenario can be consistently modeled by a single-agent two-choice moment with one choice---the obligatory one---representing Jade driving on the left and another choice with Jade cycling on the right. The scenario is consistent, but there is no compliance: Jade violates her obligation by driving on the right side of the road. A choice is not compliant whenever its performance entails a norm violation. 
Therefore, we check compliance by determining whether there is no explicit obligation to the contrary of what the agent considers as potential conduct.

Reconsider the 
scenario where Yara promised Kai to return the borrowed hammer before noon, i.e.\ $\otimes_y\noon$. She considers going there by foot or by car and both are available choices, i.e.\ $\Diam[y]\foot \land\Diam[y]\car$. Yara wonders whether taking the car---i.e.\ $[y]\car$---complies with her promise. 
The knowledge base formalizing this situation is the following set of formulae:
$$\Sigma:=\{  \otimes_y\noon , \Diam[y]\foot \land\Diam[y]\car \}$$

To see if the choice $[y]\car$ complies with $\Sigma$ 
we need to check whether the formula 
$$(\otimes_y\noon\land\Diam[y]\foot\land\Diam[y]\car) \rightarrow \otimes_y\lnot \car$$
is derivable. If not, then this means there is no implicit obligation for Yara to not take the car and, so, taking the car complies with the current normative situation expressed by $\Sigma$. 

Clearly, 
$[y]\car$ complies due to the syntactic independence with $\otimes_y \noon$. 
As can be seen, compliance checking is a form of duty checking, where it checks for a duty to the contrary of what the agent considers doing. 
Since we already discussed successful proof-search and counter-model extraction in the context of duty checking, our discussion above suffices, and we omit demonstration of failed proof-search.

\subsection{Joint Fulfillment Checking} 
\renewcommand{\clean}{\noon}
\renewcommand{\cl}{\clean}
\renewcommand{\po}{\post}

Last, 
we are interested in checking whether a consistent norm base together with a consistent factual context preserve consistency when put together. If the resulting set of formulae is consistent this tells us that given the context at hand the agents can jointly satisfy all of their obligations. If the resulting set is inconsistent this means that the factual context does not allow for jointly fulfilling all duties. We observe that, when taken together with a specified factual context, a norm base may imply obligations that are not implied when considering the norm base in isolation. Hence, in contrast to the first two reasoning tasks,  
this particular task is about logical consistency.

The following example illustrates this third reasoning task. Suppose that Yara promised her friend Lisa to pick up a package. That is, Yara has the obligation $\otimes_y\post$, where $\post$ is the proposition ``the package is picked up.'' Let us assume that Yara knows the post-office closes at noon and realizes that if she goes to pick up the package, she will not be able to return the borrowed hammer to Kai at noon, i.e.\ $\Box([y]\post\rightarrow \lnot [y]\clean)$. We may formalize the 
situation as follows:
$$
\{\Diam[y]\clean, \Diam[y]\lnot\clean, \Diam[y]\post, \Diam[y]\lnot\post, \Box([y]\clean \rightarrow \lnot[y]\post), \otimes_y\clean, \otimes_y\post\}
$$
The first four formulae describe Yara's available choices, namely, Yara has the choice to either return or not return the hammer to Kai before noon, and the choice to pick up or not pick up the package for Lisa. The fifth formula expresses that the choices available to Yara are mutually exclusive. The last two formulae encode the obligations applicable to Yara. 
Are Yara's obligations consistent with the facts of the situation? 

We can utilize our proof-search algorithm to derive $(\otimes_y\clean \land\otimes_y\post\land \Box([y]\post\rightarrow\lnot[y]\clean))\rightarrow \bot$ in $\mathsf{G3DS}^{0}_{1}$, where our set of agents is $\Ag =\{y\}$ and $k = 0$. 
By applying the hp-admissibility of $\wk$ to the derivation given below, 
we can indeed show that the above norm base is inconsistent, that is, the following formula is a theorem in $\mathsf{DS}_{1}^{0}$:
$$
(\Diam[y]\clean\land \Diam[y]\lnot\clean \land \Diam[y]\post \land \Diam[y]\lnot\post \land \Box([y]\clean \rightarrow \lnot[y]\post)\land \otimes_y\clean\land \otimes_y\post) \rightarrow \bot
$$
Due to the size of the proof of $(\otimes_y\clean \land\otimes_y\post\land \Box([y]\post\rightarrow\lnot[y]\clean))\rightarrow \bot$, we split the proof into various sections and present each below. We denote the first proof shown below as 
$\Pi_{0}$. The sub-derivation $\Pi'$ constituting the right branch of $\Pi_0$ is obtained in a similar way to the left branch.

\bigskip

\hspace{-1cm}{\scalebox{0.88}{
\AxiomC{}
\RL{$\id$}
\UIC{$I_{\otimes_{y}} v, I_{\otimes_{y}} z,  \ldots \Rightarrow \ldots, v : [y] \po, z : \po, z : \negnnf{\po}$}
\RightLabel{$(\ominus_{y})$}
\UIC{$I_{\otimes_{y}} v, I_{\otimes_{y}} z,  \ldots \Rightarrow \ldots, v : [y] \po, z : \po$}
\RightLabel{$(\mathsf{D3}_{y})$}
\UIC{$I_{\otimes_{y}} v, R_{[y]}vv, R_{[y]}vz, R_{[y]}zv, R_{[y]}zz , \ldots \Rightarrow \ldots, v : [y] \po, z : \po$}
\RightLabel{$(\mathsf{Ref}_{y}),(\mathsf{Euc}_{y})$}
\UIC{$I_{\otimes_{y}} v, R_{[y]}vv, R_{[y]}vz, \ldots \Rightarrow \ldots, v : [y] \po, z : \po$}
\RightLabel{$([y])$}
\UIC{$I_{\otimes_{y}} v, R_{[y]}vv, R_{[y]}vz, \ldots \Rightarrow \ldots, v : [y] \po$}

\AxiomC{$\Pi'$}
\UIC{$I_{\otimes_{y}} v, R_{[y]}vv, \ldots \Rightarrow \ldots, v : [y] \cl$}

\insertBetweenHyps{\hskip -8pt}
\RL{$\conr$}
\BIC{$I_{\otimes_{y}} v, R_{[y]}vv, \ldots \Rightarrow \ldots, v : [y] \po \land [y] \cl$}
\RL{$\settdiar$}
\UIC{$I_{\otimes_{y}} v, R_{[y]}vv, \ldots \Rightarrow \ldots, w: [y]\po, u : \po$}
\RL{$(\mathsf{Ref}_{y})$}
\UIC{$I_{\otimes_{y}} v, \ldots \Rightarrow \ldots, w: [y]\po, u : \po$}
\RL{$(\mathsf{D2}_{y})$}
\UIC{$R_{[y]}ww, R_{[y]}wu, R_{[y]}uw, R_{[y]}uu, I_{\otimes_{y}} v \Rightarrow \ldots, w: [y]\po, u : \po$}
\RL{$(\mathsf{Ref}_{y}), (\mathsf{Euc}_{y})$}
\UIC{$R_{[y]}ww, R_{[y]}wu \Rightarrow \ldots, w: [y]\po, u : \po$}
\RL{$([y])$}
\UIC{$R_{[y]}ww \Rightarrow \ldots, w: [y]\po$}
\DisplayProof
}}
\bigskip

 The derivation of $R_{[y]}ww \Rightarrow \ldots, w: [y]\cl$, referred to as $\Pi_{1}$, is similar to the derivation $\Pi_{0}$ applying the same rules bottom-up in the same order. By making use of $\Pi_{0}$ and $\Pi_{1}$, we can complete our proof as follows:

\begin{center}
\scalebox{0.9}{
\AXC{$\Pi_{0}$}
\AXC{$\Pi_{1}$}
\RL{$\conr$}

\insertBetweenHyps{\hskip 3cm}
\BIC{$R_{[y]}ww \Rightarrow w: \Ody\negnnf{\cl}, w: \Ody\negnnf{\po}, w: \Diam ([y]\po \land [y]\cl), w: [y]\po \land [y]\cl, w: \bot$}\RL{$\diar$}
\UIC{$R_{[y]}ww \Rightarrow w: \Ody\negnnf{\cl}, w: \Ody\negnnf{\po}, w: \Diam ([y]\po \land [y]\cl), w: \bot$}\RL{$\disr \times 3$}
\UIC{$R_{[y]}ww \Rightarrow w: \Ody\negnnf{\cl} \lor \Ody\negnnf{\po}\lor \Diam ([y]\po \land [y]\cl)\lor \bot$}\RL{$(\mathsf{Ref}_{y})$}
\UIC{$\Rightarrow w: \Ody\negnnf{\cl} \lor \Ody\negnnf{\po}\lor \Diam ([y]\po \land [y]\cl)\lor \bot$}\RL{=}
\dashedLine
\UIC{$\Rightarrow w: (\otimes_y\cl \land\otimes_y\po\land \Box([y]\po\rightarrow\lnot[y]\cl)\rightarrow \bot$}
\DisplayProof
}
\end{center}

The above proof illustrates how our 
proof-search algorithm may be used to check the consistency of knowledge bases including obligations, 
possibly relative to a factual context. 
\\
\\
As a final reflection, we observe that, from an abstract point of view, these specific reasoning tasks can be reduced to variations of validity checking. That is, they all use proof-search with automated counter-model extraction to check the validity of a specified formula.  Nevertheless, these three different applications conceptually highlight different reasoning tasks with varying deontic value. 
If the agent’s considered behavior is not compliant we know that there is a conflicting obligation implied by the norm base. If the agent’s considered behavior is compliant this tells us that the action is permissible, but it does not tell us anything about the deontic value of the action (the action may be merely compatible with the agent’s obligatory actions). The third application is also different, checking whether the obligations of a norm base can be consistently fulfilled in a specified factual context irrespective of any considered actions.

\section{Conclusion}\label{Sect:Conclusion}

In this paper, we 
demonstrated how to automate normative reasoning for deontic $\sti$ logics by means of proof-search algorithms founded upon proof systems developed in~\cite{Lyo21thesis} and inspired by the proof systems for related logics introduced in~\cite{BerLyo19a,BerLyo21,LyoBer19}. In doing so, this work is the first to provide terminating proof-search procedures for deontic as well as non-deontic multi-agent $\sti$ logics with (un)limited choice. 
The key to obtaining terminating proof-search was the introduction of a loop-checking mechanism, which 
ensured the introduction of only finitely many labels during proof-search. As argued for in \sect~\ref{Sect:Proof-search}, 
such a loop-checking mechanism is needed to effectively handle the general case of $\sti$ logics with unlimited and limited choice axioms. 

The present work provides a promising outlook for future research. For instance, the proof theory of (and automated reasoning with) extensions of deontic $\sti$ logics remains to be investigated: e.g. temporal~\cite{BerLyo19b,CiuLor17} and epistemic extensions~\cite{Bro11}. In particular, temporal extensions of deontic $\sti$ logics are of interest in the context of verification of obligations for autonomous vehicles \cite{SheAbb21}. Decidability remains an open question for various temporal extensions of deontic $\sti$ \cite{CiuLor17}. Additionally, in relation to normative reasoning tasks, an interesting 
direction would be 
to extend deontic $\sti$ logic with conditional obligations \cite[Ch. 5]{Hor01}. Such obligations are often used to consistently model challenging contrary-to-duty scenarios which are scenarios in which a violation---i.e.\ $[i]\phi\land\Oi\lnot\phi$---has occurred and the agent needs to find out her obligations \textit{given} the sub-ideality of the situation \cite{HilMcN13}.

From a technical point of view, we believe that an important step in automated agential normative reasoning can be taken by writing and evaluating theorem provers based on our proof-search algorithms for the logics in this paper. Due to the modularity of our approach such results may be adapted to extensions of the deontic $\sti$ logics discussed here. On a related note, derivations generated by our proof-search algorithm are not necessarily minimal and shorter proofs may exist. 
Optimization procedures may serve as a fruitful future research direction.

In the context of explainable normative reasoning, we point out that this work provides only a promising first step in showing how logical methods can be harnessed in the context of $\sti$ and Explainable AI. 
As argued, constructive proof-search procedures that yield counter-models in case of failed proof-search have the advantage of providing a justification of a formula's (non-)theoremhood and make the reasons for the conclusion interpretable by providing a step-by-step reasoning procedure. Such explanations take experts instead of laypersons as the intended explainee. However, certain refinements can be made to increase the perspicuity of our proofs and counter-models. For instance, relevance conditions can be imposed on our proof calculi; when we derive the obligation $\otimes_y\foot$ from the knowledge base $\{\otimes_y\noon,\Box(\foot\rightarrow\noon),\Box(\noon\rightarrow\foot)\}$ the premise $\Box(\foot\rightarrow\noon)$ is not strictly needed in the derivation and thus can be removed from the proof as it is irrelevant in explaining the derived duty.

 \section*{Acknowledgments}
 This research was supported by the European Research Council (ERC) Consolidator
Grant 771779 
(DeciGUT) and the Austrian Science Fund (FWF) project W1255-N23.

\bibliography{bib}
\bibliographystyle{theapa}

\end{document}